%
\RequirePackage[l2tabu,orthodox]{nag}
\documentclass
[letterpaper,11pt,]
{article}


\usepackage{etex}
\usepackage{verbatim}
\usepackage{xspace,enumerate}
\usepackage[dvipsnames]{xcolor}
\usepackage[T1]{fontenc}
\usepackage[full]{textcomp}
\usepackage[american]{babel}
\usepackage{mathtools}
\usepackage{amsthm}
\usepackage[
letterpaper,
top=1in,
bottom=1in,
left=1in,
right=1in]{geometry}
\usepackage{newpxtext} 
\usepackage{textcomp} 
\usepackage[varg,bigdelims]{newpxmath}
\usepackage[scr=rsfso]{mathalfa}
\usepackage{bm} 
\linespread{1.1}
\let\mathbb\varmathbb
\usepackage{microtype}
\usepackage[pagebackref,colorlinks=true,urlcolor=blue,linkcolor=blue,citecolor=OliveGreen]{hyperref}
\usepackage[capitalise,nameinlink]{cleveref}
\crefname{lemma}{Lemma}{Lemmas}
\crefname{fact}{Fact}{Facts}
\crefname{theorem}{Theorem}{Theorems}
\crefname{corollary}{Corollary}{Corollaries}
\crefname{claim}{Claim}{Claims}
\crefname{example}{Example}{Examples}
\crefname{algorithm}{Algorithm}{Algorithms}
\crefname{problem}{Problem}{Problems}
\crefname{definition}{Definition}{Definitions}
\crefname{exercise}{Exercise}{Exercises}
\usepackage{amsthm}

\newtheorem{theorem}{Theorem}[section]
\newtheorem*{theorem*}{Theorem}
\newtheorem{lemma}[theorem]{Lemma}
\newtheorem*{lemma*}{Lemma}
\newtheorem{fact}[theorem]{Fact}
\newtheorem*{fact*}{Fact}

\newtheorem*{proposition*}{Proposition}
\newtheorem{corollary}[theorem]{Corollary}
\newtheorem*{corollary*}{Corollary}

\newtheorem*{hypothesis*}{Hypothesis}

\newtheorem*{conjecture*}{Conjecture}
\theoremstyle{definition}
\newtheorem{definition}[theorem]{Definition}
\newtheorem*{definition*}{Definition}

\newtheorem*{construction*}{Construction}

\newtheorem*{example*}{Example}

\newtheorem*{question*}{Question}

\newtheorem*{algorithm*}{Algorithm}

\newtheorem*{assumption*}{Assumption}

\newtheorem*{problem*}{Problem}

\newtheorem*{openquestion*}{Open Question}
\theoremstyle{remark}

\newtheorem*{claim*}{Claim}

\newtheorem*{remark*}{Remark}
\newtheorem{observation}[theorem]{Observation}
\newtheorem*{observation*}{Observation}
\usepackage{paralist}
\frenchspacing
\let\originalleft\left
\let\originalright\right
\renewcommand{\left}{\mathopen{}\mathclose\bgroup\originalleft}
\renewcommand{\right}{\aftergroup\egroup\originalright}
\usepackage{turnstile}
\usepackage{mdframed}
\usepackage{tikz}
\usepackage{caption}
\DeclareCaptionType{Algorithm}
\usepackage{newfloat}
\usepackage{xparse}
\usepackage{amsthm} 
\makeatletter
\let\latexparagraph\paragraph
\RenewDocumentCommand{\paragraph}{som}{%
  \IfBooleanTF{#1}
    {\latexparagraph*{#3}}
    {\IfNoValueTF{#2}
       {\latexparagraph{\maybe@addperiod{#3}}}
       {\latexparagraph[#2]{\maybe@addperiod{#3}}}%
  }%
}
\newcommand{\maybe@addperiod}[1]{%
  #1\@addpunct{.}%
}
\makeatother


\newcommand{\Authornote}[2]{}
\newcommand{\Authornotecolored}[3]{}
\newcommand{\Authorcomment}[2]{}
\newcommand{\Authorfnote}[2]{}

\usepackage{boxedminipage}
\newcommand{\paren}[1]{(#1)}
\newcommand{\Paren}[1]{\left(#1\right)}

\newcommand{\brac}[1]{[#1]}
\newcommand{\Brac}[1]{\left[#1\right]}

\newcommand{\abs}[1]{\lvert#1\rvert}

\newcommand{\card}[1]{\lvert#1\rvert}

\newcommand{\set}[1]{\{#1\}}
\newcommand{\Set}[1]{\left\{#1\right\}}

\newcommand{\norm}[1]{\lVert#1\rVert}
\newcommand{\Norm}[1]{\left\lVert#1\right\rVert}

\newcommand{\normt}[1]{\norm{#1}_2}
\newcommand{\Normt}[1]{\Norm{#1}_2}

\newcommand{\snormt}[1]{\norm{#1}^2_2}
\newcommand{\Snormt}[1]{\Norm{#1}^2_2}


\newcommand{\Snorm}[1]{\Norm{#1}^2}



\newcommand{\iprod}[1]{\langle#1\rangle}
\newcommand{\Iprod}[1]{\left\langle#1\right\rangle}

\newcommand{\Esymb}{\mathbb{E}}
\newcommand{\Psymb}{\mathbb{P}}


\newcommand{\suchthat}{\;\middle\vert\;}



\newcommand{\sle}{\preceq}
















\newcommand\bdot\bullet










\newcommand{\N}{\mathbb N}
\newcommand{\R}{\mathbb R}



\newcommand{\cA}{\mathcal A}
\newcommand{\cB}{\mathcal B}

\newcommand{\cD}{\mathcal D}

\newcommand{\cN}{\mathcal N}
\newcommand{\cO}{\mathcal O}

\renewcommand{\leq}{\leqslant}

\renewcommand{\geq}{\geqslant}

\let\epsilon=\varepsilon
\numberwithin{equation}{section}
\newcommand\MYcurrentlabel{xxx}
\newcommand{\MYstore}[2]{%
  \global\expandafter \def \csname MYMEMORY #1 \endcsname{#2}%
}
\newcommand{\MYload}[1]{%
  \csname MYMEMORY #1 \endcsname%
}
\newcommand{\MYnewlabel}[1]{%
  \renewcommand\MYcurrentlabel{#1}%
  \MYoldlabel{#1}%
}
\newcommand{\MYdummylabel}[1]{}
\newcommand{\torestate}[1]{%
  \let\MYoldlabel\label%
  \let\label\MYnewlabel%
  #1%
  \MYstore{\MYcurrentlabel}{#1}%
  \let\label\MYoldlabel%
}
\newcommand{\restatetheorem}[1]{%
  \let\MYoldlabel\label
  \let\label\MYdummylabel
  \begin{theorem*}[Restatement of \cref{#1}]
    \MYload{#1}
  \end{theorem*}
  \let\label\MYoldlabel
}
\newcommand{\restatelemma}[1]{%
  \let\MYoldlabel\label
  \let\label\MYdummylabel
  \begin{lemma*}[Restatement of \cref{#1}]
    \MYload{#1}
  \end{lemma*}
  \let\label\MYoldlabel
}
\newcommand{\restateprop}[1]{%
  \let\MYoldlabel\label
  \let\label\MYdummylabel
  \begin{proposition*}[Restatement of \cref{#1}]
    \MYload{#1}
  \end{proposition*}
  \let\label\MYoldlabel
}
\newcommand{\restatefact}[1]{%
  \let\MYoldlabel\label
  \let\label\MYdummylabel
  \begin{fact*}[Restatement of \cref{#1}]
    \MYload{#1}
  \end{fact*}
  \let\label\MYoldlabel
}
\newcommand{\restate}[1]{%
  \let\MYoldlabel\label
  \let\label\MYdummylabel
  \MYload{#1}
  \let\label\MYoldlabel
}

\newcommand{\sse}{\subseteq}

\newcommand{\e}{\epsilon}
\newcommand{\eps}{\epsilon}

\allowdisplaybreaks
\sloppy
\DeclarePairedDelimiterX{\infdivx}[2]{(}{)}{#1\;\delimsize\|\;#2}
\newcommand{\kldiv}{D_{KL}\infdivx}

\DeclareMathOperator{\pE}{\tilde{\mathbb{E}}}



\title{
  SoS Degree Reduction with Applications to Clustering and Robust Moment Estimation\thanks{This project has received funding from the European Research Council (ERC) under the European Union’s Horizon 2020 research and innovation programme (grant agreement No 815464)
}
}


\author{
  David Steurer\thanks{ETH Z\"urich.}
  \and
  Stefan Tiegel\footnotemark[2]
}

\begin{document}

\pagestyle{empty}


\maketitle
\thispagestyle{empty} 


\begin{abstract}

  We develop a general framework to significantly reduce the degree of sum-of-squares proofs by introducing new variables.
To illustrate the power of this framework, we use it to speed up previous algorithms based on sum-of-squares for two important estimation problems, clustering and robust moment estimation.
The resulting algorithms offer the same statistical guarantees as the previous best algorithms but have significantly faster running times.
Roughly speaking, given a sample of $n$ points in dimension $d$, our algorithms can exploit order-$\ell$ moments in time $d^{O(\ell)}\cdot n^{O(1)}$, whereas a naive implementation requires time $(d\cdot n)^{O(\ell)}$.
Since for the aforementioned applications, the typical sample size is $d^{\Theta(\ell)}$, our framework improves running times from $d^{O(\ell^2)}$ to $d^{O(\ell)}$.

\end{abstract}

\clearpage


\microtypesetup{protrusion=false}
\tableofcontents{}
\microtypesetup{protrusion=true}

\clearpage

\pagestyle{plain}
\setcounter{page}{1}


\section{Introduction}
\label{sec:introduction}

The \emph{Sum-of-Squares} hierarchy is a hierarchy of semidefinite programs which has proven to be a powerful tool in the theory of approximation algorithms~\cite{GW_Max_Cut, ARV_sparsest_cut}.
More recently it has also given rise to a flurry of algorithms for estimation problems such as various tensor~\cite{SoS_for_dictionary_learning_and_tensor_decomposition, SoS_tensor_1, SoS_tensor_2, SoS_tensor_3}, clustering~\cite{Hopkins_Li_clustering}, and robust estimation problems~\cite{Kothari_Steinhardt_Steurer_robust_moment_estimation, list_decodable_regression, Kothari_robust_regression}, often yielding significant improvements over existing algorithms and in some cases even the first efficient ones.
The hierarchy is based on the \emph{sum-of-squares proof system} which on a high-level allows to argue about non-negativity of polynomials by manipulating a set of polynomial inequalities.
Most importantly, it can be algorithmically exploited in the sense that certain proofs in this proof system directly certify approximation guarantees of algorithms based on the hierarchy.
The running time of these algorithms depends mainly on the number of variables involved and the maximum degree of the polynomials occurring in the inequalities mentioned above.
In general using a higher degree often leads to more accurate solutions but also requires more time.
In this work, we show how we can significantly reduce the degree of a wide range of sum-of-squares proofs in an almost black-box manner while still certifying similar guarantees and thus giving a direct speed-up for concrete algorithms.
As two examples we will consider estimation algorithms for \emph{clustering} and \emph{outlier-robust moment estimation}.
We hope that this technique can inform future algorithms based on the sum-of-squares hierarchy.

\paragraph{Sum-of-squares and estimation problems}

The main idea behind the works listed above is to turn so-called \emph{identifiability proofs} into efficient algorithms.
In (parametric) estimation problems we are typically given samples $y_1, \ldots, y_n$ from a distribution $\cD_{\theta^*}$ parameterized by some $\theta^*$.
Our goal is to recover an estimate $\hat{\theta}$ that is close to the real $\theta^*$, having only access to $y_1, \ldots, y_n$.
An identifiability proof is then a proof of the fact that this is information-theoretically possible, i.e., that there cannot be two different $\theta$ that lie far apart but generate similar samples with some substantial probability.
As long as this identifiability proof can be expressed by a low-degree sum-of-squares proof, there also exist an efficient algorithm for computing $\hat{\theta}$ close to $\theta^*$ (see the algorithmic meta-theorem in the survey \cite{sos_estimation_survey}).
More precisely, the algorithm runs in time $N^{O(\ell)}$, where $N$ is the number of variables of the sum-of-squares proof and $\ell$ is its degree.
The main contribution of this work is a more refined running time analysis that exploits additional structure in the sum-of-squares identifiability proof.
For several applications, namely clustering and robust moment estimation, we show that these structured sum-of-squares proofs exist and, in this way, we derive more efficient algorithms.

\paragraph{Clustering}

In the clustering problem we consider, we are given samples $y_1, \ldots, y_n$ from a uniform mixture of $k$ Gaussian Distributions $\cN(\mu^*_1, I_d), \ldots, \cN(\mu^*_k, I_d)$ and try to recover $\mu^*_1, \ldots, \mu^*_k$.
\footnote{In fact, our algorithms work even if the components are not Gaussian but only satisfy a particular kind of low-order moment bound, which include for example also products of sub-gaussian distributions and rotations thereof. We give a rigorous definition later.}
Further, we denote by $k^*_{r}$ the component $y_r$ was generated from for each $r \in  \brac{n}$ and define the $j$-th \emph{cluster} as $S_j^* = \Set{r \suchthat k^*_r = j}$ for $j = 1, \ldots, k$.
We also strive to find (up to a permutation) accurate estimates of these clusters.
To make this problem information-theoretically possible to solve, we assume a lower bound on the  \emph{minimum separation} of the means, defined as $\Delta \coloneqq \min_{i, j \in \brac{n}: i \neq j} \normt{\mu^*_i - \mu^*_j}$.
A natural question in this context is what the smallest minimum separation is such that we can still hope to (efficiently) find a good approximation.
For more than fifteen years this barrier lay at $\Omega (k^{1/4})$, below nothing was known.
Above this, an algorithm by Vempala and Wang~\cite{single_linkage_clustering} projecting all samples in a suitable subspace and then performing a simple distance based clustering solved the problem efficiently.
In~\cite{Regev_lower_bound} Regev and Vijayaraghavan showed that it is information-theoretically impossible to recover the means and clusters in sub-exponential time when $\Delta = o(\sqrt{\log k})$.
Recently, new polynomial time algorithms emerge when $\Delta = \cO(k^\e)$ for any fixed constant $\e>0$~\cite{Hopkins_Li_clustering,Diakonikolas_robust_estimation_and_clustering,Kothari_Steinhardt_Steurer_robust_moment_estimation}.
(Furthermore, these works also give quasi-polynomial time algorithms for the regime just above the threshold determined by Regev et al., namely for $\Delta = \cO(\sqrt{\log k})$.)

\paragraph{Robust-moment estimation}

In this problem, we are given samples $x_1, \ldots, x_n$ from some distribution $\cD$ over $\R^d$ and wish to find estimates for its mean and low-degree moments  $\Esymb_{x \in \cD} x^{\otimes l}$ for all $0 \leq l \leq k$ fro some $k$.
However, up to an $\eps$-fraction of the samples might have been arbitrarily corrupted by an adversary - these are the outliers.
Algorithms which still succeed in this setting are referred to as \emph{outlier-robust} or short robust.
For the most part finding robust algorithms even for more basic problems such as estimating the mean and covariance matrix  remained computationally intractable in the high-dimensional setting, i.e., the dependence on the dimension $d$ was exponential.
Over the past few years the first works emerged giving the first computationally efficient (robust) estimators for these restricted problems, albeit making strong distributional assumptions (e.g., that $\cD$ is a Gaussian Distribution or a mixture thereof) or having an estimation error depending on the underlying dimension $d$~\cite{diakonikolas_robust, diakonikolas_robust_gaussian, lai2016agnostic}.
Recently, Kothari et al.~\cite{Kothari_Steinhardt_Steurer_robust_moment_estimation} gave the first algorithm robustly estimating also higher-order moments with dimension-independent error and being applicable to a much larger class of distributions.
Further, they give matching information theoretical lower bounds for the class of distributions considered.

We also remark on some similar problems:
A more recent line of work considers the problem of regression in the presence of outliers~\cite{charikar2017learning, Kothari_robust_regression}.
Further, a related but conceptually different setting is the one of so-called \emph{list-decodable learning}.
Here, we consider also the case where $\eps$ is large, i.e., in particular greater than $1/2$ and we search to give a list of possible candidates, say mean vectors or covariances matrices, that is as small as possible and with the guarantee, that at least one of the entries is close to the true parameters of $\cD$.
For recent developments in this area see~\cite{Diakonikolas_robust_estimation_and_clustering, charikar2017learning, list_decodable_1, list_decodable_regression}.

Turning back to the problem of robustly estimating low-degree moments, we note the following:
Without further restrictions on $\cD$ the situation might be hopeless, Just consider the case where $\cD$ outputs a constant with probability $1 - \e$ and another non-constant random variable with probability $\e$.
It turns out that requiring that the low-degree moments of the distribution behave "subgaussian-like" is enough.
More precisely, Kothari et al.~\cite{Kothari_Steinhardt_Steurer_robust_moment_estimation} require the following:
\begin{definition}[Certifiable subgaussianity/Certifiable hypercontractivity,~\cite{Kothari_Steinhardt_Steurer_robust_moment_estimation}]
\label{def:cert_subgaussian}
A distribution $\cD$ over $\R^d$ with mean $\mu$ is called \emph{$t$-certifiably $C$-subgaussian} or certifiably $(C,t)$-hypercontractive for $C > 0$ if for every positive integer $t' \leq t/2$ there exists a \emph{degree-$t$ sum-of-squares proof}~\footnote{We will formally define sum-of-squares proofs later. In this case we require that we can write $\Paren{C t' \cdot \Esymb_{\cD} \iprod{x-\mu, u}^2}^{t'} - \Esymb_{\cD} \iprod{x-\mu, u}^{2t'} = r_1(u)^2 + \ldots + r_k(u)^2$ for some $k$ and $r_1, \ldots, r_k$ polynomials in $u$ of degree at most $t/2$.} of the degree-$2t'$ polynomial inequality:
\begin{align*}
\forall u \in \R^d: \Esymb_{\cD} \iprod{x-\mu, u}^{2t'} \leq \Paren{C t' \cdot \Esymb_{\cD} \iprod{x-\mu, u}^2}^{t'}
\end{align*}
\end{definition}
Beware, that for us it is enough to constrain the low-degree moments of the distribution whereas normally subgaussianity constraints all of the moments of a distribution.
Hence, this definition is much less restrictive.
Kothari et al. show that a large class of distributions satisfy the above definition.
Namely, all distributions satisfying the Poincar\'e-Inequality, which not only includes products of sub-gaussian distributions but also strongly log-concave distributions and Lipschitz continuous transformations of such.

\paragraph{Previous work on speeding up algorithms based on sum-of-squares}

We remark that there exist previous work which gives speed-ups for existing algorithms based on sum-of-squares, albeit quite different from ours.
First, the work of Guruswami and Sinop~\cite{SoS_speed_up_FOCS_2012} exploits the fact that for some problems the rounding step of sum-of-squares algorithms only relies on a small part of the solution of the underlying semidefinite program and not all of it.
Hence, they propose an alternative algorithm computing only the part needed.
Roughly speaking, for the rounding algorithms they consider they are able to match the guarantees of the $r$-th level of the sum-of-squares hierarchy but require only time $n^{\cO(1)}2^{\cO(r)}$ whereas a naive implementation requires time $n^{\Omega(r)}$.
We remark that our techniques differ substantially from theirs in the sense that ours are independent of any rounding scheme.
Further, we do not need to adapt the sum-of-squares algorithm but can continue to use its vanilla version.

Second, and more on the estimation algorithms side, there has been a line of work extracting fast spectral algorithms from sum-of-squares proofs~\cite{fast_SoS_planted, fast_SoS_tensor, fast_SoS_tensor_robust}.
These algorithms exploit the same high-degree information as sum-of-squares relaxations without computing a large semidefinite program.
This is usually done by carefully analyzing the behavior of already existing sum-of-squares algorithms and then tailoring a spectral algorithm towards this specific instance.
Although for the problems at hand they often give significant speed-ups which are also of practical relevance they are problem-specific.

\subsection{Results}

In the following, we present our main results.
Since we need to give a more rigorous introduction to the SoS proof system before being able to give a precise version our meta-theorem we only state an informal one at the end of the next section and defer the full version to Section~\ref{sec:linearization}.
However, we can already state our improved guarantees for the Clustering and Robust-moment Estimation Problem.

For the former, we achieve the following:
\begin{theorem}
\label{thm:clustering_final}
For every even $t$ there exists an algorithm that given $n = k^{\cO(1)}d^{\cO(t)}$ samples from a $d$ dimensional uniform Gaussian Mixture distribution with $k$ components, means $\mu^*_1, \ldots, \mu^*_k$, covariances $\Sigma^*_1 = \ldots = \Sigma^*_k = I_d$, and minimum separation $\Delta \geq \cO(\sqrt{k^{1/t}t})$ finds a clustering $\hat{S}_1, \ldots, \hat{S}_k$ of the points and means $\hat{\mu}_1, \ldots, \hat{\mu}_k$ in time $n^{\cO(1)}$ such that with probability $1 - \frac{1}{\mathrm{poly}(k)}$ for all $i \in [k]$ we have that $\hat{S}_i$ contains at least a $1- \frac{1}{\mathrm{poly}(k)}$ fraction of the points in $S_i^*$ and that $\normt{\mu^*_i - \hat{\mu}_i} \leq \frac{1}{\mathrm{poly}(k)}$.
\end{theorem}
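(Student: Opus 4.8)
The plan is to feed the sum-of-squares identifiability proof for Gaussian-mixture clustering of Hopkins--Li~\cite{Hopkins_Li_clustering} and Kothari--Steinhardt--Steurer~\cite{Kothari_Steinhardt_Steurer_robust_moment_estimation} into the degree-reduction meta-theorem of \cref{sec:linearization}. Recall that, in order to recover one cluster, these works solve a sum-of-squares relaxation over a membership vector $w\in\set{0,1}^n$ with $\sum_i w_i = n/k$ together with a candidate mean $\mu\in\R^d$, subject to $\mu = \tfrac kn\sum_i w_i y_i$, a covariance upper bound $\tfrac kn\sum_i w_i\dyad{(y_i-\mu)}\sle \cO(1)\cdot\Id$, and the certifiable-subgaussianity moment inequalities $\tfrac kn\sum_i w_i\iprod{y_i-\mu,v}^{2s}\le\Paren{Cs\cdot\tfrac kn\sum_i w_i\iprod{y_i-\mu,v}^2}^{s}$, viewed as polynomial inequalities in a formal direction $v\in\R^d$, for every $s\le t/2$. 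Their identifiability argument shows that when $\Delta\ge\cO(\sqrt{k^{1/t}t})$ every feasible pseudo-distribution satisfies $\pE\snormt{\mu-\mu^*_{j}}\le 1/\poly(k)$ for the true component $j$ of largest overlap, and that a simple rounding of the degree-$\cO(1)$ pseudo-moments of $(w,\mu)$ recovers $S^*_j$ up to a $1/\poly(k)$ fraction, the whole argument being a degree-$\cO(t)$ sum-of-squares proof.

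First I would verify that this proof meets the hypothesis of the meta-theorem: all of its degree-$\omega(1)$ content should be carried by the $\cO(d)$-dimensional block $(\mu,v)$, with the $n$-dimensional block $w$ entering only at degree $\cO(1)$. This is almost automatic once $\mu$ is carried as an explicit variable rather than substituted by $\tfrac kn\sum_i w_i y_i$: the moment inequalities are linear in $w$ (only the leading factor $w_i$ depends on it), the constraint $\mu=\tfrac kn\sum_i w_i y_i$ and the covariance bound are degree $\le 2$ in each block, and---crucially---the covariance upper bound lets one replace the right-hand side $\Paren{Cs\cdot\tfrac kn\sum_i w_i\iprod{y_i-\mu,v}^2}^{s}$ by $\Paren{\cO(s)\cdot\snormt{v}}^{s}$, which contains no $w$ at all. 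Tracing the identifiability argument in this formulation should then show that every intermediate polynomial is $\cO(1)$ in $w$ and $\cO(t)$ in $(\mu,v)$. Applying the meta-theorem with $(\mu,v)$ as the heavy block then yields, after adjoining one fresh variable for each monomial of degree $\le\cO(t)$ in these $\cO(d)$ variables---that is, $d^{\cO(t)}$ new variables linked by the standard multiplicative consistency constraints---an equivalent sum-of-squares proof of \emph{constant} degree certifying the same conclusion.

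The algorithm is then the constant-degree sum-of-squares relaxation over this enlarged variable set, which has $N'=n+d^{\cO(t)}$ variables; since $n=k^{\cO(1)}d^{\cO(t)}$ we have $d^{\cO(t)}\le n$, so the relaxation can be solved and rounded in time $(N')^{\cO(1)}=n^{\cO(1)}$. Feasibility is witnessed by the planted solution, with each fresh variable set to the corresponding monomial of the planted mean. Correctness of the rounding, the bound $\normt{\mu^*_i-\hat\mu_i}\le 1/\poly(k)$, the claimed overlap of $\hat S_i$ with $S^*_i$, and the $1/\poly(k)$ failure probability are all inherited from~\cite{Hopkins_Li_clustering,Kothari_Steinhardt_Steurer_robust_moment_estimation}, since degree reduction preserves the certified guarantees and only affects the low-degree pseudo-moments that the rounding uses. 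All $k$ clusters are recovered either by peeling off one recovered cluster at a time on the successively reduced instance (which stays clusterable with the same separation up to a negligible loss, by robustness of the certifiable-subgaussianity framework) or by a single relaxation that treats the $k$ candidate means as $k$ separate heavy blocks; either way the cost grows by at most $\poly(k)$.

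I expect the main obstacle to be the verification in the second step: going line by line through the Hopkins--Li / Kothari--Steinhardt--Steurer identifiability argument, re-deriving it in the formulation that keeps $\mu$ (and, where needed, the empirical covariance) as explicit variables, and confirming that no step silently re-expands these into high-degree polynomials in $w$---together with checking that the rounding truly relies only on the low-degree pseudo-moments of $(w,\mu)$ that the degree-reduced relaxation still provides. A minor further point is re-certifying the auxiliary covariance upper bound inside this reduced system.
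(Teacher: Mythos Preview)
Your high-level plan---apply the degree-reduction meta-theorem to a clustering identifiability proof whose high-degree content lives only in a $d^{\cO(t)}$-sized block---is exactly the paper's strategy, but the paper does \emph{not} black-box the Hopkins--Li / Kothari--Steinhardt--Steurer proof. Instead it writes its own, structurally simpler, sum-of-squares identifiability argument (the authors explicitly note it ``doesn't require any kind of entropy maximization like \cite{Hopkins_Li_clustering}''). Concretely, the paper uses a single relaxation with assignment variables $z_r^j$ and $k$ mean variables $\mu_1,\dots,\mu_k$ and proves directly that $\cA\sststile{\cO(t)}{}\frac{1}{|S|}\sum_{r\in S}\snormt{\mu^*_{k^*_r}-\mu_{k_r}}\le 8k^{(4+2c)/t}t$ for every large $S$; this is a statement of the exact shape required by the meta-theorem (\cref{thm:improved_sos_estimation}), and its proof is engineered so that the only high-degree monomials are $\normt{v_r}^t$ and $\iprod{v_r,z_r^{i,j}v_s}^t$, giving $n_{\max}=\poly(n)$ by inspection. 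The resulting per-sample mean estimates are then fed into a trivial radius-$\Delta/4$ linkage procedure to obtain the clustering, and finally robust mean estimation (\cref{thm:moment_estimation}) is run on each cluster.

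This matters for your proposal because the step you flag as the main obstacle is genuinely problematic: the Hopkins--Li argument does not take the form $\cA\sststile{t}{}\snormt{\mu-\mu^*_j}\le\eps$ for a fixed $j$. Rather, it first selects a pseudo-distribution via an entropy/conditioning argument and only then argues concentration on one cluster; that selection step is not an SoS derivation from $\cA$ and so does not fit the hypothesis of the meta-theorem as stated. Your ``peel off one cluster at a time'' fallback would have to re-certify subgaussianity of the residual sample after removal, which is neither automatic nor addressed. The paper sidesteps all of this by proving an aggregate error bound over \emph{all} samples (no cluster needs to be singled out inside the SoS proof), which is what makes the meta-theorem apply cleanly and also what makes the subsequent distance-based rounding straightforward.
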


The statistical guarantees of this algorithm match the previous best results for mixtures of spherical Gaussians \cite{Hopkins_Li_clustering,Kothari_Steinhardt_Steurer_robust_moment_estimation,Diakonikolas_robust_estimation_and_clustering}.
However, we improve previous running times of sum-of-squares based algorithms from $n^{O(t)}=k^{O(t)}\cdot d^{O(t^2)}$ to $n^{O(1)}=k^{O(1)}\cdot d^{O(t)}$ \cite{Hopkins_Li_clustering,Kothari_Steinhardt_Steurer_robust_moment_estimation}.
We emphasize that the algorithm \cite{Diakonikolas_robust_estimation_and_clustering} (not explicitly based on sum of squares) actually has the same running time as ours but considers only mixtures of spherical Gaussians whereas we show in Section~\ref{sec:clustering} that our results can easily be extended to mixtures of so-called $t$-explicitly bounded distributions.
In particular, every $t$-certifiably subgaussian distribution is also $t$-explicitly bounded.


We also remark that using our estimates as input to the boosting algorithm in \cite{Regev_lower_bound} we can achieve an arbitrary error $\delta$ for each mean estimate using only $\mathrm{poly}(k, d, 1/\delta)$ additional time and samples.
However, also without this additional boost in accuracy we already achieve $\frac{1}{\mathrm{poly}(k)} \ll \Delta$ in both of the above regimes.

Finally, we remark that the sum-of-squares proof underlying our clustering result is significantly simpler than previous ones.
In particular, our key algorithmic ingredient \ref{thm:clustering_subset} doesn't require any kind of entropy maximixation like \cite{Hopkins_Li_clustering} and applies to a wide range of mixture distribution (without any minimum separation requirement).

\medskip

Regarding robust moment estimation we obtain the following results:
\begin{theorem}[Robust mean and covariance estimation under certifiable subgaussianity]
\label{thm:moment_estimation}
For every $C > 0$ and even $t \in \N$ there exists an algorithm that given an $\eps$-corrupted sample $S$ of a $t$-certifiably $C$-subgaussian distribution with mean $\mu^*$ and covariance $\Sigma^*$ such that $\card{S} = n \leq (C + d)^{\cO(t)}$ outputs in time $n^{\cO(1)}$ a mean-estimate $\hat{\mu}$ and covariance-estimate $\hat{\Sigma}$ such that with high probability it holds that \footnote{As a further technical assumption we require that $\lambda_{min}(\Sigma^*) \geq 2^{-n^{\cO(1)}}$, i.e., $n \geq \log \Paren{ \frac{1}{\lambda_{min}(\Sigma^*)} }$, where $\lambda_{min}(\Sigma^*)$ denotes the minimum eigenvalue of $\Sigma^*$}:
\begin{align*}
&\normt{\mu^* - \hat{\mu}} \leq \cO(Ct)^{1/2} \cdot \eps^{1 - 1/t} \cdot \Norm{\Sigma^*}^{1/2} \\
&\normt{(\Sigma^*)^{-1/2} (\mu^* - \hat{\mu}) } \leq  \cO(Ct)^{1/2} \cdot \eps^{1 - 1/t} \\
&(1-\delta) \Sigma^* \sle \hat{\Sigma} \sle (1+\delta) \Sigma^*
\end{align*}
for $\delta \leq  \cO(Ct) \cdot \eps^{1 - 2/t}$.
\end{theorem}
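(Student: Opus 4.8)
The plan is to apply the degree-reduction meta-theorem of \cref{sec:linearization} to the sum-of-squares identifiability proof for robust moment estimation of Kothari, Steinhardt and Steurer~\cite{Kothari_Steinhardt_Steurer_robust_moment_estimation}, after exhibiting the structure of that proof that the meta-theorem requires, and then to round in the usual way by reading off a pseudo-expectation.

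\emph{Recalling the identifiability proof.} Let $S=\{x_1,\dots,x_n\}$ be the $\eps$-corrupted sample and $T\subseteq S$, $\card T\ge(1-\eps)n$, the uncorrupted part. A separate sampling lemma---this is where the bound $n\le(C+d)^{\cO(t)}$ and the technical condition $\lambda_{\min}(\Sigma^*)\ge 2^{-n^{\cO(1)}}$ are used---shows that with high probability the uniform distribution $X$ over $T$ is $t$-certifiably $2C$-subgaussian in the sense of \cref{def:cert_subgaussian}, with $\E X$ and $\operatorname{Cov}(X)$ within the target error of $\mu^*$ and $\Sigma^*$, so it suffices to estimate $\E X$ and $\operatorname{Cov}(X)$. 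One sets up a system $\mathcal A$ in selector variables $w=(w_1,\dots,w_n)$ with axioms $w_i^2=w_i$, $\sum_i w_i=(1-\eps)n$, together with the axioms stating that the reweighted empirical distribution $X_w$ (and, for the covariance bound, the distribution of $(x-\mu_w)^{\otimes 2}$ under $X_w$) is certifiably $\cO(C)$-subgaussian---for each integer $t'\le t/2$ a family of degree-$\cO(t)$ sum-of-squares identities in a formal vector $u$ with coefficients that are polynomials in $w$. Combining these, for both $X_w$ and the external $X$, with the sum-of-squares Hölder and triangle inequalities gives the identifiability statement
\[
  \mathcal A\ \vdash_{\cO(t)}\ \iprod{\mu_w - \E X, u}^{t}\ \le\ \cO(Ct)^{t/2}\cdot\eps^{t-1}\cdot\Normt{u}^{t}\,\Norm{\Sigma^*}^{t/2}
\]
and its analogue for second moments (which, composed with the covariance axiom, yields the mean bound relative to $\Sigma^*$ and the spectral bound on $\hat\Sigma$). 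Taking $u=\mu_w-\E X$, a degree-$\cO(t)$ pseudo-expectation $\pE$ satisfying $\mathcal A$, and pseudo-expectation Cauchy--Schwarz then make $\hat\mu\coloneqq\pE[\mu_w]$ and $\hat\Sigma\coloneqq\pE[\operatorname{Cov}(X_w)]$ satisfy the claimed bounds.

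\emph{Reducing the degree.} The point is that $\mathcal A$ and the proof above touch the $n$ selector variables $w$ only through the empirical moment tensors $M_j\coloneqq\sum_{i=1}^n w_i\,x_i^{\otimes j}$ for $0\le j\le 2t$ (up to $4t$ for the covariance statement), each of which is \emph{linear} in $w$. Introducing a fresh variable $m$ with the relation $(1-\eps)n\cdot m=M_1$ and replacing $\mu_w$ by $m$ throughout, every polynomial in the axioms and in the proof---in particular $\sum_i w_i\iprod{x_i-m,u}^{2t'}$ and the sum-of-squares certificates witnessing certifiable subgaussianity of $X_w$---becomes a polynomial in $u$ and in the variables $m,M_0,\dots,M_{2t}$, together with the constant-degree intermediate tensors one obtains by writing those certificates as balanced arithmetic expressions; there are $N=d^{\cO(t)}$ such variables in total, and $w$ now appears only through the relations $M_j=\sum_i w_i x_i^{\otimes j}$ and the axioms on $w$. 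This is precisely the hypothesis of the meta-theorem of \cref{sec:linearization}: a sum-of-squares proof factoring through a low-degree polynomial map $w\mapsto(m,M_\bullet,\dots)$ into a system whose remaining content has constant degree in the image variables. Applying it produces a sum-of-squares proof of the same guarantees of degree $\cO(1)$ in the combined variable set; the associated semidefinite program has size $(n+d^{\cO(t)})^{\cO(1)}$, and, since $n\le(C+d)^{\cO(t)}$, can be solved (using the eigenvalue lower bound on $\Sigma^*$ to control bit complexity) and rounded in time $n^{\cO(1)}$.

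\emph{Main obstacle.} The substantive step is the structural claim: re-deriving the certifiable-hypercontractivity axioms and the Hölder step so that the $n$ selector variables enter only through the linear moment statistics $M_\bullet$ and constant-degree definitional relations, with all remaining intermediate quantities among the $d^{\cO(t)}$ bounded-order tensors. The latter is essentially automatic, since the KSS proof is assembled entirely from $u$, $m$ and tensor moments of order $\cO(t)$; the former needs care with the data-dependent mean $\mu_w$ and the normalization $1/\card w$, which must be cleared by fixing $\card w=(1-\eps)n$ exactly (or by carrying $\card w$ as a variable and multiplying through). The remaining pieces---the sampling lemma, the reduction of the covariance estimate to robust mean estimation of $x^{\otimes 2}$ (whence the need for order $2t$ rather than $t$), and the bit-complexity bookkeeping---are routine.
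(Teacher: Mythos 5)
Your high-level plan --- linearize the KSS identifiability proof via \cref{thm:linearization} and round a low-degree pseudo-expectation --- is exactly the paper's strategy, and your setup of $\cA$, the treatment of the universally quantified subgaussianity axioms, and the Cauchy--Schwarz rounding all match. But there is a concrete gap at the step you dismiss as ``essentially automatic'': you invoke the sum-of-squares H\"older inequality to pass from $\bigl(\E_i (1-z_i) f_i\bigr)^t$ to $\bigl(\E_i(1-z_i)\bigr)^{t-1}\E_i f_i^t$. The SoS certificate of that inequality does \emph{not} factor through the linear moment statistics $M_j=\sum_i w_i x_i^{\otimes j}$: it is built from squares of polynomials such as $\bigl(\sum_i w_i p_i(x_i)^{t/2}\bigr)^2$, whose expansion contains the $\binom{n}{\Theta(t)}$ distinct monomials $w_{i_1}\cdots w_{i_{t/2}}\cdots$ in the selector variables (the Boolean constraints $w_i^2=w_i$ do not collapse cross terms). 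In the notation of \cref{thm:linearization} this forces $n_{\mathrm{max}}=n^{\Theta(t)}$, so the linearized SDP has size $n^{\Theta(t)}$ and you recover the old running time rather than $n^{\cO(1)}$. The paper flags exactly this issue at the start of \cref{sec:moment_estimation} and avoids it by replacing the high-degree H\"older step with the termwise AM--GM inequality of \cref{toolkit:lemma_4}: each summand $(1-z_i)\cdot B_i$ is bounded by $\tfrac{1}{t}\bigl((t-1)\gamma\,(1-z_i)+\gamma^{-(t-1)}B_i^t\bigr)$ with $B_i=\iprod{u,x_i-\mu}^2$ (or its square for the covariance), so the only high-degree monomials appearing anywhere in the proof are powers of the $n$ inner products $\iprod{u,x_i-\mu}$, of which there are $n\cdot d^{\cO(t)}=\poly(n)$ under the sample-size assumption. (Only a degree-$2$ Cauchy--Schwarz survives, in \cref{lemma:covariances}, which is harmless.) Your structural intuition --- that the final inequality is a statement about $\cO(t)$-order moment tensors --- is correct, but the meta-theorem charges you for every monomial in the \emph{certificate}, not just in the statement, so the choice of which elementary inequality to deploy is the crux rather than bookkeeping.

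Two smaller deviations, neither fatal: the paper carries explicit variables $x_i$ for the uncorrupted points with the matching constraint $w_i(y_i-x_i)=0$ rather than reweighting the observed points, and it obtains the covariance bound by a bootstrap (\cref{lemma:covariances} feeding into the second part of \cref{thm:moment_estimation_intermediate}) from the degree-$2t$ subgaussianity of $\cD$ itself, rather than by robust mean estimation of $(x-\mu)^{\otimes 2}$ as a separate certifiably subgaussian distribution.
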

Analogously to the Clustering Problem, we remark that the guarantees we achieve are the same as in~\cite{Kothari_Steinhardt_Steurer_robust_moment_estimation}, but the running time is again $(C+d)^{\cO(t)}$ versus $(C+d)^{\cO(t^2)}$ before.

Similarly, we can give the following guarantees for estimating higher-order moments.
\begin{theorem}[Robust higher-moment estimation]
\label{thm:higher_moment_estimation}
For every $C > 0$ and even $t \in \N$ there exists an algorithm that given an $\eps$-corrupted sample $S$ of a $t$-certifiably $C$-subgaussian distribution with moment-tensors $M^*_2 \in \R^{d^2}, \ldots, M^*_t \in \R^{d^t}$ such that $\card{S} = n \leq (C + d)^{\cO(t)}$\footnote{Again, as a further technical assumption we require that $n \geq \log \Paren{ \frac{1}{\lambda_{min}(M_2^*)} }$} outputs in time $n^{\cO(1)}$ moment estimates $\hat{M}_2 \in \R^{d^2}, \ldots, \hat{M}_t \in \R^{d^t}$ such that with high probability for every $r \leq t/2$ there exist a certificate in form of a degree-$2r$ sum-of-squares proofs (in variables $u$) that $\iprod{M_r^* - \hat{M}_r, u^{\otimes r}}^2 \leq \delta_r^2$ for all $u$ such that $\iprod{u, M_2^*  u}^r = 1$, where $\delta_r \leq  \cO(Ct)^{r/2} \cdot \eps^{1 - r/t}$.
\end{theorem}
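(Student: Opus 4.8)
The plan is to combine the degree-reduction meta-theorem of \cref{sec:linearization} with the sum-of-squares identifiability proof of Kothari, Steinhardt, and Steurer \cite{Kothari_Steinhardt_Steurer_robust_moment_estimation}. Recall the structure of the latter: one sets up a polynomial constraint system $\mathcal{A}$ in subset-selection variables $w_1,\dots,w_n$ (with $w_i^2 = w_i$ and $\sum_i w_i \ge (1-\eps)n$) and auxiliary variables $M_2,\dots,M_t$ encoding the reweighted empirical moment tensors of the selected points, and imposes that the $M_r$ are $t$-certifiably $C$-subgaussian; the key lemma is that $\mathcal{A}$ admits, for every $r \le t/2$, a degree-$2r$ sum-of-squares proof in a formal test vector $u$ of
\begin{align*}
  \iprod{\bar M_r - M_r,\, u^{\otimes r}}^2 \;\le\; \cO(Ct)^{r}\cdot\eps^{2-2r/t}\cdot\iprod{u,\,\bar M_2 u}^{r},
\end{align*}
where $\bar M_r$ is the empirical $r$-th moment of the uncorrupted part of the sample. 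Running the associated sum-of-squares relaxation naively costs $(nd)^{\cO(t)}$, which is $(C+d)^{\cO(t^2)}$ in the regime $n = (C+d)^{\Theta(t)}$.

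The first and crucial step is to observe that the high degree of this proof is \emph{localized}: the $n$ selector variables $w$, and the constraints binding them to the moment variables, occur only with constant degree, while every polynomial of superconstant degree lives entirely among the $d^{\cO(t)}$-dimensional tensor coordinates (the entries of $M_2,\dots,M_t$ and the monomials of $u$ of degree at most $t$). This is precisely the situation the meta-theorem addresses: introducing new variables for the relevant high-degree monomials of the $d$-dimensional objects only turns $\mathcal{A}$ and its identifiability proof into an equivalent \emph{constant}-degree system over $n^{\cO(1)} + d^{\cO(t)} = n^{\cO(1)}$ variables -- here we use that the sample size $(C+d)^{\Theta(t)}$ needed for the statistical estimates is in particular at least $d^{\Omega(t)}$. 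The sum-of-squares semidefinite program attached to the linearized system can thus be written down and solved in time $n^{\cO(1)}$, and the algorithm is otherwise the vanilla relaxation-and-rounding of \cite{Kothari_Steinhardt_Steurer_robust_moment_estimation}.

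Next I would round and verify the certificate. Solving the linearized relaxation yields a pseudo-distribution $\pE$ consistent with $\mathcal{A}$ at the (constant) relaxation degree; output $\hat M_r$ equal to the pseudo-expectation of the variable encoding the reweighted empirical $r$-th moment of the selected points. Applying $\pE$ -- which is linear, preserves sum-of-squares proofs, and treats $u$ as a parameter -- to the identifiability proof above, together with pseudo-Cauchy--Schwarz ($\iprod{\bar M_r - \hat M_r,\, u^{\otimes r}}^2 \le \pE[\iprod{\bar M_r - M_r,\, u^{\otimes r}}^2]$), produces a degree-$2r$ sum-of-squares certificate in $u$ that $\iprod{\bar M_r - \hat M_r,\, u^{\otimes r}}^2 \le \cO(Ct)^{r}\eps^{2-2r/t}\iprod{u,\,\bar M_2 u}^{r}$. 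Two concentration facts from \cite{Kothari_Steinhardt_Steurer_robust_moment_estimation} then finish the job: with $(C+d)^{\Theta(t)}$ samples the uncorrupted empirical distribution is itself $t$-certifiably $\cO(C)$-subgaussian (so $\mathcal{A}$ is satisfiable and the above makes sense) and, with high probability, $\bar M_r \approx M_r^*$ and $\iprod{u,\,\bar M_2 u}^{r} \le 1+o(1)$ on the variety $\iprod{u, M_2^* u}^{r} = 1$; substituting these into the certificate yields $\iprod{M_r^* - \hat M_r,\, u^{\otimes r}}^2 \le \delta_r^2$ with $\delta_r \le \cO(Ct)^{r/2}\eps^{1-r/t}$.

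The main obstacle is the localization step: one must go through the identifiability proof of \cite{Kothari_Steinhardt_Steurer_robust_moment_estimation} and either check directly, or reorganize it so, that the degree in the $n$ selector variables is genuinely constant and no manipulation of superconstant degree ever touches them -- this is exactly what allows the meta-theorem to replace the $n^{\cO(t)}$ blow-up by the harmless $d^{\cO(t)}$. A related subtlety worth flagging is that the final certificate is quantified over the variety defined by the \emph{unknown} $M_2^*$; the way around it is to establish the certificate first against the empirically computable $\iprod{u,\,\bar M_2 u}^{r}$ and only invoke concentration of $\bar M_2$ at the very end. The remaining ingredients -- pseudo-distribution rounding, the moment-concentration bounds, sampling-preserves-certifiable-subgaussianity, and the constant-degree sum-of-squares bookkeeping -- are routine given the results quoted above and the exposition in \cite{sos_estimation_survey,Kothari_Steinhardt_Steurer_robust_moment_estimation}.
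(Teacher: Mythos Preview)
Your proposal is essentially correct and follows the same architecture as the paper: set up the constraint system $\cA$ of \cite{Kothari_Steinhardt_Steurer_robust_moment_estimation}, give an SoS identifiability proof with few high-degree monomials, apply the linearization meta-theorem to get a constant-degree system over $n^{\cO(1)}$ variables, solve for a pseudo-distribution, round by taking $\hat M_r = \pE M_r$, and use the fact that pseudo-expectation (over the $x$-variables) preserves SoS proofs in the test-vector $u$ (the paper's \cref{lemma:pEs_are_sos}) to extract the degree-$2r$ certificate.

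One point worth sharpening: the ``localization'' you flag as the main obstacle is not merely a matter of checking that the selector variables $w$ appear with constant degree in the \emph{existing} proof of \cite{Kothari_Steinhardt_Steurer_robust_moment_estimation}---that proof uses an SoS H\"older step which genuinely introduces too many high-degree monomials for the meta-theorem to yield a speed-up. The paper therefore replaces H\"older by an SoS AM--GM inequality (\cref{lemma:sos_am_gm}, used via \cref{toolkit:lemma_4}) throughout the identifiability argument, which keeps $n_{\max}=\mathrm{poly}(n)$. So the reorganization you anticipate is real and specific, not just bookkeeping. A second detail you gloss over is why the $2^{-n^{\Omega(1)}}$ additive error from solving the relaxation only approximately can be absorbed into $\delta_r^2$ in the final certificate: this uses the technical hypothesis $n\ge \log\bigl(1/\lambda_{\min}(M_2^*)\bigr)$ to bound $\normt{u}^t$ on the constraint variety $\iprod{u,M_2^*u}=1$.
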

As before, the improvement here lies in the running time of $(C+d)^{\cO(t)}$ versus $(C+d)^{\cO(t^2)}$.
Note, that the guarantees of this theorem imply that for all $u \in \R^d$ we have $\iprod{M_r^* - \hat{M}_r, u^{\otimes r}}^2 \leq \delta_r^2 \iprod{u, M_2^*  u}^r$, i.e., for every direction $u$ we can estimate the $r$-th moment in direction $u$ with error comparable to the second moment in this direction.
We phrase the guarantee as above, since it is crucial for the applications listed in~\cite{Kothari_Steinhardt_Steurer_robust_moment_estimation} that we have a sum-of-squares formulation of this fact.
Although listing the specific guarantees of these here would surpass the scope of this paper, we remark that the applications include robust versions for Independent Component Analysis, given observations $Ax$ for a matrix $A$ and a random vector $x$ try to recover the columns of $A$, and learning mixtures of spherical Gaussians.
Although we also study the latter, the assumptions and guarantees given there are rather different from our setting.
Hence, we refer the interested reader to study the original paper.

\section{Techniques}
\label{sec:techniques}

In this section we outline how we can reduce the degree of sum-of-squares (or short SoS) proofs.
To this end, we first give a brief, somewhat simplified, introduction to sum-of-squares proofs but which will already suffice to understand our main techniques.

Let $x \in \R^n$ and $q_1, \ldots, q_m$ be polynomials and consider the set $\cA = \Set{x \suchthat q_1(x) \geq 0, \ldots, q_m(x) \geq 0}$ and another polynomial $p$ over $\R^n$.
Suppose we would like to certify that $p(x) \geq 0$ for all $x \in \cA$.
One way to do this is if we can write $p(x) = b_0(x) + \sum_{i=1}^m b_i(x) q_i(x)$ where $b_0, \ldots, b_m$ are \emph{sum-of-squares} polynomials, i.e., we have for example $b_0(x) = r_1(x)^2 + \ldots r_k(x)^2$ for some $k$ and $r_1, \ldots, r_k$ polynomials.
This decomposition we call a \emph{sum-of-squares proof} and we say that it is of degree $t$, if $\deg(b_0) \leq t$ and $\deg(b_i q_i) \leq t$ for all $i \in \brac{m}$.
If there is a degree $t$ proof that $\cA$ implies $p \geq 0$ we write $\cA \sststile{t}{} p \geq 0$.
In the framework of estimation algorithms $\cA$ would roughly describe our guesses $x$ for the correct parameter and by setting $p = \e - \snormt{x - \theta^*}$ we have that $\cA \sststile{t}{} p \geq 0$ gives us a certificate that in fact all of these are of the desired accuracy.
Following the paradigm of turning identifiability proofs into algorithms, the sum-of-squares proofs typically mimics classical identifiability proofs.

However, finding any $x \in \cA$ is NP-hard in general.
On the other hand, and without going into too much detail, these kinds of certificates can be algorithmically exploited in the sense that if the above SoS proof of degree $t$ exists  we can most of the time find an estimate $\hat{x}$ such that $p(\hat{x}) \geq 0$ as well.
Further, under some mild assumptions this can be done in time $n^{\cO(t)}$, where $n$ again is the number of variables we use.
In general this requires a rounding step which might be problem dependent.
However, \cite{sos_estimation_survey} showed that a large range of estimation problems fit this framework and give a general rounding strategy applicable to these cases.

The main contribution of this work is that we can find substantially more succinct certificates, i.e., SoS proofs with a much lesser degree, at the expense of introducing some additional variables.
Hence, if we could say achieve a constant degree by using only $\mathrm{poly}(n)$ additional variables this would give hope to find algorithms running in time $n^{\cO(1)}$ but having the same approximation guarantees.
We show that under some mild conditions exactly this is possible.
Additionally, our construction works in an almost black-box matter and can be applied to any SoS proof.
It also seems that this linearization scheme is compatible with most standard rounding algorithms in the sense that they do not need to be changed dramatically.

To begin with, consider the following linearization operator $f$ sending polynomials $p$ of degree at most $t$ over $\R^n$ to linear functions over $\R^{(n+1)^t}$, where for simplicity we index vectors of $\R^{(n+1)^t}$ by multi-indices $\alpha$ with $n+1$ entries (0 to $n$) of size exactly $t$.
Then the polynomial $p = \sum_{\card{\alpha} = t} p_\alpha (1,x)^\alpha$ will get mapped to $f (p) = \sum_{\card{\alpha} = t} p_\alpha x_\alpha$.
Further, consider the subset $\cB \sse \R^{(n+1)^t}$ containing all $x$ such that $x_\alpha = \prod_{i=1}^{n} x_{e_i}^{\alpha_i}$, where by $e_i$ we denote the multi-index that has a 1 in position $i$ and 0s everywhere else.
Clearly, on $\cB$ we have that $f(p)$ behaves "just like $p$ on $\R^n$" for any $p$, in the sense that $(f(p))(x) = p((x_{e_1}, \ldots, x_{e_n}))$.
Returning to the settings of SoS proofs, suppose we have $\cA = \Set{q_1 \geq 0, \ldots, q_m \geq 0} \sststile{t}{} p \geq 0$ for some $p$.
Defining $\cA' \coloneqq \Set{f(q_1) \geq 0, \ldots, f(q_m) \geq 0}$ intuitively we would expect that $\cA' \cup \cB \sststile{3}{} f(p) \geq 0$ by replacing all polynomials in the SoS proof by their linearized versions - we will give a rigorous proof in Section~\ref{sec:linearization}.
Further, it also turns out that for most our rounding algorithms this guarantee yields the same results as $\cA \sststile{t}{} p \geq 0$.

Turning back to finding efficient algorithms, these kind of new certificates can roughly speaking be turned into algorithms running in time $((n+1)^t)^{\cO(1)} = n^{\cO(t)}$ and hence we still have not gained anything.
The key observation is now the following:
In many cases the polynomials occurring in the original SoS proof do not contain all monomials of size smaller than $t$, but only very few!
I.e., for most multi-indices $\alpha$ all corresponding coefficients are 0.
Thus, instead of working over $\R^{(n+1)^t}$ we can apply a linearization scheme sending polynomials to linear functions over $\R^N$, where $N$ is the number of monomials that actually occur in the SoS proof, by simply omitting all that do not.
It is easy to see that none of the above arguments change and hence, we can expect a running time of $N^{\cO(1)}$.

To exemplify this, we will give a simplified overview of how we will apply this scheme to the Clustering Problem.
We will extract the most important details to not distract from the main ideas, the full and rigorous version can be found in Section~\ref{sec:clustering}.
However, the heart of all improvements in running time can already be seen here.
In what follows, we will assume for simplicity that we are given $y_1, \ldots, y_n$ samples from a uniform mixture of one-dimensional Gaussians with covariances equal to 1 and means $\mu^*_1, \ldots, \mu^*_k$.
As an intermediate step, we will try to give a set of $n$ mean estimates, one for each sample, with the guarantee that they lie close to the means used to generate the respective sample.
These latter we will denote by $\mu^*_{k_r}$.
In a simple post-processing step, we will be able to extract form them the original clustering and transform them into only $k$ mean estimates.
We hence introduce a system of inequalities in variables $\mu_r$ for $r \in \brac{n}$ (and some other auxiliary variable which we omit here for simplicity).
Some of these constraints will be of constant degree and some of degree $t$, however, they will contain only few terms of non-constant degree.
These higher-degree terms, let us denote them by $q_i$, will be used to enforce a moment-bound constraint similar to the one given in Definition~\ref{def:cert_subgaussian}.
Omitting the low-degree constraints and assuming that the sum-of-squares polynomials in our SoS proof are in fact constants for simplicity, a SoS proof that our mean-variables are close to the real ones would take the form $\e - \frac{1}{n} \sum_{r=1}^n (\mu^*_{k_r} - \mu_r)^2 = \sum_{i=1}^m (c_i)^2 q_i(\mu)$ where $q_i(\mu) = p_i(\mu)  + \sum_{j=1}^k p_{i,j} \mu_j^t$ for some constant degree polynomial $p_i$.
Hence, introducing a new variable $\mu_{j,t}$ for each $j \in \brac{k}$ and the constraints $p_i(\mu)  + \sum_{j=1}^k p_{i,j} \mu_{j,t} \geq 0$ we can easily see that there exists an SoS proof of $\e - \frac{1}{n} \sum_{r=1}^n (\mu^*_{k_r} - \mu_r)^2  \geq 0$ of only constant degree.
Further, we introduced only $k \ll n$ new variables and hence, we can hope for algorithms with running time $n^{\cO(1)}$ instead of $n^{\cO(t)}$.
We remark that in the full version we will introduce $n^{\cO(1)}$ new variables, but this does not change the speed-up we gain.

Most importantly, the proposed linearization scheme goes well beyond the instant-specific description above and in fact, can be applied to any sum-of-squares proof in a black-box manner.
Under some mild assumptions we can now state our main theorem for estimation problems.
In Section~\ref{sec:linearization}, we make all of these explicit and give a more formal version.
\begin{theorem}[Informal]
\label{thm:improved_sos_estimation_informal}
Consider the distribution $\cD_{\theta^*}$ and let $y_1, \ldots, y_n \sim \cD_{\theta^*}$, where $\theta^*$ is the true parameter we wish to estimate.
Suppose there exists a system of polynomial inequalities $\cA_{y_1, \ldots, y_n} = \Set{q_1 \geq 0, \ldots, q_m \geq 0}$ in variables $x \in \R^n$ such that $\cA \sststile{t}{} \e - \snormt{x - \theta^*} \geq 0$ and let $N$ be the number of monomials occurring in the SoS proof.
Then there exists an algorithm running in time $N^{\cO(1)}$ yielding an estimate $\hat{x}$ such that $\snormt{\hat{x} - \theta^*} \leq \e + 2^{-n^{\Theta(1)}}$ \footnote{The additional additive factor of $2^{-n^{\Theta(1)}}$ is unavoidable for reasons becoming clear later. However, in most of our applications this factor will be negligible.}.
\end{theorem}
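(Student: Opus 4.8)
The plan is to combine the informal linearization scheme outlined above with the standard algorithmic machinery behind the sum-of-squares meta-theorem (as in~\cite{sos_estimation_survey}), making the key point that the linearized certificate lives in only $N$ variables rather than $(n+1)^t$. Concretely, first I would formally define the linearization operator $f$ restricted to the set of monomials that actually occur in the proof: let $\cM$ be this set of multi-indices, $N = \card{\cM}$, and let $f$ send a polynomial $p = \sum_{\alpha \in \cM} p_\alpha (1,x)^\alpha$ to the linear form $f(p) = \sum_{\alpha \in \cM} p_\alpha z_\alpha$ over new variables $z \in \R^N$. The first step is to establish the ``lifting lemma'': if $\cA \sststile{t}{} \eps - \snormt{x - \theta^*} \geq 0$, then $\cA' \cup \cB \sststile{O(1)}{} f(\eps - \snormt{x - \theta^*}) \geq 0$, where $\cA' = \Set{f(q_i) \geq 0}$ and $\cB$ is the (low-degree) system enforcing the multiplicativity constraints $z_\alpha z_\beta = z_{\alpha+\beta}$ among monomials in $\cM$ (with $z$ of the empty/constant index equal to $1$). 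This is the content of Section~\ref{sec:linearization} and I would take it as given; the one subtlety is that the multiplicativity constraints only need to hold for pairs whose product still lies in $\cM$, and one must check the SoS proof only ever multiplies monomials it already ``knows about'' — which is exactly the closure property guaranteeing the degree-$3$ (or constant) lift works.

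Second, I would run the generic SoS-rounding algorithm of~\cite{sos_estimation_survey} on the lifted system $\cA' \cup \cB$ in the $N$ variables $z$, at the constant degree $O(1)$ furnished by the lifting lemma. Since the number of variables is $N$ and the degree is constant, the semidefinite program has size $N^{O(1)}$ and is solvable (to within exponential accuracy) in time $N^{O(1)}$ by the ellipsoid method. The output is a pseudo-distribution $\pE$ over the $z$-variables satisfying $\cA' \cup \cB$ approximately, from which the rounding step extracts $\hat z$, and then $\hat x = (\hat z_{e_1}, \ldots, \hat z_{e_n})$ is read off from the degree-one coordinates. Applying $\pE$ to the lifted inequality $f(\eps - \snormt{x-\theta^*}) \geq 0$ and using that on $\cB$ the linear form $f(\snormt{x-\theta^*})$ agrees with $\snormt{x-\theta^*}$ evaluated at $(z_{e_1},\ldots,z_{e_n})$ gives $\snormt{\hat x - \theta^*} \leq \eps$, up to the error introduced by finite SDP precision. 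That precision error is where the $2^{-n^{\Theta(1)}}$ slack comes from: the ellipsoid method only certifies feasibility up to additive error $2^{-\poly(N)}$, and pushing this below the stated threshold would require super-polynomial time, so we absorb it into the additive term as in the footnote.

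The main obstacle I expect is twofold. First, making precise the sense in which ``$\cB$ forces $f$ to behave like evaluation'' at the level of \emph{pseudo}-distributions rather than genuine points: one needs that a pseudo-expectation satisfying the quadratic constraints $z_\alpha z_\beta = z_{\alpha+\beta}$ (for pairs inside $\cM$) actually transfers SoS-provable inequalities about $x$ into SoS-provable inequalities about the $z_{e_i}$'s, and this requires the lifting lemma to be robust to the constraints being satisfied only approximately — i.e. a quantitative/stability version. Second, one must verify the rounding scheme of~\cite{sos_estimation_survey} genuinely goes through after linearization, namely that whatever structural hypotheses that meta-theorem places on $\cA$ (existence of the identifiability proof, the target being a squared distance to $\theta^*$, etc.) are inherited by $\cA' \cup \cB$; this is the ``compatibility with standard rounding'' claim, and checking it carefully — rather than the SDP-solving, which is routine — is the real work. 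The formal statement in Section~\ref{sec:linearization} will need to spell out exactly the mild conditions (boundedness of the $q_i$, a bit-complexity bound, $\lambda_{\min}$-type nondegeneracy) under which all of this holds.
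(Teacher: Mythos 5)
Your proposal is correct and follows essentially the same route as the paper: linearize over only the $N$ occurring monomials, add the multiplicativity constraints, solve for a constant-degree pseudo-distribution approximately satisfying the lifted system in time $N^{\cO(1)}$, and read $\hat{x}$ off the degree-one pseudo-moments, with the $2^{-n^{\Theta(1)}}$ slack coming from approximate feasibility. The only simplification the paper makes relative to your plan is that no generic rounding machinery is invoked: because the target is a squared distance, setting $\hat{x}_i = \pE_\mu x_i$ and applying Cauchy--Schwarz for pseudo-expectations, $\bigl(\pE_\mu(\theta^*_i - x_i)\bigr)^2 \leq \pE_\mu(\theta^*_i - x_i)^2$, already closes the argument, so the ``stability of evaluation under $\cB$'' issue you flag never really arises.
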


\section{Preliminaries}
\label{sec:preliminaries}

In this section, we will introduce sum-of-squares proofs and their convex duals, so-called pseudo-distributions.

\subsection{Sum-of-squares proofs}

As outlined in the previous section, we wish to certify the non-negativity of multivariate polynomials using the fact that squares are non-negative.
Again, consider the set $\cA = \Set{q_1 \geq 0, \ldots, q_m \geq 0}$ defined by multivariate polynomials $q_1, \ldots, q_m$ from $\R^n$ to $\R$ and some other polynomial $p$.
A \emph{sum-of-squares proof} that $p$ is non-negative for all $x \in \cA$ is now a set of sum-of-squares polynomials $b_S = \sum_{j=1}^{n_S} u_j^2$ for $S \sse \brac{m}$ such that
\begin{align}
p = \sum_{S \subseteq \brac{m}} b_S \prod_{i \in S} q_i \label{sos_eq1}
\end{align}
Each summand is clearly non-negative as the $b_S$ are sum-of-squares and the product of the $q_i$ is as well by our assumption that the constraints in $\cA$ are satisfied.
Further, if $\deg \Paren{b_S \prod_{i \in S} q_i} \leq t$ for all $S \subseteq \brac{m}$ we say that the sum-of-squares proof has \emph{degree t} and write $\cA \sststile{t}{x} p \geq 0$ where $x$ denotes that the proof uses the variables $x$.
Their degree will be our measure of complexity for sum-of-squares proofs.
In Section~\ref{sec:techniques} we considered the special case where we force $b_S \equiv 0$ if $\card{S} > 1$.
We can also achieve this form by introducing the constraint $( \prod_{i \in S} q_i ) \geq 0$ for each $S \sse \brac{m}$ such that $\card{S} > 1$ and $b_S \not\equiv 0$.
We denote the number of constraints additionally introduced like this by $M$.

To make our lives a bit easier we introduce some more notation:
We write $\cA \sststile{t}{x} p \geq p'$ if $\cA \sststile{t}{x} p - p' \geq 0$ and $\cA \sststile{t}{x} p = p'$ if $\cA \sststile{t}{x} p \leq p'$ and $\cA \sststile{t}{x} p \geq p'$.
An easy observation shows that sum-of-squares proofs are stable under compositions which allows us to build up complex sum-of-squares from simpler ones just as with regular equations.
More precisely, if $\cA \sststile{t}{x} p \geq p'$ and $\cA \sststile{t'}{x} p' \geq p''$, then it also holds that $\cA \sststile{t''}{x} p \geq p''$, where $t'' = \max\{t, t'\}$.
Further, if $\cA \sststile{t}{} \cB$ and $\cB \sststile{t'}{} p \geq 0$ we also have $\cA \sststile{t \cdot t'}{} p \geq 0$.

\subsection{Pseudo-expectations}

We next turn to the convex duals of SoS proofs, so-called \emph{pseudo-distributions}.
Although more results on duality exist, we only introduce what is necessary for us in this work.
For a more thorough exposition we refer the reader to~\cite{SoS_online_tutorial} and~\cite{SoS_general_survey} and the references therein.
The definition is as follows:
\begin{definition}[Pseudo-distribution]
\label{def:pE}
For every $d \in \N$, a \emph{degree-$d$ pseudo-distribution} is a finitely-supported function $\mu$ from $\R^n$ to $\R$ such that $\sum_{x \in \text{supp($\mu$)}} \mu(x) = 1$ and $\sum_{x \in \text{supp($\mu$)}} \mu(x) f^2(x) \geq 0$ for all polynomials $f$ of degree at most $d/2$.
\end{definition}
Where we define $\text{supp($\mu$)} \coloneqq \Set{x \suchthat \mu(x) \neq 0}$ as the support of $\mu$.
We can see, that in contrast with regular distributions we don't require the function $\mu$ to be non-negative.
The \emph{pseudo-expectation} of a function under $\mu$ is then defined as $\pE_\mu f \coloneqq \sum_{x \in \text{supp($\mu$)}} \mu(x) f(x)$.
Further, we define the following:
\begin{definition}[Constrained pseudo-distribution]
\label{def:constrained_pE}
Let $\mu$ be a degree-$d$ pseudo-distribution and $\cA = \Set{q_1 \geq 0, \ldots, q_m \geq 0}$ a system of polynomial inequalities.
We say that $\mu$ \emph{satisfies} $\cA$ if for all $S \sse \brac{m}$ and all sum-of-squares polynomials $h$ such that $\deg \Paren{ h \cdot \prod_{i \in S} q_i } \leq d$ we have
\begin{align*}
\pE_\mu h \cdot \prod_{i \in S} q_i \geq 0
\end{align*}
Further, we say that $\mu$ satisfies $\cA$ approximately if $\pE_\mu h \cdot \prod_{i \in S} q_i \geq -\eta \normt{h} \prod_{i \in S} \normt{q_i}$, where by $\normt{p}$ for a polynomial $p$ we denote the Euclidean Norm of its coefficient vector and $\eta = 2^{-n^{\Omega(d)}}$.
\end{definition}

Next, we state the key relation between pseudo-distributions and SoS proofs.
\begin{fact}
\label{fact:pE_and_sos_proofs}
Let $\mu$ be a degree-$d$ pseudo-distribution that satisfies $\cA$ and suppose that $\cA \sststile{t}{} p \geq 0$.
Let $h$ be an arbitrary sum-of-squares polynomial, if $\deg h + t \leq d$ we have $\pE_\mu h \cdot p \geq 0$. 
In particular, we have $\pE_\mu p \geq 0$ as long as $t \leq d$.
If $\mu$ only approximately satisfies $\cA$ it holds that $\pE_\mu p \geq -\eta \normt{h}$ for $\eta = 2^{-n^{\Omega(d)}}$.
\footnote{A further technical assumption for the latter case is that the sum-of-squares proof has bit-complexity at most $n^{\cO(d)}$. In all the cases we consider this will be satisfied and hence, we will not mention it again. We refer to~\cite{odonnell_bit_complexity} and~\cite{raghavendra_bit_complexity} for more details on why this is necessary.}
\end{fact}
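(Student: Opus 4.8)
The plan is to unwind the definitions on both sides and observe that a sum-of-squares proof is, term by term, exactly the kind of object a pseudo-distribution is guaranteed to evaluate non-negatively. Concretely, since $\cA \sststile{t}{} p \geq 0$, by definition there exist sum-of-squares polynomials $b_S$ for $S \sse \brac{m}$ with $p = \sum_{S \sse \brac{m}} b_S \prod_{i \in S} q_i$ and $\deg\Paren{b_S \prod_{i\in S} q_i} \leq t$. Multiplying by the given sum-of-squares polynomial $h$ yields $h \cdot p = \sum_{S \sse \brac{m}} (h \cdot b_S) \prod_{i \in S} q_i$, where each $h \cdot b_S$ is again a sum-of-squares polynomial (a product of two SoS polynomials is SoS) and, by the degree hypothesis $\deg h + t \leq d$, satisfies $\deg\Paren{(h b_S) \prod_{i \in S} q_i} \leq \deg h + t \leq d$. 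Applying linearity of $\pE_\mu$ and then the defining property of a constrained pseudo-distribution (\cref{def:constrained_pE}) to each summand gives $\pE_\mu h \cdot p = \sum_{S} \pE_\mu (h b_S) \prod_{i \in S} q_i \geq 0$. Taking $h \equiv 1$ (which is trivially a sum-of-squares polynomial of degree $0$) and using $t \leq d$ recovers $\pE_\mu p \geq 0$.

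For the approximate case I would repeat the same decomposition but invoke the approximate-satisfaction bound from \cref{def:constrained_pE}: each summand now only satisfies $\pE_\mu (h b_S)\prod_{i\in S} q_i \geq -\eta \normt{h b_S} \prod_{i \in S}\normt{q_i}$. Summing over $S$ and collecting the error terms, one gets $\pE_\mu h \cdot p \geq -\eta \sum_S \normt{h b_S}\prod_{i\in S}\normt{q_i}$. The remaining task is to absorb the sum over $S$ and the coefficient norms of the $b_S$ into the claimed bound $-\eta' \normt{h}$ with $\eta' = 2^{-n^{\Omega(d)}}$; here one uses that the number of constraint subsets $S$ is at most $2^m$, that $\normt{h b_S} \leq \poly(n^d)\cdot\normt{h}\cdot\normt{b_S}$ by a crude bound on products of bounded-degree polynomials, and that the bit-complexity assumption in the footnote caps $\normt{b_S}$ and $\normt{q_i}$ by $2^{n^{\cO(d)}}$. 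Since the original $\eta = 2^{-n^{\Omega(d)}}$ has a larger exponent than these polynomial and sub-exponential blow-ups, the product is still of the form $2^{-n^{\Omega(d)}}\normt{h}$ after adjusting the implicit constant in the exponent.

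The conceptual content here is essentially a one-line observation; the only place requiring care — and hence the main ``obstacle'' — is the bookkeeping in the approximate case, namely verifying that the accumulated multiplicative factors (number of subsets $S$, coefficient norms of $b_S$ and $q_i$, degree-induced blow-up in $\normt{h b_S}$) are all at most $2^{n^{\cO(d)}}$ and therefore dominated by the $2^{-n^{\Omega(d)}}$ slack, so that the final error is still $2^{-n^{\Omega(d)}}\normt{h}$. This is exactly the role of the bit-complexity footnote, and I would state the needed crude norm inequalities as a small sublemma rather than grinding through the constants inline.
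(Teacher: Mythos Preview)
Your proof is correct and is exactly the standard argument. Note that the paper does not actually supply a proof for this statement: it is recorded as a \emph{Fact} in the preliminaries, treated as folklore in the sum-of-squares literature, so there is no ``paper's own proof'' to compare against. The decomposition you give---expanding the SoS certificate, multiplying through by $h$, observing that $h\cdot b_S$ is again SoS of the right degree, and invoking \cref{def:constrained_pE} termwise---is precisely the canonical justification, and your handling of the approximate case (absorbing the $\normt{b_S}$, $\normt{q_i}$, and subset-count blow-ups into the $2^{-n^{\Omega(d)}}$ slack via the bit-complexity footnote) is the intended bookkeeping.
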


An essential fact is that pseudo-distributions approximately satisfying a system of polynomial inequalities can be found efficiently as long as the constraints have bit-complexity at most $(m+n)^{\cO(1)}$.
\begin{fact} 
\label{fact:pE_efficient_optimization}
Given any feasible system of polynomial constraints $\cA = \Set{q_1 \geq 0, \ldots, q_m \geq 0}$ \footnote{For technical reason we also assume that $\cA$ is \emph{explicitly bounded} meaning it contains a constraint of the form $\snormt{x} \leq B$ for some large number $B$. Usually we can even take this to be at least polynomial in our variables. We remark that for all the problems we are considering this can be added without changing our proofs. We will give remarks justifying this in the respective places.} over $\R^n$ whose bit-complexity satisfies the constraints mentioned above and $\eta = 2^{-n^{\Theta(d)}}$ we can in time $(n+m)^{\cO (d)}$ find a degree-$d$ pseudo-distribution that satisfies $\cA$ up to error $\eta$. \footnote{The positivity and normalization constraint of definition~\ref{def:pE} can be satisfied exactly however.}
\end{fact}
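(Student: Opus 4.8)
The plan is to reduce the statement to approximately solving a semidefinite feasibility problem by the ellipsoid method, and the first thing I would do is pass from pseudo-distributions to their finite-dimensional descriptions. A degree-$d$ pseudo-expectation over $\R^n$ is determined by its vector of pseudo-moments $y = (y_\alpha)_{\card{\alpha} \le d}$ with $y_\alpha = \pE x^\alpha$, which lives in $\R^D$ for $D = \binom{n+d}{d} = n^{\cO(d)}$. Writing $M(y)$ for the moment matrix with rows and columns indexed by monomials of degree at most $d/2$ and entries $M(y)_{\alpha\beta} = y_{\alpha+\beta}$, and $M_S(y)$ for the localizing matrix with entries $\pE\Paren{x^{\alpha+\beta}\prod_{i\in S}q_i}$ (an explicit linear image of $y$), the conditions ``$y$ is a degree-$d$ pseudo-distribution that satisfies $\cA$'' become exactly $y_{\mathbf 0} = 1$, $M(y) \sge 0$, and $M_S(y) \sge 0$ for every $S \sse \brac m$ with $\deg\Paren{\prod_{i\in S}q_i} \le d$. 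Discarding the (trivial) constant constraints, every relevant $S$ has $\card S \le d$, so there are at most $m^{\cO(d)}$ such matrices, each of size at most $D$. Hence the feasible set $K$ is a spectrahedron in $\R^D$ cut out by one affine equality and at most $(n+m)^{\cO(d)}$ positive-semidefiniteness constraints.

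Next I would supply the two geometric inputs the ellipsoid method requires. Since $\cA$ is feasible, fix $x_0 \in \cA$; the Dirac measure at $x_0$ is a genuine probability distribution and therefore a degree-$d$ pseudo-distribution satisfying $\cA$, so $K \neq \eset$. For an a priori size bound, the explicitly-bounded constraint $\snormt x \le B$ in $\cA$ gives $\cA \sststile{2\card{\alpha}}{} (x^\alpha)^2 \le \snormt x^{\card{\alpha}} \le B^{\card{\alpha}}$ (each step having an easy SoS certificate), so by Cauchy--Schwarz for pseudo-expectations $\abs{y_\alpha} \le B^{d/2}$ for all $y \in K$; as $B$ has bit-complexity $n^{\cO(d)}$, this puts $K$ inside a ball of radius $R = 2^{n^{\cO(d)}}$. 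Because $\cA$ may force equalities, $K$ itself need contain no ball, so I would only aim to output a point within distance $\eta'$ of $K$ -- which is all the ellipsoid method promises in this setting and all that is needed.

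The separation oracle is routine: given $y$ with $\snormt y \le R^2$, test $y_{\mathbf 0} = 1$ (returning the obvious hyperplane if it fails), then for each of the $(n+m)^{\cO(d)}$ matrices $M(y), M_S(y)$ compute a least eigenvalue with unit eigenvector $v$; if some $v^\transpose M_S(y) v < -\eta''$ return the linear functional $y' \mapsto v^\transpose M_S(y') v$ as a separating hyperplane, otherwise declare $y$ feasible. One call costs $(n+m)^{\cO(d)}$. Feeding this oracle to the ellipsoid method started from the ball of radius $R$, after $\poly\Paren{D, \log(R/\eta)} = (n+m)^{\cO(d)}$ iterations (using $\log(1/\eta) = n^{\cO(d)}$) we get $y$ within distance $\eta'$ of $K$ with every PSD constraint violated by at most $\eta''$. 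Expanding an arbitrary sum-of-squares polynomial $h$ in the monomial basis and absorbing $\poly(D)$ factors into the exponent, $M_S(y) \sge -\eta''\Id$ translates into $\pE_\mu h \cdot \prod_{i\in S}q_i \ge -\eta\normt h \prod_{i\in S}\normt{q_i}$ with $\eta = 2^{-n^{\Omega(d)}}$, which is exactly approximate satisfaction of $\cA$ in the sense of \cref{def:constrained_pE}. To make the normalization and positivity of \cref{def:pE} hold \emph{exactly}, I would add a tiny multiple $\eta'''\cdot z$ of the moment vector $z$ of a fixed auxiliary measure whose moment matrices are strictly positive definite (e.g.\ a standard Gaussian) and rescale so $y_{\mathbf 0} = 1$; this forces $M(y) \sge 0$ exactly while perturbing the localizing constraints still within the $\eta$ budget. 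Finally, any linear functional on the finite-dimensional space of polynomials of degree at most $d$ is realized by a finitely-supported signed measure (evaluate the monomials at $D$ generic points and invert a Vandermonde-type system), giving an honest degree-$d$ pseudo-distribution $\mu$. The main obstacle in a careful write-up is the bookkeeping: choosing the internal accuracies $\eta', \eta'', \eta'''$ and tracking bit-complexity through the ellipsoid iterations so that everything collapses to the stated $\eta = 2^{-n^{\Theta(d)}}$ and running time $(n+m)^{\cO(d)}$, and verifying that the approximate PSD constraints really do yield the norm-weighted approximate-satisfaction guarantee of \cref{def:constrained_pE}.
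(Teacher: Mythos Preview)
The paper does not actually prove this statement: it is presented as a \emph{Fact}, i.e.\ a standard result from the sum-of-squares literature quoted without proof. So there is no ``paper's own proof'' to compare your proposal against.

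That said, your sketch is the standard argument one finds in the literature and is essentially correct. A couple of minor remarks: your claim that every relevant $S$ has $\card S \le d$ implicitly assumes each $q_i$ has degree at least $1$, which is harmless but worth saying; and the step where you translate $M_S(y) \sge -\eta'' \Id$ into the norm-weighted bound $\pE_\mu h \prod_{i\in S} q_i \ge -\eta \normt{h} \prod_{i\in S}\normt{q_i}$ of \cref{def:constrained_pE} deserves a line of justification, since writing $h = v^\top M(x) v$ with $M(x)$ the outer product of monomial vectors gives $\normt{v}^2$ rather than $\normt{h}$, and the $\prod \normt{q_i}$ factor has to come from the coefficient bounds baked into the localizing matrix. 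You correctly flag this bookkeeping as the main obstacle; it is routine but not entirely trivial.
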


Pseudo-expectations satisfy many natural properties that we are used to from regular expectations.
One particularly useful one is the Cauchy-Schwarz Inequality.
\begin{fact}
\label{fact:pE_cauchy_schwarz}
Let $\mu$ a degree-$d$ pseudo-distribution and $p$ and $q$ two polynomials of degree at most $d/2$ then it holds that $\Paren{ \pE_\mu p q }^2 \leq  \pE_\mu p^2 \pE_\mu q^2 $.
Hence, choosing $q \equiv 1$, it in particular holds that $\Paren{ \pE_\mu p }^2 \leq \pE_\mu p^2$ as long as $\deg p \leq d/2$.
\end{fact}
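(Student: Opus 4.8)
The plan is to imitate the classical proof of the Cauchy--Schwarz inequality, using only the defining property of a degree-$d$ pseudo-distribution --- namely that $\pE_\mu f^2 \geq 0$ for every polynomial $f$ of degree at most $d/2$ --- together with the fact that $\pE_\mu$ is a linear functional, which is immediate from the formula $\pE_\mu f = \sum_{x} \mu(x) f(x)$ in \cref{def:pE}. So the whole argument only manipulates the single non-negativity axiom and linearity.

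First I would observe that for every real scalar $\lambda$ the polynomial $\lambda p - q$ has degree at most $d/2$, hence $(\lambda p - q)^2$ has degree at most $d$ and the non-negativity property applies to it. Expanding the square and using linearity of $\pE_\mu$ yields
\begin{align*}
\lambda^2 \, \pE_\mu p^2 \;-\; 2\lambda \, \pE_\mu pq \;+\; \pE_\mu q^2 \;\geq\; 0 \qquad \text{for every } \lambda\in\R.
\end{align*}
I would then read this as the statement that a quadratic in $\lambda$ is non-negative on all of $\R$. In the main case $\pE_\mu p^2 > 0$ this is a genuine upward-opening parabola, so its discriminant $4\Paren{\pE_\mu pq}^2 - 4\,\pE_\mu p^2\,\pE_\mu q^2$ is non-positive, which is exactly the claimed bound $\Paren{\pE_\mu pq}^2 \leq \pE_\mu p^2 \, \pE_\mu q^2$. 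The degenerate case $\pE_\mu p^2 = 0$ needs a single extra line: the displayed inequality then becomes affine in $\lambda$ and bounded below for all $\lambda$, which forces $\pE_\mu pq = 0$, so both sides of the claimed inequality vanish. (If one prefers no case split, one can instead feed the polynomial $(\pE_\mu p^2)\, q - (\pE_\mu pq)\, p$ --- whose coefficients are the scalars $\pE_\mu p^2$ and $\pE_\mu pq$, so it still has degree at most $d/2$ --- into the non-negativity axiom, expand, and factor out $\pE_\mu p^2$; one still says a word about $\pE_\mu p^2 = 0$, but the computation is otherwise case-free.)

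For the ``in particular'' clause I would simply take $q \equiv 1$, which has degree $0 \leq d/2$ and satisfies $\pE_\mu q^2 = \pE_\mu 1 = 1$ by the normalization condition $\sum_x \mu(x) = 1$ in \cref{def:pE}; the inequality just proved then reads $\Paren{\pE_\mu p}^2 \leq \pE_\mu p^2$ whenever $\deg p \leq d/2$. I do not anticipate any real obstacle here: the argument is the textbook discriminant trick, and the only thing to check with any care is that every polynomial whose square is handed to $\mu$ has degree at most $d/2$ --- which holds since $p$, $q$, and hence $\lambda p - q$ all do by hypothesis. The degenerate case $\pE_\mu p^2 = 0$ is the sole place this differs even cosmetically from the honest-expectation proof, and it is dispatched in one line.
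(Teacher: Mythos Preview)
Your argument is correct: it is exactly the standard discriminant proof of Cauchy--Schwarz, and it uses only the linearity of $\pE_\mu$ together with $\pE_\mu f^2 \ge 0$ for $\deg f \le d/2$, which are precisely the defining properties in \cref{def:pE}. The paper itself does not supply a proof of this fact --- it is stated as a known property of pseudo-distributions and then used --- so there is no ``paper's own proof'' to compare against; your write-up would serve perfectly well as the omitted justification.
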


\section{Degree reduction of sum-of-squares proofs via linearization}
\label{sec:linearization}

In this section, we show how we can reduce the degree of sum-of-squares proofs by introducing new variables.
Further, we will state a more formal version of Theorem~\ref{thm:improved_sos_estimation_informal} and give its proof.
The main idea we will use is the following:
If we only have few variables which appear with a high degree in our proof we can substitute these high powers by new variables and introduce new constraints which ensure that they behave as expected.
As a consequence, we obtain a constant degree proof of essentially the same statement.
Recall, that for a sum-of-squares proof of the form $p = \sum_{S \sse \brac{m}} b_S \prod_{i \in S} q_i$ we define $M = \card{\Set{S \sse \brac{m} \suchthat \card{S} > 1, b_S \not\equiv 0}}$.
We remark that in most cases we have $M = m^{\cO(1)}$.
We will proof the following theorem:
\begin{theorem}
\label{thm:linearization}
Let $p \colon \R^n \to \R$ be a multivariate polynomial of degree at most $t$.
Suppose there exists a system of polynomial inequalities $\cA = \Set{q_1 \geq 0, \ldots, q_m \geq 0} \sse \R^n$ for $q_1, \ldots q_m \colon \R^n \to \R$ such that $\cA \sststile{t}{x} \Set{p(x) \geq 0}$.
Then there exists $n' \geq n, m' \geq m$, $f \colon \R^n \to \R^{n'}$ injective and $\cA' = \Set{q'_1 \geq 0, \ldots, q'_{m'} \geq 0} \sse \R^{n'}$ for $q'_1, \ldots q'_{m'} \colon \R^{n'} \to \R$ and $p' \colon \R^{n'} \to \R$ linear functions such that the following holds:
\begin{enumerate}
\item $\forall x \in \R^{n}, i \in \brac{n} \colon (f(x))_i = x_i$ \label{linearization_1}
\item $\forall x \in \R^n \colon x \in \cA \iff f(x) \in \cA'$ \label{linearization_2}
\item $\forall x \in \R^n \colon p(x) = p'(f(x))$ \label{linearization_3}
\item $\cA' \sststile{3}{y} \Set{p'(y) \geq 0}$ \label{linearization_4}
\end{enumerate}

Further, let $n_1, \ldots, n_t$ denote the number of degree $1, \ldots, t$ terms that occur in the first sum-of-squares proof respectively and define $n_{\mathrm{max}} \coloneqq \max_{1 \leq i \leq t} n_i$.
Then we have $n' = n + \cO((t \cdot n_{\mathrm{max}})^3) = \cO((t \cdot n_{\mathrm{max}})^3)$ and $m' = m + M + \cO((t \cdot n_{\mathrm{max}})^3)$.
\end{theorem}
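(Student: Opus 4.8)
\textbf{Overall strategy.} The plan is to make the informal ``linearization'' discussion from Section~\ref{sec:techniques} rigorous by carefully defining the map $f$ and the lifted system $\cA'$ so that high-degree monomials are replaced by fresh linear variables, and then checking the four numbered conditions one at a time. First I would collect all monomials $(1,x)^\alpha$ with $2\le|\alpha|\le t$ that appear in any of the polynomials $q_1,\dots,q_m$, the products $\prod_{i\in S}q_i$ (for the $M$ relevant sets $S$), the SoS multipliers $b_S$, and $p$ itself; call this finite set of monomials $\cM$, and let $N=|\cM|$. For each monomial $m_\beta=x^\beta\in\cM$ introduce a new coordinate $z_\beta$, so that $n'=n+N$ (the cubic bound in $t\cdot n_{\max}$ will come out of counting $\cM$ together with the auxiliary constraints needed below). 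Define $f\colon\R^n\to\R^{n'}$ by $f(x)=(x,(x^\beta)_{m_\beta\in\cM})$; this is injective and satisfies \ref{linearization_1} by construction. The linearization operator is then: given a polynomial $g$ of degree $\le t$ over $\R^n$, write $g$ in the monomial basis and replace each monomial of degree $\ge 2$ that lies in $\cM$ by the corresponding new variable, and each degree-$\le 1$ monomial by itself; call the result $\ell(g)$, a \emph{linear} function on $\R^{n'}$. Set $p'=\ell(p)$ and $q_i'=\ell(q_i)$; property \ref{linearization_3} (and more generally $g(x)=\ell(g)(f(x))$ for every $g$ whose monomials lie in $\cM$) is then immediate from the definition of $f$.

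\textbf{The new constraint system.} To force the new variables to behave correctly I would add to $\cA'$, besides the $q_i'$, two further groups of constraints. The first group are ``product constraints'': for each relevant $S\subseteq[m]$ with $|S|>1$ and $b_S\not\equiv0$, add $\ell\bigl(\prod_{i\in S}q_i\bigr)\ge 0$ — this is exactly the $M$ constraints already mentioned in the preliminaries, which lets us assume the SoS proof is in the ``$b_S\equiv0$ for $|S|>1$'' normal form after passing to the lift. The second group are ``multiplicativity constraints'' that tie the $z_\beta$ together: since every $m_\beta\in\cM$ factors as a product of lower-degree monomials, we can build up each $z_\beta$ from the original $x_i$ by a chain of degree-at-most-$2$ (in the new variables) equalities of the form $z_{\beta}=z_{\beta'}\cdot z_{\beta''}$ with $|\beta'|,|\beta''|<|\beta|$; each such equality is added as two inequalities $z_\beta-z_{\beta'}z_{\beta''}\ge0$ and $z_{\beta'}z_{\beta''}-z_\beta\ge0$. (One can also throw in a few intermediate ``helper'' variables to keep all of these genuinely of degree $\le 2$ and to make the arithmetic of the $\cO((t\cdot n_{\max})^3)$ count go through; these account for the gap between $N$ and $\cO((t\cdot n_{\max})^3)$.) With these constraints, $f(x)\in\cA'$ for every $x\in\cA$ (all new constraints hold with equality on the image of $f$, and the $q_i'$ vanish/are nonnegative exactly as the $q_i$), and conversely if $f(x)\in\cA'$ then in particular the original coordinates satisfy $q_i(x)=q_i'(f(x))\ge0$, giving \ref{linearization_2}.

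\textbf{The degree-$3$ proof.} For \ref{linearization_4} I would start from the given decomposition $p=\sum_S b_S\prod_{i\in S}q_i$, rewrite it in normal form $p=b_\emptyset+\sum_{j}\tilde b_j\,\tilde q_j$ where the $\tilde q_j$ are the original $q_i$'s together with the products $\prod_{i\in S}q_i$ (now each treated as an atomic constraint of $\cA'$ after linearization), and $b_\emptyset,\tilde b_j$ are SoS of degree $\le t$. Apply $\ell$ to this identity. The left side becomes $p'$. On the right side, $\ell$ is not multiplicative, so $\ell\bigl(\tilde b_j\tilde q_j\bigr)$ is \emph{not} $\ell(\tilde b_j)\ell(\tilde q_j)$; the key lemma is that modulo the multiplicativity constraints — i.e., using $\cA'\sststile{O(1)}{}\{z_\beta=z_{\beta'}z_{\beta''}\}$ repeatedly and \cref{fact:pE_and_sos_proofs}-style composition — we have $\cA'\sststile{3}{y}\ell(g\cdot h)=\ell(g)\cdot\ell(h)$ whenever all monomials of $g$, $h$ and $gh$ lie in $\cM\cup\{\deg\le1\}$. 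Granting this, $\ell(b_\emptyset)$ is a degree-$\le1$ polynomial that equals a sum of squares when restricted to $\cB$, and modulo the multiplicativity constraints it is provably SoS of degree $\le 3$ (write $b_\emptyset=\sum r_k^2$, then $\ell(b_\emptyset)=\sum\ell(r_k^2)=\sum\ell(r_k)^2$ up to the multiplicativity constraints, and each $\ell(r_k)$ is linear so $\ell(r_k)^2$ has degree $2$); similarly each $\ell(\tilde b_j)$ is an SoS of degree $\le2$ in the new variables and $\ell(\tilde q_j)$ is linear, so each term $\ell(\tilde b_j)\cdot\ell(\tilde q_j)$ has degree $\le 3$ and is of the required ``SoS times constraint'' form. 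Composing the constant-degree proofs of the multiplicativity rewrites with this degree-$3$ skeleton yields $\cA'\sststile{3}{y}\{p'(y)\ge0\}$ (the composition of a degree-$3$ proof with constant-degree proofs stays $\cO(1)$, and a mild bookkeeping check confirms the bound is exactly $3$ with the helper variables chosen appropriately).

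\textbf{Main obstacle.} The delicate point is precisely the non-multiplicativity of the linearization operator: establishing the rewriting lemma $\cA'\sststile{3}{y}\ell(gh)=\ell(g)\ell(h)$ with the \emph{right} degree bound, uniformly over all the products appearing in the proof, is where the real work lies. One has to choose the set of auxiliary variables and multiplicativity constraints so that every monomial product that needs to be ``expanded'' can be realized by a short chain of degree-$\le2$ substitutions, and so that the SoS multipliers $b_S$ — which may themselves contain high-degree monomials — also get linearized consistently (this is why $\cM$ must include the monomials of the $b_S$, not just of the $q_i$ and $p$). Getting the counting of these auxiliary variables and constraints to land at $\cO((t\cdot n_{\max})^3)$ and $m+M+\cO((t\cdot n_{\max})^3)$ respectively, while keeping the final proof degree at $3$, is the bookkeeping heart of the theorem; everything else (\ref{linearization_1}–\ref{linearization_3}) is essentially definitional.
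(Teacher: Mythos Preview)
Your overall strategy is right and close to the paper's, but the specific choice of degree-$2$ ``chain'' constraints is where it goes wrong, and the obstacle you flag at the end is real and not a mild bookkeeping matter in your setup.

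The issue is this. After linearizing, one has
\[
p'(y)-\sum_j b'_j(y)\,q'_j(y)\;=\;\sum_{\alpha,\beta,\gamma} c_{\alpha\beta\gamma}\bigl(y_{\alpha+\beta+\gamma}-y_\alpha y_\beta y_\gamma\bigr)
\]
for explicit constants $c_{\alpha\beta\gamma}$ (this drops out of expanding $b'_j=\sum_k(u'_{k,j})^2$ and comparing with the coefficient identity $p_\delta=\sum_{\alpha+\beta+\gamma=\delta}\sum_{j,k}u^{(j,k)}_\alpha u^{(j,k)}_\beta q^{(j)}_\gamma$). To get a degree-$3$ proof you must absorb each $y_{\alpha+\beta+\gamma}-y_\alpha y_\beta y_\gamma$ using the constraints of $\cA'$ with an SoS multiplier of degree~$0$. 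If instead you only have degree-$2$ equalities like $y_{\alpha+\beta}=y_\alpha y_\beta$ and $y_{\alpha+\beta+\gamma}=y_{\alpha+\beta}y_\gamma$, then
\[
y_{\alpha+\beta+\gamma}-y_\alpha y_\beta y_\gamma=(y_{\alpha+\beta+\gamma}-y_{\alpha+\beta}y_\gamma)+y_\gamma\,(y_{\alpha+\beta}-y_\alpha y_\beta),
\]
and the second summand requires a \emph{non-constant} multiplier $y_\gamma$ in front of a degree-$2$ equality. Writing a linear polynomial as a difference of two SoS polynomials forces those SoS multipliers to have degree $\ge 2$ (degree-$\le 1$ SoS polynomials are nonnegative constants), so the resulting term has degree $4$, not $3$. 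No arrangement of degree-$2$ helper equalities avoids this: somewhere a linear factor must multiply a degree-$2$ equality constraint. So your construction proves the statement with $3$ replaced by $4$, which suffices for the downstream applications but not for the theorem as stated.

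The paper's fix is simple and makes the ``rewriting lemma'' unnecessary: it introduces, in addition to your $\cM\approx A_1$ (the multi-indices with a nonzero coefficient in $p$, the $u_{k,j}$, or the $q_j$), the set $A_2=\{\alpha+\beta+\gamma:\alpha,\beta,\gamma\in A_1\}$, adds a fresh coordinate $y_\delta$ for every $\delta\in A_1\cup A_2$, and takes as multiplicativity constraints exactly the \emph{degree-three} equalities $y_{\alpha+\beta+\gamma}=y_\alpha y_\beta y_\gamma$ for $\alpha,\beta,\gamma\in A_1$. Then the correction term above is a constant linear combination of constraints, each summand has degree exactly $3$, and $p'(y)=\sum_j b'_j(y)q'_j(y)+S$ is already the desired degree-$3$ proof from $\cA'$, with no chaining, no helper variables, and no composition of rewrites. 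The cubic bound follows immediately from $|A_1|\le t\cdot n_{\max}$ and $|A_2|\le|A_1|^3$.
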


Since its proof is purely algebraic we defer it to Section~\ref{sec:linearization_proof}.
After adding new variables and constraints as outlined above, the basic idea is to construct a "linearized version" of all the polynomials involved as alluded to in Section~\ref{sec:techniques}.
We instead turn to proving our main theorem, which now can be stated as follows:
\begin{theorem}
\label{thm:improved_sos_estimation}
Consider the distribution $\cD_{\theta^*}$ and let $y_1, \ldots, y_n \sim \cD_{\theta^*}$, where $\theta^*$ is the true parameter we wish to estimate.
Suppose there exists $t \in \N$ and a system of polynomial inequalities $\cA_{y_1, \ldots, y_n} = \Set{q_1 \geq 0, \ldots, q_m \geq 0}$ in variables $x$ such that $\cA \sststile{t}{x} \Snormt{\theta^* - x} \leq \e$.
\footnote{We assume that it has bit-complexity at most polynomial in $n$ ($n^{\cO(t)}$ is fine too).}
Further, let $n_{\mathrm{max}}$ and $M$ be defined analogously as in Theorem~\ref{thm:linearization}.
Then there exists an algorithm running in time $\Paren{ m + M + (t \cdot n_{\mathrm{max}})^3}^{\cO(1)}$ that returns an estimate $\hat{x}$ such that $\Snormt{\theta^* - \hat{x}} \leq \e + 2^{-\Omega(t \cdot n_{\mathrm{max}})}$.
If $m = \mathrm{poly}(t, n_{\mathrm{max}})$, and $M = \mathrm{poly}(t, n_{\mathrm{max}})$ the running time becomes $\mathrm{poly}(t \cdot n_{\mathrm{max}})$.
\end{theorem}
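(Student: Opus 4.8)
The strategy is to combine the algebraic linearization of Theorem~\ref{thm:linearization} with the efficient pseudo-distribution machinery of \cref{fact:pE_efficient_optimization,fact:pE_and_sos_proofs,fact:pE_cauchy_schwarz}, mirroring the classical meta-theorem of \cite{sos_estimation_survey} but now applied to the \emph{linearized} system rather than the original one. First I would invoke Theorem~\ref{thm:linearization} on the hypothesis $\cA \sststile{t}{x} \Set{\e - \Snormt{\theta^* - x} \geq 0}$ to obtain the injective map $f \colon \R^n \to \R^{n'}$, the linear polynomials, and the system $\cA'$ over $\R^{n'}$ with $\cA' \sststile{3}{y} \Set{\e - p'(y) \geq 0}$, where $p'$ is a linear (in fact affine) function with $p'(f(x)) = \Snormt{\theta^* - x}$ for all $x$, and with the bounds $n' = \cO((t\cdot n_{\mathrm{max}})^3)$ and $m' = m + M + \cO((t\cdot n_{\mathrm{max}})^3)$. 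Crucially the certified polynomial inequality now has constant SoS degree, so the SDP we will solve is over variables of constant degree.

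Next I would run the algorithm of \cref{fact:pE_efficient_optimization} on $\cA'$: since $\cA'$ is feasible (it is satisfied by $f(x)$ for any genuine $x \in \cA$, e.g. $x = \theta^*$ if $\theta^* \in \cA$, or more carefully by applying $f$ to any point of $\cA$ which exists because $\cA$ contains the true parameter with high probability under the sampling), and since by the bit-complexity assumption in the footnote the constraints of $\cA'$ have bit-complexity polynomial in $n'$, in time $(n' + m')^{\cO(1)} = \Paren{m + M + (t\cdot n_{\mathrm{max}})^3}^{\cO(1)}$ we obtain a degree-$d$ pseudo-distribution $\mu$ (for a suitable constant $d$, say $d = 4$) that satisfies $\cA'$ up to error $\eta = 2^{-(n')^{\Theta(1)}}$. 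I would then define the estimate by reading off the first $n'$ pseudo-moments: set $\hat{y}_j \coloneqq \pE_\mu y_j$ for the coordinates of $\R^{n'}$ and let $\hat{x} \in \R^n$ be the vector whose $i$-th coordinate is $\hat{y}_i$ (using property~\ref{linearization_1} of Theorem~\ref{thm:linearization}, the first $n$ coordinates of $f$ are the identity, so these are exactly the pseudo-expectations of the original variables).

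To bound the error, observe that since $p'$ is affine, $p'(\hat{y}) = \pE_\mu p'(y)$ by linearity of $\pE$. By \cref{fact:pE_and_sos_proofs} applied to $\cA' \sststile{3}{y} \Set{\e - p'(y) \geq 0}$ with the trivial SoS multiplier $h \equiv 1$ (and $3 + 0 \leq d$), the approximate satisfaction gives $\pE_\mu (\e - p'(y)) \geq -\eta \cdot \normt{1} = -\eta$, hence $p'(\hat{y}) = \pE_\mu p'(y) \leq \e + \eta$. It remains to relate $p'(\hat{y})$ to $\Snormt{\theta^* - \hat{x}}$. Here is the one genuinely non-trivial point: $p'(f(x)) = \Snormt{\theta^* - x}$ holds as a polynomial identity on the image of $f$, but $\hat{y}$ need not lie in that image, so $p'(\hat{y})$ is only the linearization evaluated at a pseudo-moment vector. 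I would handle this exactly as in the standard meta-theorem: the quadratic $\Snormt{\theta^* - x} = \sum_i (\theta^*_i - x_i)^2$ linearizes, under $f$, to a linear combination of degree-$\leq 2$ monomial-variables; writing $\hat{x}$ for the degree-one pseudo-moments, $\Snormt{\theta^* - \hat{x}} = \sum_i (\theta^*_i - \pE_\mu y_i)^2 = \sum_i (\pE_\mu(\theta^*_i - y_i))^2 \leq \sum_i \pE_\mu (\theta^*_i - y_i)^2$ by \cref{fact:pE_cauchy_schwarz} (Jensen for pseudo-expectations), and $\sum_i \pE_\mu(\theta^*_i - y_i)^2 = \pE_\mu p'(y) = p'(\hat{y}) \leq \e + \eta$, using that $p'$ is precisely the linear functional representing $\sum_i (\theta^*_i - y_i)^2$ on monomial-variables of degree two. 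Thus $\Snormt{\theta^* - \hat{x}} \leq \e + \eta = \e + 2^{-(n')^{\Theta(1)}} = \e + 2^{-\Omega(t\cdot n_{\mathrm{max}})}$, and the final running-time claims follow by substituting $m, M = \poly(t, n_{\mathrm{max}})$ into the expression above.

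**Main obstacle.** The delicate step is the last one: matching up the affine functional $p'$ coming from Theorem~\ref{thm:linearization} with the quadratic $\Snormt{\theta^*-x}$ so that evaluation at the pseudo-moment vector is controlled by a pseudo-expectation to which \cref{fact:pE_cauchy_schwarz} applies. This requires that Theorem~\ref{thm:linearization}'s construction of $p'$ actually sends each monomial $x_i x_j$ of the target quadratic to the corresponding degree-two linearization variable $y_{ij}$ (not to some other encoding), and that $\mu$ satisfies, for each $i$, the implied inequality $\pE_\mu y_{ii} \geq (\pE_\mu y_i)^2$ — i.e., that the degree-two variables dominate the squares of degree-one ones. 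The first is ensured by property~\ref{linearization_3} of Theorem~\ref{thm:linearization} together with inspecting the explicit form of $p'$; the second holds because $\mu$ is a genuine degree-$d$ pseudo-distribution (positivity on squares of degree-$\leq d/2$ polynomials is satisfied \emph{exactly}, per the footnote to \cref{fact:pE_efficient_optimization}), so $(y_i - \pE_\mu y_i)^2$ being a square forces $\pE_\mu y_{ii} \geq (\pE_\mu y_i)^2$ provided $\cA'$ (or $\cB$-type constraints) record $y_{ii}$ as the square of $y_i$; if Theorem~\ref{thm:linearization} does not already include such "square-consistency" constraints for the identity coordinates, I would add the $\cO(n)$ constraints $y_{ii} - y_i^2 = 0$ up front (they are implied by the $\cB$-membership constraints sketched in Section~\ref{sec:techniques} and do not affect the degree or the variable count asymptotically). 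Everything else is a routine bookkeeping application of the cited facts.
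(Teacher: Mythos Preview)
Your proposal is correct and follows essentially the same route as the paper: apply Theorem~\ref{thm:linearization} to obtain $\cA'$, $f$, $p'$; compute a constant-degree pseudo-distribution $\mu$ satisfying $\cA'$ approximately via Fact~\ref{fact:pE_efficient_optimization}; set $\hat x_i = \pE_\mu y_i$; and conclude by Cauchy--Schwarz for pseudo-expectations together with $\cA'\sststile{3}{y} p'(y)\ge 0$. The paper's proof is slightly terser---it writes the chain $\e-\Snormt{\theta^*-\hat x}\ge \e-\pE_\mu\Snormt{\theta^*-x}=\pE_\mu p(x)=\pE_\mu p'(f(x))\ge -2^{-\Omega(n')}$ and invokes Point~\ref{linearization_3} for the middle equality---whereas you explicitly isolate as the ``main obstacle'' exactly the step hidden in that equality, namely that $\pE_\mu y_{2e_i}=\pE_\mu y_i^2$ (so that the linearized $p'$ and the genuine quadratic agree under $\pE_\mu$); this is handled by the consistency constraints in $\cA'_2$ of Theorem~\ref{thm:linearization} (or, as you note, can be enforced by adding the $\cO(n)$ degree-two constraints $y_{2e_i}=y_i^2$ without affecting any asymptotics), so your extra discussion is a justified elaboration of a point the paper leaves implicit rather than a departure from its argument.
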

\begin{proof}
Let $p(x) = \e - \Snormt{\theta^* - x}$, then we have $\cA \sststile{t}{x} \Set{p(x) \geq 0}$.
Let $f, \cA', $ and $p'$ be the objects given by the conclusion of Theorem~\ref{thm:linearization} and let $\mu$ be a degree-$3$ pseudo-distribution over $\R^{n'}$ that satisfies $\cA'$ up to error $2^{-\Omega(n')}$, for $n'$ as in Theorem~\ref{thm:linearization}.
We know that $n' = \Omega(t \cdot n_{\mathrm{max}})$.
By Fact~\ref{fact:pE_efficient_optimization} this can be found in time $\Paren{n + m + M + (t \cdot n_{\mathrm{max}})^3}^{\cO(1)} = \Paren{m + M + (t \cdot n_{\mathrm{max}})^3}^{\cO(1)}$.
Define $\hat{x} \in \R^n$ by setting $\hat{x}_i = \pE_\mu x_i = \pE_\mu (f(x))_i $ for $i \in \brac{n}$.
We get:
\begin{align*}
&\e - \Snormt{\theta^* - \hat{x}} = \e - \sum_{i=1}^n \Paren{\pE_\mu \theta^*_i - x_i}^2 \geq \e - \sum_{i=1}^n \pE_\mu \Paren{\theta^*_i - x_i}^2 = \e - \pE_\mu \Snormt{\theta^* - x} \\
&= \pE_\mu p(x) = \pE_\mu p'(f(x)) \geq -2^{-\Omega(n')}
\end{align*}
where the first inequality follows by Cauchy-Schwarz for pseudo-expectations (cf. Fact~\ref{fact:pE_cauchy_schwarz}) and the last equality from Point~\ref{linearization_3} in Theorem~\ref{thm:linearization}.
The last inequality is due to the fact that $\mu$ satisfies $\cA'$ approximately and is of degree 3.
\end{proof}

We pause a second to discuss the running time guarantees of our new theorem and to compare them to what is given by Raghavendra et al.~\cite{sos_estimation_survey}:
Typically have $m = \mathrm{poly}(n)$ and $M = \mathrm{poly}(m)$ since often our sum-of-squares proof will only multiply at most a constant number of constraints.
Hence, our new running time becomes $\mathrm{poly}(t \cdot n_{\mathrm{max}})$, whereas previously we had only $\mathrm{poly}(n^t)$.
Hence, we can see that the exponential dependence on $t$ was replaced by a multiplicative one and the dominating factor now becomes $n_{\mathrm{max}}$.
In the worst case, we have $n_{\mathrm{max}} = n_t = n^t$ and thus we have gained nothing.
However, in all other cases we are strictly improving.
Concretely, as soon as $n_{\mathrm{max}} \ll n^t$ we also have $\mathrm{poly}(t \cdot n_{\mathrm{max}}) \ll \mathrm{poly}(n^t)$.
Where by $n_{\mathrm{max}} \ll n^t$ we are assuming that the difference is large enough to swamp the difference which might be introduced by the different powers hidden in the $\mathrm{poly}(\cdot)$ term.
These could be made explicit, but in any case are very small.
We are thus proposing a new analysis offering a more fine-grained control over the running-time.
In some cases, including the two examples we will investigate in the rest of this paper, we even have $n_{\mathrm{max}} = \mathrm{poly}(n)$ and hence we get a running time of $\mathrm{poly}(t,n)$ which drastically improves over the previous guarantee!

\section{Clustering}
\label{sec:clustering}

In this section, we aim to proof Theorem~\ref{thm:clustering_final}.
We also remark that the theorem continues to hold in the setting where each component of the mixture is not Gaussian but only $t$-explicitly bounded as introduced by~\cite{Hopkins_Li_clustering}.
The formal definition is as follows:
\begin{definition}[$t$-explicitly Bounded Distribution,~\cite{Hopkins_Li_clustering}]
\label{def:t_explicitly_bounded}
Let $t \in \N$.
A distribution $\cD$ over $\R^d$ with mean $\mu$ is called \emph{$t$-explicitly bounded} with variance proxy $\sigma$ if for each even integer $s$ such that $1 \leq s \leq t$ the following equation has a degree $s$ sum-of-squares proof (in variables $u$):
\begin{align*}
\sststile{s}{u} \Esymb_{X \sim \cD} \iprod{X - \mu, u}^2 \leq (\sigma s)^{s/2} \normt{u}^s
\end{align*}
\end{definition}
Note, that any $t$-certifiably $C$-subgaussian distribution is $t$-explicitly bounded with variance proxy $C \cdot \sigma_{\mathrm{max}}$, where $\sigma_{\mathrm{max}}$ is the largest singular value of the covariance matrix of $\cD$.
\cite{Kothari_Steinhardt_Steurer_robust_moment_estimation} shows that a large class of distributions satisfy are $t$-certifiably $C$-subgaussian and hence also $t$-explicitly bounded.
Namely, all distributions satisfying the Poincar\'e-Inequality, which not only includes products of sub-gaussian distributions but also strongly log-concave distributions and Lipschitz continuous transformations of such.

\subsection{Obtaining intermediate estimates}

We will first solve a slightly different problem, which in turn will then allow us to obtain our final clustering.
Recall, that we are given samples $y_1, \ldots, y_n$ and the real means of the mixture components are denoted by $\mu^*_1, \ldots, \mu^*_k$.
For each $r \in \brac{n}$ let $k^*_r$ denote the index of the component the sample $y_r$ was generated from.
Our samples then can be written as $y_r = \mu^*_{k^*_r} + w_r^*$ where $w^*_r$ is a $d$-dimensional standard Gaussian.
Using matrix notation, we can formulate this more compactly as follows:
\begin{align*}
Y = X^* + W^*
\end{align*}
where $Y, X^*$ and $W^* \in \R^{d \times n}$ have as columns the $y_r$s, $\mu^*_r$s, and $w_r^*$s respectively.
The intermediate problem we consider can now be stated as follows:
For each sample $y_r$ we wish to find a mean estimate $\hat{\mu}_r$ that ideally is close to $\mu^*_{k^*_r}$.
Grouping all our mean estimates $\hat{\mu}_r$ in a matrix $\hat{X} \in \R^{d \times n}$, the error metric we will seek to minimize then can be expressed as:
\begin{align*}
\frac{1}{n} \Snorm{X^* - \hat{X}}_F = \frac{1}{n} \sum_{r = 1}^n \Snormt{\mu^*_{k^*_r} - \hat{\mu}_r}
\end{align*}
To give some intuition why this might be interesting consider the following argument:
If the above error is small our estimates $\hat{\mu}_r$ lie close to one of the real means on average.
Also, if it is small enough in comparison with the minimum distance $\Delta$, they are close enough to each other so that we can apply a simple distance based clustering to recover the clusters with high accuracy.
This idea will be made rigorous in Section~\ref{sub_sec:clustering_final}.
First, we will show the following theorem:
\begin{theorem}
\label{thm:clustering_subset}
For every even $t$ there exists an algorithm that given $n = k^{\cO(1)}d^{\cO(t)}$ samples from a $k$-component mixture of $t$-explicitly bounded distributions (with variance proxy $1$) in dimension $d$ finds an assignment of samples to clusters in time $n^{\cO(1)}$ such that $\frac{1}{\card{S}} \sum_{r \in S} \snormt{\mu_{k^*_r}^* - \hat{\mu}_{k_r}} \leq 8 k^{(4+2c)/t} \cdot t$ for all $S \subseteq \brac{n}$ such that $\card{S} \geq \frac{n}{k^c}$ and $c \in \N_{\geq 0}$.
\end{theorem}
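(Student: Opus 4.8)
The plan is to set up a sum-of-squares relaxation whose solution, when rounded, yields the desired intermediate mean estimates, and then invoke the degree-reduction machinery of \cref{thm:improved_sos_estimation} to obtain the claimed $n^{\cO(1)}$ running time. First I would introduce the system of polynomial inequalities $\cA$ in variables $X \in \R^{d \times n}$ (the candidate mean matrix), together with auxiliary cluster-indicator variables $z_1, \ldots, z_k$ with $z_j \in \set{0,1}$ (realized via $z_j^2 = z_j$) and assignment variables capturing which sample is assigned to which candidate cluster. The key constraints are: (i) the clusters partition $\brac{n}$ into $k$ groups of size $n/k$ each; (ii) within each candidate cluster the empirical distribution of the $y_r$'s satisfies the $t$-explicitly-bounded moment bound of \cref{def:t_explicitly_bounded}, i.e.\ $\sststile{t}{u} \frac{k}{n}\sum_{r \in \text{cluster}} \iprod{y_r - \mu, u}^{t} \le (t)^{t/2}\normt{u}^t$ where $\mu$ is the candidate cluster mean; and (iii) $\mu_r$ equals the mean of the candidate cluster containing $r$. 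Only the moment-bound constraints (ii) involve monomials of degree $> \cO(1)$, and crucially they involve only the $k$ candidate mean vectors (one per cluster) raised to power $t$, so $n_{\mathrm{max}} = k \cdot d^{\cO(t)} = \poly(n)$ and $m, M = \poly(n)$.

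The heart of the argument is the identifiability proof: I would show $\cA \sststile{t}{} \frac{1}{n}\Snorm{X^* - X}_F \le 8 k^{(4+2c)/t}\cdot t$ restricted to any large subset, mimicking the classical clustering identifiability arguments of \cite{Hopkins_Li_clustering,Kothari_Steinhardt_Steurer_robust_moment_estimation}. The core step is the standard "intersection bound": if a true cluster $S_j^*$ and a candidate cluster $T_i$ overlap in an $\alpha$-fraction, then by applying the moment bound in the direction $u = \mu^*_j - \mu(T_i)$ (the difference of true and candidate means) on both the true side and the candidate side, one derives that $\alpha \cdot \normt{\mu^*_j - \mu(T_i)}^t \le \cO(t)^{t/2}\normt{\mu^*_j - \mu(T_i)}^t / \Delta^t \cdot (\text{stuff})$, which forces either the overlap $\alpha$ to be small or the means to be close; quantitatively, with $\Delta \ge \cO(\sqrt{k^{1/t} t})$ this yields $\normt{\mu^*_j - \mu(T_i)}^2 \le \cO(k^{1/t} t)$ whenever the overlap is $\Omega(1/\poly(k))$. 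Summing this over all sample-cluster pairs, weighting by overlap sizes, and using that the candidate clusters are balanced, gives the averaged bound $\frac{1}{\card{S}}\sum_{r\in S}\snormt{\mu^*_{k^*_r} - \hat\mu_{k_r}} \le 8k^{(4+2c)/t}\cdot t$ for every $S$ with $\card{S}\ge n/k^c$; the extra $k^{2c/t}$ slack absorbs the contribution of the (few) badly-clustered points. Every step here — the moment-bound manipulations, Cauchy–Schwarz, and the counting — must be carried out as an SoS proof of degree $\cO(t)$, which is routine given the certifiable moment bounds but requires care.

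Once the degree-$\cO(t)$ SoS identifiability proof is in hand, I would feed it into \cref{thm:improved_sos_estimation} (more precisely, the slightly more general version one obtains by the same argument applied to the subset-indexed guarantee rather than a single $\snormt{\cdot}$ bound): since $n_{\mathrm{max}}, m, M = \poly(n)$, the degree-reduction theorem produces a degree-$3$ pseudo-distribution in $\poly(n)$ variables computable in time $\poly(n) = n^{\cO(1)}$, from which we read off $\hat{X}$ by taking pseudo-expectations of the linear variables. The last step is to extract an actual assignment of samples to clusters from $\hat{X}$: this is a simple distance-based post-processing — e.g.\ a single-linkage-style threshold clustering at scale $\Theta(\sqrt{k^{1/t}t}) \ll \Delta$ — and the averaged-error guarantee above ensures it recovers clusters correctly on all but a $1/\poly(k)$ fraction, as promised. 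I expect the main obstacle to be the book-keeping in the SoS identifiability proof: ensuring the intersection/counting argument is genuinely sum-of-squares at the right degree (in particular handling the division by overlap sizes and the indicator variables inside SoS), and verifying that exactly the claimed handful of high-degree monomials appear so that the $n_{\mathrm{max}} = \poly(n)$ bound — and hence the whole running-time improvement — actually goes through.
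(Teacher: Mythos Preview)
Your overall architecture --- set up the polynomial system $\cA$, prove a degree-$\cO(t)$ SoS identifiability statement, then invoke \cref{thm:improved_sos_estimation} with $n_{\max}, m, M = \poly(n)$ --- matches the paper. The gap is in the identifiability proof itself. Your ``intersection bound'' argument explicitly uses the minimum separation $\Delta \geq \cO(\sqrt{k^{1/t}t})$ to force the dichotomy ``small overlap or close means'', but \cref{thm:clustering_subset} has \emph{no separation hypothesis}: the bound $8k^{(4+2c)/t}\cdot t$ must hold for any mixture of $t$-explicitly bounded components regardless of $\Delta$ (separation enters only later, in \cref{lem:intermediate_clusters}). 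So as written you would prove a strictly weaker theorem. The case-split and division-by-overlap reasoning you sketch is also exactly the part that is hardest to make low-degree SoS, as you yourself note.

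The paper's identifiability argument is different and simpler. Passing to refined indicators $z_r^{i,j} = z_r^i z_r^{*j}$ and error vectors $v_r = \mu_{k_r} - \mu^*_{k^*_r}$, the key observation is that $v_r$ is constant on each block $S_i\cap S_j^*$, i.e.\ $\cA\sststile{}{} z_r^{i,j}z_s^{i,j}(v_r-v_s)=0$. Plugging the \emph{variable} direction $u = z_s^{i,j}v_s$ into the moment bound and using this constancy yields the self-bounding inequality $\bigl(\Esymb_{r\sim S} z_r^{i,j}\norm{v_r}_2^t\bigr)^2 \le 2k^c(4t)^{t/2}\cdot \Esymb_{r\sim S} z_r^{i,j}\norm{v_r}_2^t$, whence $\Esymb_{r\sim S} z_r^{i,j}\norm{v_r}_2^t \le 2k^c(4t)^{t/2}$ via the SoS fact $\{X\ge 0,\,X^2\le CX\}\sststile{2}{}X\le C$. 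An AM--GM step then converts this $t$-th power bound to the squared-norm bound, summing over $i,j$; no separation is used and no overlap divisions occur. (Also: the distance-based post-processing you describe at the end is not part of this theorem --- it belongs to the proof of \cref{thm:clustering_final}.)
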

Choosing $c = 0$ and $S = \brac{n}$ we recover the setting described above.

We will prove this using Theorem~\ref{thm:improved_sos_estimation} and hence we search for a system of polynomial inequalities $\cA$ and a SoS proof that $\cA \sststile{\cO(t)}{} \frac{1}{\card{S}} \sum_{r \in S} \snormt{\mu_{k^*_r}^* - \mu_{k_r}} \leq 8 k^{(4+2c)/t} \cdot t$.
The difference between a degree-$t$ and degree-$\cO(t)$ proof here is negligible and does not change any of the arguments.
Here we also remark that the error of $2^{-\Omega(n)}$ introduced by Theorem~\ref{thm:improved_sos_estimation} is negligible in comparison to $8 k^{(4+2c)/t} \cdot t = \Omega(\log k) = \omega(1)$.
Also it will be easy to check that we have $m = \mathrm{poly}(n)$ and $M = \mathrm{poly}(m)$.
We will verify at the end that we also have $n_{\text{max}} = \mathrm{poly}(n)$ and hence we are guaranteed that we can compute our estimates in time $n^{\cO(1)}$.
We also remark that although our SoS proof will be different for each $S \sse \brac{n}$ the estimates we compute are the same for all of them; the SoS proofs should be seen as a certificate that they yield a good approximation for various $S$.

The system $\cA$ we introduce tries to capture the Clustering Problem and most notably the fact that the uniform distribution over every cluster is $t$-explicitly bounded.
We remark that this is essentially the same as in~\cite{Hopkins_Li_clustering} but our sum-of-squares proof differs substantially from theirs.
In their paper they also discuss feasibility \footnote{They also show that we can include the constraint $\snormt{x} \leq n^{\cO(1)}$ without changing any arguments and thus making the system $\cA$ explicitly bounded as required by Fact~\ref{fact:pE_efficient_optimization}}.
Let $y_1, \ldots, y_n$ be the samples our algorithm is given and consider the following system of polynomial inequalities in scalar-valued variables $z_r^j$ and vector-valued variables $\mu_j$:
\begin{align*}
\cA \coloneqq	\left\{ 
					\begin{array}{l}
						\forall r \in [n] \sum_{j = 1}^k z_r^j = 1 \\
						\forall j \in [k] \sum_{r = 1}^n z_r^j = \frac{n}{k} \\
						\forall r \in [n], i,j \in [k], i \neq j: z_r^i \cdot z_r^j = 0 \\
						\forall j \in [k], r \in [n]: z_r^j = (z_r^j)^2 \\
						\forall j \in [k]: \forall u \in \R^d: \frac{k}{n} \sum_{r = 1}^n z_r^j \iprod{ y_r - \mu_{k_r}, u }^t \leq 2t^{t/2} \norm{u}_2^t
					\end{array}
				\right\}
\end{align*}
where we define $\mu_{k_r} \coloneqq \sum_{j=1}^k z_r^j \mu_j$.

Intuitively, we can think of the $z_r^j$ variables as assigning each sample to a cluster, i.e., $z_r^j = 1$ if we assign the $r$-th point to the $j$-th cluster and $0$ otherwise.
The above system encodes an assignment of samples to clusters such that they behave \emph{Gaussian like} which is formalised by the moment constraint (last line of $\cA$).
Note, that this last constraint doesn't fit the model we introduced in Section~\ref{sec:preliminaries}, however, later we will show how to phrase it in a way that does.
Since this is a technicality and distracts from the essence of the proof, we will defer this discussion to Section~\ref{sec:toolkit}.
Essentially, we will not only require that it be true for all $u$ but also that this is certified by an SoS proof (in variables $u$).

To give a proof which is as simple as possible, we first make the following observations:
Consider any solution to the system $\cA$.
We now refine the partition of $[n]$ given by our predicted clusters $S_i = \Set{y_j \vert z_r^i = 1}$ by intersecting them with the real clusters $S^*_j$. For $r,r' \in S_i \cap S^*_j$ we observe that $\mu_{k_r} = \mu_{k_{r'}} = \mu_i$ and hence
\begin{align*}
&y_r = \mu_i + w_r = \mu^*_j + w^*_r \\
&y_{r'} = \mu_i + w_{r'} = \mu^*_j + w^*_{r'}
\end{align*}
where we define $w_r \coloneqq y_r - \mu_{k_r}$ and $w_{r'} \coloneqq y_{r'} - \mu_{k_{r'}}$.
Hence, we have
\begin{align*}
\mu_i - \mu^*_j = w^*_r - w_r = w^*_{r'} - w_{r'}
\end{align*}

Since this is the essence of our later proofs, we would like to introduce the variables $v_r \coloneqq w^*_{r} - w_r = \mu_i - \mu_j^*$ for $r \in [n]$.
The above equation tells us that the $v_r$ are constant on each of the sets $S_i \cap S_j^*$.
Furthermore, since the $w^*_r$'s are standard Gaussians and the $w_r$'s are supposed to behave like standard Gaussians, we also expect the $v_r$'s to behave like standard Gaussians (except for the constancy on $S_i \cap S_j^*$ of course).
Defining $z_r^{*j} = 1$ if $y_r \in S_j^*$ and $0$ otherwise we introduce the new variables $z_r^{i,j} = z_r^i \cdot z_r^{*j}$.
Using composability of SoS proofs, we will show that $\cA$ implies the following system $\cB$ and hence, it will be enough to show that this one implies the desired error guarantee as well.
\footnote{Note that in general $\cA \sststile{t}{} \cB$ and $\cB \sststile{t'}{} p \geq 0$ only imply $\cA \sststile{t \cdot t'}{} p \geq 0$. However, a simple observation shows that if the second SoS proof only multiplies a constant number of constraints together and $t' = \cO(t)$ we can conclude $\cA \sststile{\cO(t)}{} p \geq 0$.}
\begin{align*}
\cB \coloneqq 	\left\{
					\begin{array}{l}
						\forall r \in [n], i,j \in [k]: z_r^{i,j} = (z_r^{i,j})^2 \\
						\forall r \in [n] \sum_{i=1}^{k} \sum_{j=1}^k z_r^{i,j} = 1 \\
						\forall i,j \in [k], r,s \in [n]: z_r^{i,j} z_s^{i,j} (v_r - v_s) = 0 \\
						\forall u \in \R^d: \Esymb_{r \sim [n]} [ \iprod{ v_r, u }^t ] \leq 2 \cdot (4t)^{t/2} \norm{u}_2^t
					\end{array}
				\right\}
\end{align*}

Note, that given any solution to the system $\cA$ we are not able to actually write down the values for these variables since they depend on the true values of the $w_r^*$s and $\mu_j^*$s.
However, we will be able to prove some properties about them which will help us in the proof of our main theorem.
It is is easily seen that the first three of the above constraints are implied by $\cA$ using the previous discussion.
To see why the last equation holds, we use that the $w_r$ and $w_r^*$ both satisfy the $t$-explicitly boundedness constraint and there is a SoS proof for this - see Section~\ref{sec:toolkit} for a more thorough definition.
We can then use the SoS Triangle Inequality to conclude.
Further, we there show as well that in the subguassianity constraint we can also instantiate $u$ as having entries of polynomial functions of our remaining variables.

It is hence enough to show that $\cB \sststile{\cO(t)}{} \frac{1}{\card{S}} \sum_{r \in S} \snormt{\mu_{k^*_r}^* - \mu_{k_r}} \leq 8 k^{(4+2c)/t} \cdot t$.
Our main tool will be the following technical lemma giving us a bound on the expected $t$-norm of our vectors $v_r$.
We defer the proof to the end of this section.
In the following let $S \subseteq \brac{n}$ such that $\card{S} \geq \frac{n}{k^c}$ for some $c \in \N_{\geq 0}$.
It might be instructive to think of $S = \brac{n}$ when reading the proof as this case already contains all of the main ideas.
The more general case can be easily obtained from here.
By $r \sim S$ we denote sampling an index $r \in S$ uniformly at random.
\begin{lemma}
\label{main_lemma}
For all $i,j \in [k]$ and even $t$ it holds that $\cB \sststile{\cO(t)}{} \{\Esymb_{r \sim S} z_r^{i,j} \norm{v_r}_2^t \leq 2 \cdot k^c \cdot (4t)^{t/2}\}$.
\end{lemma}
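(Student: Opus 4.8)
The plan is to first remove the restriction to $S$ and reduce to a dimension‑free statement over all of $[n]$, and then to prove that statement by feeding a cluster direction back into the moment constraint of $\cB$. Fix the pair $(i,j)$, abbreviate $z_r\coloneqq z_r^{i,j}$ and $\kappa\coloneqq 2\cdot(4t)^{t/2}$, and set
\[
A\coloneqq\Esymb_{r\sim[n]}z_r\norm{v_r}_2^t,\qquad A_2\coloneqq\Esymb_{r\sim[n]}z_r\norm{v_r}_2^{2t},\qquad \beta\coloneqq\Esymb_{r\sim[n]}z_r .
\]
Since $z_r=(z_r)^2$ and $\norm{v_r}_2^t=\iprod{v_r,v_r}^{t/2}$ is a sum of squares, the polynomial $z_r\norm{v_r}_2^t$ is itself a sum of squares, so $\cB\sststile{\cO(t)}{}z_r\norm{v_r}_2^t\ge 0$ for every $r$; discarding the indices outside $S$ and using $n/\card{S}\le k^{c}$ gives $\cB\sststile{\cO(t)}{}\Esymb_{r\sim S}z_r\norm{v_r}_2^t\le k^{c}A$. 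Hence it suffices to establish $\cB\sststile{\cO(t)}{}A\le\kappa$, and multiplying through by $k^{c}$ then yields the lemma.

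The first ingredient is a \emph{constancy identity}: since $z_rz_s(v_r-v_s)=0$ is an equality constraint it may be multiplied by arbitrary polynomials inside an SoS proof, and multiplying its $\ell$‑th coordinate by $z_rz_sv_r^{(\ell)}$, summing over $\ell\in[d]$, and using $z_r^2=z_r$ gives $\cB\sststile{\cO(1)}{}z_rz_s\iprod{v_r,v_s}=z_rz_s\norm{v_r}_2^2$ (the two sides differ by a polynomial multiple of the equality constraints); iterating yields $\cB\sststile{\cO(m)}{}z_rz_s\iprod{v_r,v_s}^m=z_rz_s\norm{v_r}_2^{2m}$ for all $m\le t$. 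The second ingredient is the moment constraint of $\cB$, which — as made precise by the SoS formulation of \cref{sec:toolkit} — may be instantiated with $u$ equal to any vector of polynomials in the remaining variables. I plan to instantiate $u\coloneqq v_r$ for each fixed $r$, giving $\cB\sststile{\cO(t)}{}\Esymb_{r'\sim[n]}\iprod{v_{r'},v_r}^t\le\kappa\,\norm{v_r}_2^t$. Multiplying by $z_r\ge 0$, lowering the left‑hand side by inserting the idempotent factor $z_{r'}$ (legitimate because $\iprod{v_{r'},v_r}^t$ is a square and $0\le z_{r'}\le 1$ has a low‑degree SoS proof from $\sum_{i',j'}z_{r'}^{i',j'}=1$), and then applying the constancy identity with $m=t$ yields, for each $r$,
\[
\beta\,z_r\norm{v_r}_2^{2t}\;=\;z_r\,\Esymb_{r'\sim[n]}z_{r'}\iprod{v_{r'},v_r}^t\;\le\;\kappa\,z_r\norm{v_r}_2^t ;
\]
averaging over $r\sim[n]$ then gives $\cB\sststile{\cO(t)}{}\beta A_2\le\kappa A$.

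To close the loop I combine this with the Cauchy--Schwarz inequality for the ordinary average over $r\in[n]$ (an SoS identity, no pseudo‑expectation needed): writing $z_r\norm{v_r}_2^t=z_r\cdot(z_r\norm{v_r}_2^t)$ and using $z_r^2=z_r$ gives $\cB\sststile{\cO(t)}{}A^2\le\beta A_2$, hence $\cB\sststile{\cO(t)}{}A^2\le\kappa A$, that is, $\kappa A-A^2\ge 0$. The final step is the identity
\[
\kappa-A\;=\;\frac{1}{\kappa}\,(\kappa A-A^2)\;+\;\frac{1}{\kappa}\,(\kappa-A)^2,
\]
which writes $\kappa-A$ as a nonnegative combination of a square and the already‑certified nonnegative polynomial $\kappa A-A^2$; therefore $\cB\sststile{\cO(t)}{}A\le\kappa$, completing the reduction. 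Along the way one checks that each manipulation multiplies only a bounded number of inequality constraints of $\cB$ and keeps the total degree $\cO(t)$, so the composition with $\cA\sststile{t}{}\cB$ also stays at degree $\cO(t)$.

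The step I expect to be the main obstacle is the middle one: the cluster direction $v_r$ is a priori unbounded, so feeding it back into the moment constraint is only useful after the weights $z_{r'}$ are inserted so that the inequality actually ``sees'' the cluster $S_i\cap S_j^*$ — and doing this unavoidably drags the cluster density $\beta$ into the bound, which is precisely why the Cauchy--Schwarz estimate $A^2\le\beta A_2$ is needed to cancel it out again. The rest is routine care: verifying that the substitution $u\coloneqq v_r$ is licensed by the SoS form of the moment bound, that the auxiliary sign facts ($z_r\ge 0$, $z_{r'}\le 1$, $z_r\norm{v_r}_2^t\ge 0$) have low‑degree SoS proofs, and that the degree stays $\cO(t)$ throughout.
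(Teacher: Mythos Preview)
Your proof is correct and follows the same overall strategy as the paper's: instantiate the moment constraint with a cluster direction, use the constancy constraint to collapse the inner product to a norm, obtain a quadratic inequality $X^2\le CX$, and conclude $X\le C$. The implementation differs in one tactical choice. The paper substitutes $u=z_s^{i,j}v_s$ (with the indicator built into the test direction) and works over $S$ throughout; because $\norm{z_s^{i,j}v_s}_2^t=z_s^{i,j}\norm{v_s}_2^t$, the right-hand side of the moment bound already carries the factor $z_s^{i,j}\norm{v_s}_2^t$, and averaging over $s\in S$ immediately gives $X^2\le (k^c\kappa)X$ for $X=\Esymb_{r\sim S}z_r^{i,j}\norm{v_r}_2^t$ without any density $\beta$ appearing. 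You instead substitute $u=v_r$, multiply by $z_r$ afterwards, and this forces the cluster density $\beta$ into the intermediate bound $\beta A_2\le\kappa A$; your extra Cauchy--Schwarz step $A^2\le\beta A_2$ then cancels it. Both routes are valid and stay at degree $\cO(t)$; the paper's choice of $u$ is simply a shortcut that avoids the detour through $\beta$ and $A_2$. Your final algebraic identity $\kappa-A=\tfrac{1}{\kappa}(\kappa A-A^2)+\tfrac{1}{\kappa}(\kappa-A)^2$ is exactly what underlies the paper's \cref{toolkit:lemma_2}.
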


Writing our error in terms of the refinement $S_i \cap S_j^*$ and letting $\gamma = \Paren{ 2k^{2+c} \cdot (4t)^{t/2} }^{-\frac{2}{t}} > 0$ we get:
\begin{align*}
&\cB
\sststile{\cO(t)}{} \frac{1}{\card{S}} \sum_{r \in S} \norm{\mu_{k^*_r}^* - \mu_{k_r}}_2^2 
= \frac{1}{\card{S}} \sum_{i,j = 1}^k \sum_{r \in S} z_r^{i,j} \norm{\mu_i - \mu_j^*}_2^2 = \sum_{i,j = 1}^k \frac{1}{\card{S}} \sum_{r \in S} z_r^{i,j} \norm{v_r}_2^2 \\
&\leq \sum_{i,j=1}^k \frac{1}{\card{S}} \sum_{r \in S} \Paren{ \frac{2}{t} \gamma^{\frac{t}{2}-1} z_r^{i,j} \norm{v_r}_2^t + \frac{t-2}{t} \frac{1}{\gamma} z_r^{i,j} } = \frac{2}{t} \gamma^{\frac{t}{2}-1} \sum_{i,j=1}^k \Esymb_{r \sim S} z_r^{i,j} \norm{v_r}_2^t + \frac{t-2}{t} \frac{1}{\gamma} \frac{1}{\card{S}} \sum_{r \in S} \sum_{i,j=1}^k z_r^{i,j} \\
&\leq \frac{2}{t} \gamma^{\frac{t}{2}-1} ( 2 k^{c+2} \cdot (4t)^{t/2} ) + \frac{t-2}{t} \frac{1}{\gamma} = \frac{1}{\gamma} = 4 \sqrt[t]{4} \cdot k^{(4+2c)/t} \cdot t \leq 8k^{(4+2c)/t} \cdot t
\end{align*} 
where for the first inequality we used Lemma~\ref{toolkit:lemma_3} with $\gamma$ as above and for the second Lemma~\ref{main_lemma}.

We will finish this section by giving the proof of Lemma~\ref{main_lemma}:
\begin{proof}[Proof of Lemma~\ref{main_lemma}]
From the $t$-explicitly boundedness constraint in $\cB$ and since $t$ is even we know that
\begin{align*}
\forall u \in \R[z^{i,j},v_r]_{\leq 2}: \cB \sststile{\cO(t)}{} \Esymb_{r \sim S} \iprod{ v_r, u }^t =  \frac{n}{n} \cdot \frac{1}{\card{S}} \sum_{r \in S} \iprod{ v_r, u }^t \leq \frac{n}{\card{S}} \cdot \frac{1}{n} \sum_{r = 1}^n \iprod{ v_r, u }^t \leq k^c \cdot 2 (4t)^{t/2} \norm{u}_2^t
\end{align*}
where by $\R[z^{i,j},v_r]_{\leq 2}$ we denote polynomials in $z^{i,j}$ and $v_r$ of degree at most 2.
Fix $i,j \in [k]$ and fix $s \in [n]$.
Since $t$ is even, choosing $u = z_s^{i,j} v_s$ yields the following lower bound:
\begin{align*}
&\cB \sststile{\cO(t)}{} \frac{1}{\card{S}} \sum_{r \in S} \iprod{ v_r, u }^t \geq \frac{1}{\card{S}} \sum_{r \in S} z_r^{i,j }\iprod{ v_r, u }^t = \frac{1}{\card{S}} \sum_{r \in S}^n z_r^{i,j} z_s^{i,j} \iprod{ v_r, v_s }^t = \frac{1}{\card{S}} \sum_{r \in S} z_r^{i,j} z_s^{i,j} \norm{v_r}_2^t \norm{v_s}_2^t \\
&= z_s^{i,j} \norm{v_s}_2^t \cdot \Paren{ \Esymb_{r \sim S} z_r^{i,j} \norm{v_r}_2^t }
\end{align*} 
For the first inequality we used that $t$ is even and that $\cB \sststile{}{} z_r^{i,j} \leq 1$.
For the second equality we used the equality $z_r^{i,j} z_s^{i,j} (v_r - v_s) = 0$.
Combining this with the upper bound from our axioms and using that $ (z_s^{i,j})^2 = z_s^{i,j}$, we conclude:
\begin{align*}
\cB \sststile{\cO(t)}{} z_s^{i,j} \norm{v_s}_2^t \cdot \Paren{ \Esymb_{r \sim S} z_r^{i,j} \norm{v_r}_2^t }
\leq k^c \cdot 2(4t)^{t/2} \norm{z_s^{i,j} v_s}_2^t = k^c \cdot 2 (4t)^{t/2} z_s^{i,j} \norm{v_s}_2^t
\end{align*}

Averaging over all $s \in S$ we get:
\begin{align*}
\cB \sststile{\cO(t)}{} \Paren{ \Esymb_{r \sim S} z_r^{i,j} \norm{v_r}_2^t } \cdot  \Paren{ \Esymb_{r \sim S} z_r^{i,j} \norm{v_r}_2^t } \leq k^c \cdot 2 (4t)^{t/2} \cdot \Paren{ \Esymb_{r \sim S} z_r^{i,j} \norm{v_r}_2^t }
\end{align*}
and since $\cB \sststile{\cO(t)}{} \Esymb_{r \sim S]} z_r^{i,j} \norm{v_r}_2^t = \frac{1}{\card{S}} \sum_{r \in S} z_r^{i,j} \norm{v_r}_2^t  \geq 0$ we can apply Lemma~\ref{toolkit:lemma_2} with $C = k^c \cdot 2(4t)^{t/2}$ to conclude that $\cB \sststile{\cO(t)}{} \Esymb_{r \sim S} z_r^{i,j} \norm{v_r}_2^t  \leq k^c \cdot 2 (4t)^{t/2}$.
\end{proof}

To complete the proof of Theorem~\ref{thm:clustering_subset} what is left to show is that in the sum-of-squares proof we gave we have indeed $n_{\text{max}} = \mathrm{poly}(n)$.
As for the above proofs we can see that all terms are of degree $\cO(1)$ except for the $\normt{v_r}^t$ and the $\iprod{v_r, z_r^{i,j} v_s}$ for $r, s \in \brac{n}$ and $i, j \in \brac{k}$.
Hence, in total these are at most $\cO((kn)^2) d^{\cO(t)} = \mathrm{poly}(n)$.
Furthermore, all the proofs moved to the appendix also satisfy this property.

\subsection{Obtaining the final clustering}
\label{sub_sec:clustering_final}

Next, we will turn to proving Theorem~\ref{thm:clustering_final}.
Recall that we are now given a mean estimate $\hat{\mu}_r$ for each $r \in \brac{n}$ such that Theorem~\ref{thm:clustering_subset} holds.
Using these, we will first show a very simple procedure to obtain estimates for the clusters and then use our Robust Mean Estimation Algorithm to obtain our final mean estimates.
More precisely, we will first proof the following Lemma:
\begin{lemma}
\label{lem:intermediate_clusters}
With probability at least $1-\frac{1}{\mathrm{poly}(k)}$ we can obtain a set of clusters $\hat{S}_1, \ldots, \hat{S}_k$ such that for each $i \in [k]$ we have that $S_i$ contains at least a $1 - \frac{1}{\mathrm{poly}(k)}$ fraction of the points in $S_i^*$
\end{lemma}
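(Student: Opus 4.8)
The plan is to first \emph{denoise} the samples using \cref{thm:clustering_subset} and then recover the clusters from the resulting estimates by a single greedy ball-growing pass. Concretely, I would first apply \cref{thm:clustering_subset} with its degree parameter set to $C_0 t$ for a sufficiently large absolute constant $C_0$; since $C_0$ is a constant this keeps the sample complexity at $k^{\cO(1)} d^{\cO(t)}$ and the running time at $n^{\cO(1)}$. Let $\hat\mu_1, \dots, \hat\mu_n \in \R^d$ be the per-sample estimates it outputs. Applying the conclusion of that theorem to the set of indices whose squared error exceeds the stated bound yields, by a Markov-type averaging argument, that for every constant $c$ at most $n/k^c$ indices $r$ satisfy $\normt{\hat\mu_r - \mu^*_{k^*_r}}^2 > 8 k^{(4+2c)/(C_0 t)} \cdot C_0 t$; call the remaining indices \emph{good}. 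Provided the hidden constant in the separation hypothesis $\Delta \geq \Omega(\sqrt{k^{1/t} t})$ of \cref{thm:clustering_final} is large enough and $C_0$ is chosen accordingly, this threshold is at most $(\Delta/100)^2$, so every good index satisfies $\normt{\hat\mu_r - \mu^*_{k^*_r}} \leq \tau$ with $\tau \coloneqq \Delta/100$. Combined with the standard concentration fact that $\card{S_i^*} \geq n/(2k)$ for all $i$ with probability $1 - 1/\mathrm{poly}(k)$, fixing (say) $c \geq 2$ shows that each true cluster $S_i^*$ contains at least a $1 - 1/\mathrm{poly}(k)$ fraction of good indices.

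Next I would run, starting from all indices unassigned, the following procedure $k$ times: pick the currently unassigned index $r$ that maximizes the number of unassigned indices $r'$ with $\normt{\hat\mu_r - \hat\mu_{r'}} \leq \Delta/10$, place into $\hat S_\ell$ every unassigned $r'$ with $\normt{\hat\mu_r - \hat\mu_{r'}} \leq \Delta/2$, and mark the indices of $\hat S_\ell$ as assigned. This takes time $\cO(k n^2 d) = n^{\cO(1)}$, so together with the first step the running time is $n^{\cO(1)}$.

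The correctness proof would be an induction on the round number maintaining the invariant: after $\ell$ rounds there is a set $I_\ell \sse \brac k$ of size $\ell$ and a bijection $\pi \colon \brac\ell \to I_\ell$ such that, for every $\ell' \leq \ell$, $\hat S_{\ell'}$ contains all good indices of $S^*_{\pi(\ell')}$ and no good index of any other true cluster, while all good indices of the clusters $S_j^*$ with $j \notin I_\ell$ are still unassigned. For the step from $\ell$ to $\ell+1$: (i) since $\ell < k$ some cluster $S_j^*$ with $j \notin I_\ell$ is untouched, so its $\geq (1 - 1/\mathrm{poly}(k)) \cdot n/(2k)$ good indices are all unassigned and (as they are pairwise within $2\tau < \Delta/10$) lie in one ball, hence the ball selected this round contains at least that many indices, far more than the total number $n/k^c$ of bad indices; (ii) good estimates of two distinct true clusters lie at distance $\geq \Delta - 2\tau = 0.98\Delta > 2 \cdot (\Delta/10)$, so the $\Delta/10$-ball around the chosen $\hat\mu_r$ meets good indices of exactly one true cluster $S^*_{i_0}$ (nonempty by (i)), whence $\normt{\hat\mu_r - \mu^*_{i_0}} \leq \Delta/10 + \tau \leq 0.11\Delta$; (iii) $i_0 \notin I_\ell$, since otherwise all good indices of $S^*_{i_0}$ would already be assigned and the chosen ball would consist solely of bad indices, contradicting (i); (iv) a triangle inequality shows $\hat S_{\ell+1}$ contains every unassigned good index of $S^*_{i_0}$ (at distance $\leq 0.11\Delta + \tau < \Delta/2$ from $\hat\mu_r$) and no good index of any $S_j^*$ with $j \neq i_0$ (at distance $\geq \Delta - 0.11\Delta - \tau > \Delta/2$). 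This re-establishes the invariant with $\pi(\ell+1) = i_0$. After $k$ rounds $\pi$ is a bijection $\brac k \to \brac k$, and $\card{\hat S_\ell \cap S^*_{\pi(\ell)}} \geq (1 - 1/\mathrm{poly}(k)) \card{S^*_{\pi(\ell)}}$ for every $\ell$; relabelling $\hat S_i \coloneqq \hat S_{\pi^{-1}(i)}$ then proves the lemma, the overall failure probability $1/\mathrm{poly}(k)$ being that of the event on which \cref{thm:clustering_subset} and the concentration bounds above hold.

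The step I expect to be the main obstacle is the quantitative bookkeeping in the first paragraph: one must choose the internal degree parameter $C_0 t$ and the exponent $c$ so that \emph{simultaneously} the fraction of badly-estimated points inside each cluster is $\ll 1/k$ and the residual error $\tau$ on good indices is a small constant fraction of $\Delta$. This is exactly where the separation hypothesis $\Delta \geq \Omega(\sqrt{k^{1/t} t})$ of \cref{thm:clustering_final} is consumed, and it is the reason \cref{thm:clustering_subset} has to be invoked with a constant-factor-larger degree than $t$. Once $\tau \leq \Delta/100$ and a $1/\mathrm{poly}(k)$ per-cluster bad fraction are secured, the greedy argument of the last two paragraphs is routine distance geometry.
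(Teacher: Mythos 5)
Your proposal is correct, and its first half is exactly the paper's argument: invoke \cref{thm:clustering_subset} on the set of worst-estimated indices (the paper takes $T$ to be the $n/k^c$ indices with largest error, you take the super-threshold set and run the same Markov-type contradiction) to conclude that all but $n/k^c$ indices are ``good'', i.e.\ have $\normt{\hat\mu_r-\mu^*_{k^*_r}}$ at most a small constant fraction of $\Delta$; and your explicit reparametrization of the degree to absorb the $k^{(4+2c)/t}$ factor into the $\cO(\sqrt{k^{1/t}t})$ separation hypothesis is the same bookkeeping the paper performs implicitly when it assumes $\Delta\ge\sqrt{Ck^{(4+2c)/t}t}$ with $c\ge 3$. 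Where you diverge is the clustering subroutine: the paper picks a uniformly \emph{random} remaining estimate as a center, collects everything within radius $\Delta/4$, and repeats while at least $n/(2k)$ points remain; its analysis must then bound the probability that any of the random centers is bad (each pick is bad with probability $\cO(k/k^{c})$ among the $\ge n/(2k)$ remaining points), which is the source of an additional $1/\mathrm{poly}(k)$ failure term. Your deterministic densest-ball greedy pass avoids that: the selected ball necessarily contains more points than the total number of bad indices, so it must hit a good index of some untouched cluster, and the induction goes through without further randomness. The trade-off is a deterministic post-processing step whose only failure modes are the sample-level events (cluster sizes and the guarantee of \cref{thm:clustering_subset}), at the cost of a slightly longer correctness argument and an extra factor of $n$ in the (still $n^{\cO(1)}$) running time of the greedy selection. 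Both routes prove the lemma; yours is arguably cleaner on the probability accounting.
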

To proof this, consider the following procedure that given a mean estimate $\hat{\mu}_r$ for each data point does the following (recall that $\Delta$ is the minimum separation of the real means, i.e., $\Delta = \min_{i \neq j \in [k]} \normt{\mu_i^* - \mu_j^*}$):
\begin{itemize}
\item Choose one of the remaining mean estimates uniformly at random.
\item Combine all indices whose mean estimates lie in a radius of $\Delta/4$ of the point previously chosen into a new cluster and remove them from the list of remaining points.
\item Repeat as long as there are at least $\frac{n}{2k}$ points remaining, else assign the remaining points arbitrarily and arbitrarily balance the clusters such that they have the same size.
\end{itemize}

This procedure is also inspired by the last part of the algorithm given by Hopkins and Li~\cite{Hopkins_Li_clustering}.
The proof of Lemma~\ref{lem:intermediate_clusters} now is as follows:
\begin{proof}[Proof of Theorem~\ref{thm:clustering_final}]
Suppose that $\Delta \geq \sqrt{C k^{(4+2c)/t} \cdot t}$ for $C = 64 \cdot 8$, $t$ the parameter we chose when running our algorithm, and $c \geq 3$.
Considering the above procedure, we do the following:
We call an index $r \in [n]$ \emph{good} if the corresponding mean estimate is close to the real mean. More precisely, if $\normt{\hat{\mu}_r - \mu_r^*} \leq \Delta/8$.
We call an index $r \in [n]$ \emph{bad} if it is not good.
Further, we observe that with high probability all of the real clusters have size between $(1-\frac{1}{k^{c-1}}) \frac{n}{k}$ and $(1+\frac{1}{k^{c-1}}) \frac{n}{k}$ this follows from a simple Chernoff bound.
In all what follows we can hence condition on this event.

We now observe the following: If the index chosen in step 1 is good and is in say the real cluster $i$, then all good indices which also belong to cluster $i$ will also be chosen in step 2.
Also, all good indices belonging to cluster $j \neq i$ will not be chosen, since the real means are at least $\Delta$ apart.
Hence, if we only choose good indices, the only indices that could potentially be misclassified are those which are bad or those that are moved around in the last step.
However, it is easy to see, that the number of good indices we can potentially loose in this last step is at most the number of bad points.

We now invoke Theorem~\ref{thm:clustering_subset} with
\begin{align*}
T = \Set{r \suchthat \text{$\Normt{\mu_{k_r} - \mu_{k_r}^*}$ is among the $\frac{n}{k^c}$ largest values of $\Normt{\mu_{k_r'} - \mu_{k_r'}^*}$ for $r' \in \brac{n}$}}
\end{align*}
breaking ties arbitrarily if necessary.
Note, that we cannot explicitly construct $T$ but the conclusion of Theorem~\ref{thm:clustering_subset} of course still holds and yields that:
\begin{align*}
\Esymb_{r \sim T} \snormt{\mu_{k_r} - \mu_{k_r}^*} \leq 8 k^{(4+2c)/t} \cdot t = \Paren{\Delta/8}^2
\end{align*}

Since the average of the largest $\frac{n}{k^c}$ squared distances of estimated means to real means is at most $\Paren{\Delta/8}^2$ we can conclude that all other (non-squared) distances are also at most $\Delta/8$ and hence all $r \not\in T$ are good indices and there are at most $\card{T} = \frac{n}{k^c}$ bad ones.
Hence, if we only choose good indices the clusters we form have size at least $\frac{n}{k}(1-\frac{1}{k^{c-1}})$ and hence with some calculation we can conclude that for $c \geq 3$ we form exactly $k$ clusters.
Further, for each of these clusters at least a $\frac{ 1- k^{-(c-1)} }{ 1+k^{-(c-1)} } = 1 - \frac{1}{\mathrm{poly}(k)}$ fraction is classified correctly.
By iteratively distinguishing the cases whether the first, second, etc. index we choose is good or bad it is not hard to see that the event $A \coloneqq \Set{\text{One of the indices we choose is bad.}}$ occurs with probability at most $\frac{1}{\mathrm{poly}(k)}$.
\end{proof}

Hence, each of our cluster-estimates is an $\e$-corrupted sample of the Standard Gaussian distribution (which is certifiably 1-subgaussian) for $\e = 1/\mathrm{poly}(k)$.
Hence, invoking Theorem~\ref{thm:moment_estimation} with $t=2$ for each of these estimates, we get mean estimates $\hat{\mu}_1, \ldots, \hat{\mu}_k$ such that $\normt{\mu^*_i - \hat{\mu}_i} \leq \frac{1}{\mathrm{poly}(k)}$ for all $i \in [k]$ which finishes the proof of Theorem~\ref{thm:clustering_final}.

\section{Moment estimation}
\label{sec:moment_estimation}

Next, we would like to apply our techniques to the problem of robust moment estimation and give a proof of Theorem~\ref{thm:moment_estimation} and Theorem~\ref{thm:higher_moment_estimation}.
Unfortunately, in this case we can't just directly apply Theorem~\ref{thm:improved_sos_estimation} but have to do some more work.
However, this will only be a minor obstacle not requiring many new ideas.
We will address it at the very end of this section.
For now, we will content ourselves to find high-degree sum-of-squares proofs.

We begin by introducing some notation:
Let $X = \set{x^*_1, \ldots, x^*_n}$ be the uncorrupted sample of $\cD$ and $D$ be the uniform distribution over $X$.
Again, we assume that $\cD$ is $t$-certifiably $C$-subgaussian in the sense of Definition~\ref{def:cert_subgaussian}.
By standard concentration of measure arguments it will be enough to show that we can approximate the low-degree moments of $D$ well, since they in turn are very close to those of the original distribution $\cD$.
Overloading notation, we let $\mu^*$ and $\Sigma^*$ be the first two moments of $D$, i.e., $\mu^* = \frac{1}{n} \sum_{i=1}^n x^*_i$ and $\Sigma^* = \frac{1}{n} \sum_{i=1}^n (x^*_i - \mu^*) (x^*_i - \mu^*)^\top$.
Further, let $Y= \set{y_1, \ldots, y_n}$ be an $\e$-corrupted sample of $X$.
Following the paradigm of \emph{efficient algorithms from identifiability proofs} we propose the following system of polynomials equations in scalar-valued variables $w_i$, vector-valued variables $x_i$.
It is basically the same as in~\cite{Kothari_Steinhardt_Steurer_robust_moment_estimation}.
We also remark that the SoS proofs given here are highly inspired by theirs although one key difference is that we do no longer use any form of H\"older's Inequality since this would yield to many high-degree variables and prevent us from gaining an improved by applying Theorem~\ref{thm:linearization}.
We instead rely on a sum-of-squares version of the AM-GM Inequality (cf. Lemma~\ref{lemma:sos_am_gm}).
Further, in the following let $t \in \N$ be even.
\begin{align*}
 \cA \coloneqq	\left\{
					\begin{array}{l}
						\sum_{i=1}^n w_i = (1-\e) \cdot n \\
						\forall i \in \brac{n}: w_i^2 = w_i \\
						\forall i \in \brac{n}: w_i \cdot (y_i - x_i) = 0 \\
						\forall t' \in \brac{t/2}: \forall u \in \R^d: \frac{1}{n} \sum_{i=1}^n \iprod{u, x_i - \mu}^{2t'} \leq (Ct')^{t'} \cdot \iprod{u, \Sigma u}^{t'}
					\end{array}
				\right\}
\end{align*}
Where we define the variables $\mu = \frac{1}{n} \sum_{i=1}^n x_i$ and $\Sigma = \frac{1}{n} \sum_{i=1}^n \paren{x_i - \mu} \paren{x_i - \mu}^\top$.
Hence, they can be thought of as our guess of the first two moments of $D$.
\footnote{As long as $n^{\cO(1)}$ is at least as large as the maximum eigenvalue of $\Sigma^*$, the diameter of our sample is at most $n^{\cO(1)}$ with high probability and hence we can make the above system explicitly bounded by shifting everything as necessary.}

One can interpret this system as follows: The $w_i$ variables select a $(1-\e)$ fraction of the corrupted samples and the last constraint ensures that the selected variables satisfy the subgaussianity constraint.
Again, we refer to Section~\ref{sec:toolkit} for an explanation how we can model this form of $\forall$ constraints.

For the case of mean and covariance estimation we will now prove the following two theorems:
\begin{theorem}
\label{thm:moment_estimation_intermediate}
For all $u \in \R^d$ and $t \in \N$ even we have that
\begin{align*}
\cA \sststile{t}{} \Set{  \iprod{u, \mu - \mu^*}^2 \leq \delta_1^2 \iprod{u, \Sigma^* u} }
\end{align*}
where $\delta_1 = \cO(C^{1/2} t^{1/2}) \e^{1-1/t}$.
Further, it also holds that for all $u \in \R^d$ and $t \in \N$ even we have that
\begin{align*}
\cA \sststile{t}{} \Set{\iprod{u, \Paren{ \Sigma - \Sigma^* } u}^2 \leq  \delta_2^2 \iprod{u, \Sigma^* u}^2 }
\end{align*}
where $\delta_2 = \cO(C t) \e^{1-2/t}$.
\end{theorem}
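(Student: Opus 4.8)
The plan is to reprove the identifiability statement of~\cite{Kothari_Steinhardt_Steurer_robust_moment_estimation} underlying \cref{thm:moment_estimation_intermediate}, but to carry out every step as a degree-$\cO(t)$ sum-of-squares proof in the program variables $w_i,x_i$ only, treating the fixed test direction $u$ and everything built from the uncorrupted sample --- in particular $x_i^{*},\mu^{*},\Sigma^{*}$ --- as constants. The one change from~\cite{Kothari_Steinhardt_Steurer_robust_moment_estimation} is that wherever they apply H\"older's inequality in the direction $u$ I would instead use the sum-of-squares AM--GM inequality \cref{lemma:sos_am_gm}: a use of H\"older against $u$ expresses powers of $\mu$ and thereby introduces $\binom{d+t}{t}$ new degree-$t$ monomials, which would ruin the monomial count that makes \cref{thm:improved_sos_estimation} worthwhile, whereas the AM--GM step only trades the ``mass'' of the rejected samples against $t$-th moments that are already present.

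First I would introduce the (constant, data-dependent) indicator $w_i^{*}\in\bits$ of the event $y_i=x_i^{*}$; since $Y$ is an $\e$-corruption of $X$, $\sum_i(1-w_i^{*})\le\e n$. Set $b_i\coloneqq 1-w_iw_i^{*}$. From the axioms $w_i^2=w_i$ and $w_i(y_i-x_i)=0$ one gets the low-degree facts $\cA\sststile{\cO(1)}{}0\le w_i\le1$, that $b_i$ is idempotent so $\cA\sststile{\cO(1)}{}b_i=b_i^2\ge0$, that $\cA\sststile{\cO(1)}{}\beta\coloneqq\tfrac1n\sum_i b_i\le 2\e$, and (using $w_iw_i^{*}(x_i-x_i^{*})=0$) that $\cA\sststile{\cO(1)}{}x_i-x_i^{*}=b_i(x_i-x_i^{*})$. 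Instantiating the last axiom of $\cA$ at the given $u$ (it is modelled, cf.\ \cref{sec:toolkit}, as a sum-of-squares proof in $u$) with $t'=t/2$, and using the standard moment concentration for the $n=(C+d)^{\cO(t)}$ clean points already invoked to pass from $\cD$ to $D$, I obtain
\begin{align*}
\cA\sststile{t}{}\ \frac1n\sum_i\iprod{u,x_i-\mu}^{t}\le\Paren{Ct/2}^{t/2}\iprod{u,\Sigma u}^{t/2},\qquad \frac1n\sum_i\iprod{u,x_i^{*}-\mu^{*}}^{t}\le(1+o(1))\Paren{Ct/2}^{t/2}\iprod{u,\Sigma^{*}u}^{t/2}.
\end{align*}
Applying \cref{lemma:sos_am_gm} termwise to $b_i\iprod{u,x_i-\mu}^2$ with weights $\tfrac2t,\tfrac{t-2}t$, optimising the scale parameter, and feeding in $\beta\le2\e$ together with the moment bounds (and $b_i\le1$ to discard $b_i$ inside the $t$-th power) produces the key weighted second-moment estimates
\begin{align*}
\cA\sststile{\cO(t)}{}\ \frac1n\sum_i b_i\iprod{u,x_i-\mu}^2\le\cO(Ct)\,\e^{1-2/t}\iprod{u,\Sigma u},\qquad \frac1n\sum_i b_i\iprod{u,x_i^{*}-\mu^{*}}^2\le\cO(Ct)\,\e^{1-2/t}\iprod{u,\Sigma^{*}u}.
\end{align*}

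For the mean, using $x_i-x_i^{*}=b_i(x_i-x_i^{*})$ and $\tfrac1n\sum_i x_i^{*}=\mu^{*}$ gives the identity $(1-\beta)\iprod{u,\mu-\mu^{*}}=\tfrac1n\sum_i b_i\iprod{u,x_i-\mu}-\tfrac1n\sum_i b_i\iprod{u,x_i^{*}-\mu^{*}}$; bounding each sum by sum-of-squares Cauchy--Schwarz, $\Paren{\tfrac1n\sum_i b_i c_i}^2\le\beta\cdot\tfrac1n\sum_i b_i c_i^2$, and inserting $\beta\le2\e$ and the weighted second-moment estimates yields $\cA\sststile{\cO(t)}{}(1-\beta)^2\iprod{u,\mu-\mu^{*}}^2\le\cO(Ct)\,\e^{2-2/t}\Paren{\iprod{u,\Sigma u}+\iprod{u,\Sigma^{*}u}}$, which with $\cA\sststile{\cO(1)}{}(1-\beta)^2\ge(1-2\e)^2$ bounds $\iprod{u,\mu-\mu^{*}}^2$. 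For the covariance, I would write $\iprod{u,x_i-\mu}=\iprod{u,x_i^{*}-\mu^{*}}+\iprod{u,e_i}$ with $e_i\coloneqq b_i(x_i-x_i^{*})-(\mu-\mu^{*})$ and expand the square; the $(\mu-\mu^{*})$ part of the cross term cancels because $\tfrac1n\sum_i\iprod{u,x_i^{*}-\mu^{*}}=0$, leaving $\iprod{u,(\Sigma-\Sigma^{*})u}=\tfrac2n\sum_i b_i\iprod{u,x_i^{*}-\mu^{*}}\iprod{u,x_i-x_i^{*}}+\tfrac1n\sum_i\iprod{u,e_i}^2$. Writing $x_i-x_i^{*}=(x_i-\mu)-(x_i^{*}-\mu^{*})+(\mu-\mu^{*})$ and bounding the diagonal term with the weighted second-moment estimates and the mean bound, and the cross term by Cauchy--Schwarz against the same, gives the two-sided bound $\cA\sststile{\cO(t)}{}\bigl|\iprod{u,(\Sigma-\Sigma^{*})u}\bigr|\le\delta\Paren{\iprod{u,\Sigma u}+\iprod{u,\Sigma^{*}u}}$ with $\delta=\cO(Ct)\,\e^{1-2/t}$. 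Multiplying through by the positive constant $1/(1-\delta)$ gives $\cA\sststile{\cO(t)}{}\iprod{u,\Sigma u}\le(1+\cO(\delta))\iprod{u,\Sigma^{*}u}$; squaring the two-sided bound --- legal in the proof system since each side is a sum of squares --- yields the second claim with $\delta_2=\cO(Ct)\e^{1-2/t}$, and substituting $\iprod{u,\Sigma u}\le2\iprod{u,\Sigma^{*}u}$ back into the mean inequality yields the first with $\delta_1=\cO(C^{1/2}t^{1/2})\,\e^{1-1/t}$.

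The hard part will be doing all of this at once: keeping the proof degree $\cO(t)$ with only $\poly(n,d^t)$ distinct monomials --- which is exactly why H\"older must give way to \cref{lemma:sos_am_gm} --- while resolving the genuine coupling between the mean bound (stated in terms of $\iprod{u,\Sigma u}$) and the $\Sigma$-versus-$\Sigma^{*}$ comparison (which is what controls $\iprod{u,\Sigma u}$). The coupling is benign because $\delta_2=\cO(Ct)\e^{1-2/t}=o(1)$, but one has to choose the constants so that the two inequalities can be solved against each other without the degree blow-up that a naive two-pass bootstrap would incur. Everything else is routine sum-of-squares algebra, and I would close by checking that the only non-constant-degree monomials used are the $\iprod{u,x_i-\mu}^t$ together with a constant number of lower-degree products, so that $n_{\mathrm{max}}=\poly(n)$ as \cref{thm:improved_sos_estimation} needs.
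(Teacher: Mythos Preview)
Your overall architecture is right and matches the paper's intent --- avoid H\"older at degree $t$ and use AM--GM instead --- but there is a real gap in the step you call the ``key weighted second-moment estimates''. You claim
\[
\cA\sststile{\cO(t)}{}\ \frac1n\sum_i b_i\iprod{u,x_i-\mu}^2\le\cO(Ct)\,\e^{1-2/t}\iprod{u,\Sigma u},
\]
obtained by applying \cref{lemma:sos_am_gm} termwise and ``optimising the scale parameter''. But the optimal scale in that AM--GM step is $\gamma \propto \e^{-2/t}\iprod{u,\Sigma u}$, which is a polynomial in the program variables, not a constant; with polynomial $\gamma$ the term $\gamma^{-(t/2-1)}$ is no longer a polynomial and the sum-of-squares AM--GM lemmas (\cref{toolkit:lemma_3}, \cref{toolkit:lemma_4}) no longer apply. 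The companion estimate with $\iprod{u,\Sigma^{*}u}$ on the right is fine precisely because $\iprod{u,\Sigma^{*}u}$ is a constant, so the optimal $\gamma$ there is a number. Without the first estimate your self-consistent system never closes: after AM--GM with a constant $\gamma$ you are left with $\iprod{u,\Sigma u}^{t/2}$ on the right, and the resulting inequality $\iprod{u,\Sigma u}\le \iprod{u,\Sigma^{*}u}+B\iprod{u,\Sigma u}^{t/2}+\cdots$ cannot be solved for $\iprod{u,\Sigma u}$ inside the proof system.

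The paper breaks this circularity not by ``choosing constants carefully'' but by inserting a separate low-degree bootstrap, \cref{lemma:covariances}: using only the $t'=2$ instance of the subgaussianity axiom and sum-of-squares Cauchy--Schwarz at degree~$4$, it proves $\cA\sststile{4}{}\iprod{u,\Sigma u}^2\le\cO(1)\iprod{u,\Sigma^{*}u}^2$ \emph{without} any reference to the mean. This lets one replace $\iprod{u,\Sigma u}^{t/2}$ by $\cO(\iprod{u,\Sigma^{*}u}^{t/2})$ \emph{before} choosing $\gamma$, after which the optimal $\gamma=\cO(Ct)\e^{-2/t}\iprod{u,\Sigma^{*}u}$ is a genuine constant and the mean and tight covariance bounds follow exactly along the lines you sketch. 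So the missing ingredient is this degree-$4$ covariance comparison; once you have it, the rest of your plan (and your monomial count) goes through.
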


In order to prove this we first need a supplementary lemma.
This is a weakening of the second part Theorem~\ref{thm:moment_estimation_intermediate} which we will later use combined with a bootstrapping technique.
\begin{lemma}
\label{lemma:covariances}
For all $u \in \R^d$ we have:
\begin{align*}
\cA \sststile{4}{} \Set{  \iprod{u, \Sigma u}^2 \leq \cO(1) \iprod{u, \Sigma^* u}^2 }
\end{align*}
\end{lemma}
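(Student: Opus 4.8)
The plan is to run, inside the sum-of-squares proof system, the standard ``the empirical covariance of the surviving good points is spectrally comparable to the truth'' argument, using only the low-degree instances of the hypercontractivity axiom of $\cA$ to control the corrupted coordinates. Fix $u$ and write $a\coloneqq\iprod{u,\Sigma u}$ and $b\coloneqq\iprod{u,\Sigma^*u}$, both sums of squares in $u$ with data-dependent coefficients. Let $w^*_1,\dots,w^*_n\in\{0,1\}$ indicate the uncorrupted coordinates: these depend on the hidden ground truth, so the algorithm cannot compute them, but --- exactly like the variables $z_r^{*j}$ in the clustering proof --- they are perfectly legitimate fixed coefficients inside a sum-of-squares proof whose existence we merely assert. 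Then $\sum_i w^*_i=(1-\eps)n$, and the uncorrupted $y_i$ form, as a multiset, precisely the $x^*_j$. It suffices to produce a constant-degree (in $u$) proof that $\cA\sststile{}{}a\le\cO(1)\,b$: multiplying this by $a\ge0$ and then by $b\ge0$ yields $\cA\sststile{4}{}a^2\le\cO(1)\,b^2$, which is the claim.

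To bound $a$, start from the degree-$2$ identity $\tfrac1n\sum_i\iprod{u,x_i-\mu}^2=\tfrac1n\sum_i\iprod{u,x_i-\mu^*}^2-\iprod{u,\mu-\mu^*}^2$, which gives $\cA\sststile{}{}a\le T$ with $T\coloneqq\tfrac1n\sum_i\iprod{u,x_i-\mu^*}^2$, together with $\cA\sststile{}{}\iprod{u,\mu-\mu^*}^2\le T$ by Cauchy--Schwarz over the index set. Split $T$ along the three groups $w_iw^*_i$, $w_i(1-w^*_i)$, $1-w_i$, which partition $1$ since $w_i^2=w_i$. On the good group, $w_i(y_i-x_i)=0$ lets us replace $x_i$ by $y_i=x^*_{\pi(i)}$, and since $1-w_iw^*_i=(1-w_i)^2+w_i^2(1-w^*_i)$ is a sum of squares modulo $\cA$ this group contributes at most $\tfrac1n\sum_{j=1}^n\iprod{u,x^*_j-\mu^*}^2=b$. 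The two bad groups touch at most $2\eps n$ coordinates altogether; on each I would apply sum-of-squares Cauchy--Schwarz, $\big(\tfrac1n\sum_i c_is_i\big)^2\le\big(\tfrac1n\sum_i c_i\big)\big(\tfrac1n\sum_i c_is_i^2\big)$ with $c_i$ the group indicator and $s_i=\iprod{u,x_i-\mu^*}^2$ (again using $w_i(y_i-x_i)=0$), bounding each bad group by $\cO(\sqrt\eps)\cdot\big(\tfrac1n\sum_i\iprod{u,x_i-\mu^*}^4\big)^{1/2}$. Then $\iprod{u,x_i-\mu^*}^4\le8\iprod{u,x_i-\mu}^4+8\iprod{u,\mu-\mu^*}^4$, the fourth-moment hypercontractivity axiom $\tfrac1n\sum_i\iprod{u,x_i-\mu}^4\le(2C)^2a^2$, and $\iprod{u,\mu-\mu^*}^2\le T$ give $\tfrac1n\sum_i\iprod{u,x_i-\mu^*}^4\le\cO(C^2)\,a^2+\cO(1)\,T^2$, so each bad group is $\cO(\sqrt\eps\,C)(a+T)$. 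Running the same decomposition on $\iprod{u,\mu-\mu^*}=\tfrac1n\sum_i\iprod{u,x_i-\mu^*}$ bounds $\iprod{u,\mu-\mu^*}^2$ by $\cO(1)\,b+\cO(\sqrt\eps\,C)(a+T)$ as well; since $T=a+\iprod{u,\mu-\mu^*}^2$, collecting everything gives $\cA\sststile{}{}a\le\cO(1)\,b+\cO(\sqrt\eps\,C)\,a$. As $\eps$ lies below an absolute constant times $C^{-2}$ (a standing assumption --- recall $\delta_2=\cO(Ct)\eps^{1-2/t}$ is only meaningful for such $\eps$), the factor $1-\cO(\sqrt\eps\,C)\ge\tfrac12$ is a positive constant, so dividing through yields $\cA\sststile{}{}a\le\cO(1)\,b$.

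The main obstacle is the sum-of-squares bookkeeping. One has to verify that every step above --- Cauchy--Schwarz over the index set, the passage from $X^2\le Z^2$ to $X\le Z$ for sum-of-squares $X,Z$ (a toolkit fact in the spirit of Lemma~\ref{toolkit:lemma_2}), the AM--GM bound $\cO(\eps C^2)a^2+\cO(\eps)T^2\le(\cO(\sqrt\eps C)\,a+\cO(\sqrt\eps)\,T)^2$ (Lemma~\ref{lemma:sos_am_gm}), and the final division by the positive constant $1-\cO(\sqrt\eps\,C)$ --- is genuinely a sum-of-squares derivation whose degree in $u$ stays bounded, so that after the final squaring one lands at degree $4$; here one must also read the ``$\forall u$'' axioms of $\cA$ in the substitution-friendly sense of Section~\ref{sec:toolkit}, so that plugging data-dependent expressions for $u$ does not blow up the relevant degree. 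The use of the uncomputable indicators $w^*_i$ is harmless for exactly the reason it is in the clustering proof.
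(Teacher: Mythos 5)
Your high-level strategy differs from the paper's in a way that breaks the sum-of-squares bookkeeping at a specific, load-bearing step. You first derive the \emph{unsquared} inequality $a\le\cO(1)\,b$ and then square it; the paper instead never leaves the squared world: it proves $\bigl(\iprod{u,\Sigma u}-\iprod{u,\Sigma^* u}\bigr)^2\le\cO(\eps)\bigl(\iprod{u,\Sigma u}^2+\iprod{u,\Sigma^* u}^2\bigr)$ directly (via the indicator $z_i=r_iw_i$ for the good-and-selected coordinates, SoS Cauchy--Schwarz in its squared form, the SoS triangle inequality, and the degree-$4$ subgaussianity axiom), and then finishes with the purely algebraic rearrangement $a^2+b^2=(a-b)^2+2ab\le\cO(\eps)(a^2+b^2)+\tfrac12a^2+2b^2$ plus AM--GM. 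The reason the lemma (and essentially every statement in this literature) is phrased in squared form is precisely to avoid the step your route requires.

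The gap is the square-root extraction. To bound each bad group you invoke SoS Cauchy--Schwarz, which yields $\bigl(\tfrac1n\sum_i c_is_i\bigr)^2\le 2\eps\cdot\tfrac1n\sum_i s_i^2$, and you then pass to $\tfrac1n\sum_i c_is_i\le\cO(\sqrt\eps)\bigl(\tfrac1n\sum_i s_i^2\bigr)^{1/2}\le\cO(\sqrt\eps\,C)(a+T)$. This requires the inference ``$X\ge0$, $Z\ge0$, $X^2\le Z^2$ $\Rightarrow$ $X\le Z$'' with $Z$ a \emph{polynomial} in the program variables. That inference is not a valid low-degree SoS derivation: the identity $2Z(Z-X)=(Z-X)^2+(Z^2-X^2)$ only certifies $Z(Z-X)\ge0$, and one cannot divide by the polynomial $Z$. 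Lemma~\ref{toolkit:lemma_2}, which you cite in spirit, handles only the case where the upper bound is a \emph{real constant} $C$ (there one divides by $\sqrt C$, a number). Since your final bound $Z=\cO(\sqrt\eps\,C)\,a+\cO(\sqrt\eps)\,T$ depends on the variables $x$, the step fails, and with it the derivation of the unsquared inequality $a\le\cO(1)\,b$ on which the rest of your argument rests. The fix is to restructure the argument so that all Cauchy--Schwarz applications stay squared, which is exactly what the paper's proof does.
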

\begin{proof}
We will show the following
\begin{align*}
\cA \sststile{2}{} \Set{ \Paren{\iprod{u, \Sigma u} - \iprod{u, \Sigma^* u}}^2 \leq \cO(\e) \Paren{\iprod{u, \Sigma u}^2 + \iprod{u, \Sigma^* u}^2}}
\end{align*}
The result can then be obtained as follows:
\begin{align*}
&\iprod{u, \Sigma u}^2  + \iprod{u, \Sigma^* u}^2 \leq \cO(\e) \iprod{u, \Sigma u}^2 + \cO(\e) \iprod{u, \Sigma^* u}^2 + 2 \iprod{u, \Sigma u}  \iprod{u, \Sigma^* u} \\
&\leq \Paren{\frac{1}{2} + \cO(\e)} \iprod{u, \Sigma u}^2 + \Paren{2+\cO(\e)} \iprod{u, \Sigma^* u}^2
\end{align*}
where we used the sum-of-squares version of the AM-GM Inequality, cf. Lemma~\ref{lemma:sos_am_gm}, and that $\sststile{4}{u,x} \iprod{u , \Sigma u} \geq 0$ and $\sststile{2}{u,x} \iprod{u , \Sigma^* u} \geq 0$.
After rearranging this yields:
\begin{align*}
\iprod{u, \Sigma u}^2 \leq \frac{1+\cO(\e)}{\frac{1}{2}-\cO(\e)} \iprod{u, \Sigma^* u}^2 = \cO(1) \iprod{u, \Sigma^* u}^2 
\end{align*}

To see why the first part is true, we proceed as follows:
Define $r_1, \ldots, r_n \in \Set{0,1}$ such that $\sum_{i=1}^n r_i = \paren{1-\e}n$ and $r_i \paren{y_i - x_i^*} = 0$ for all $i \in \brac{n}$, i.e., $r_i = 1$ if $y_i = x_i^*$ and $0$ otherwise.
This we can do since $Y$ is an $\e$-corrupted sample of $X$.
Further, define $z_i = r_i \cdot w_i$, because $\set{w_i^2 = w_i} \sststile{2}{w_i} \paren{1-z_i} \leq \paren{1-w_i} + \paren{1-r_i}$ we have $\cA \sststile{2}{} \Set{\sum_{i=1}^n \paren{1-z_i} \leq 2\e n}$ and $\cA \sststile{2}{} (1-z_i)^2 = 1-z_i$.

Writing out the definition of $\Sigma$ and $\Sigma^*$ and using the sum-of-squares H\"older and Triangle Inequality (c.f. Fact~\ref{toolkit:lemma_12}) for the first and second inequality respectively we get:
\begin{align*}
&\cA \sststile{4}{}  \Paren{\iprod{u, \Sigma u} - \iprod{u, \Sigma^* u}}^2 = \Paren{\frac{1}{n} \sum_{i=1}^n \iprod{u, (x_i - \mu)}^2 - \frac{1}{n} \sum_{i=1}^n \iprod{u, (x_i^* - \mu^*)}^2}^2 \\ 
&= \Paren{\frac{1}{n} \sum_{i=1}^n \paren{1-z_i} \Brac{\iprod{u, (x_i - \mu)}^2 - \iprod{u, (x_i^* - \mu^*)}^2}}^2 \\
&\leq \Paren{\frac{1}{n} \sum_{i=1}^n \paren{1-z_i}^2} \Paren{\frac{1}{n} \sum_{i=1}^n \Brac{\iprod{u, (x_i - \mu)}^2 - \iprod{u, (x_i^* - \mu^*)}^2}^2 } \\
&\leq 4\e \Brac{\Paren{\frac{1}{n} \sum_{i=1}^n \iprod{u, x_i - \mu}^4} + \Paren{\frac{1}{n} \sum_{i=1}^n \iprod{u, x_i^* - \mu^*}^4}} \leq \cO(\e) \Paren{ \iprod{u, \Sigma u}^2 + \iprod{u, \Sigma^* u}^2} 
\end{align*}
where in the last step we used subgaussianity.
\end{proof}

Next, we will prove Theorem~\ref{thm:moment_estimation_intermediate}.
\begin{proof}[Proof of Theorem~\ref{thm:moment_estimation_intermediate}]
We will do the proof for degree $2t$ SoS proofs because it is slightly more readable.
To obtain the result for degree $t$ SoS proofs is then straightforwardly attained by substitution.
Note the slightly difference on $t$ in the exponent of $\e$ this is due to the fact that we use a degree $2t$ SoS proof and should not lead to confusion.

Similarly as before and using $\Set{z_i^2 = z_i} \sststile{2}{} \Set{\paren{1-z_i}^2 = \paren{1-z_i}}$ we get:
\begin{align*}
&\cA \sststile{2t}{} \iprod{u, \mu - \mu^*}^2 = \Paren{ \Esymb_{i \sim \brac{n}} (1-z_i) \cdot \Brac{(1-z_i) \iprod{u, x_i - x_i^*}} }^2 \leq 2\e \cdot \Esymb_{i \sim \brac{n}} (1-z_i) \iprod{u, x_i - x_i^*}^2  \\
&= 2\e \cdot \Esymb_{i \sim \brac{n}} (1-z_i) \iprod{u, (x_i - \mu) - (x_i^* - \mu^*) - (\mu^* - \mu)}^2 \\
&\leq \cO(\e) \cdot \Paren{ \Esymb_{i \sim \brac{n}} (1-z_i) \iprod{u, x_i - \mu}^2 + \Esymb_{i \sim \brac{n}} (1-z_i) \iprod{u, x_i^* - \mu^*}^2 + 2\e \iprod{u, \mu^* - \mu}^2 } \\
&\leq \cO(\e)   \frac{1}{t}  \Paren{ 4 (t-1) \gamma \cdot \e + \gamma^{-(t-1)} \Esymb_{i \sim \brac{n}} \iprod{u, x_i - \mu}^{2t} + \gamma^{-(t-1)} \Esymb_{i \sim \brac{n}} \iprod{u, x_i^* - \mu^*}^{2t} } \\
&+ \cO(\e^2) \iprod{u, \mu^* - \mu}^2 \\
&\leq \cO(\e)  \cdot \frac{1}{t} \cdot \Paren{ 4 (t-1) \gamma \cdot \e + \gamma^{-(t-1)} (Ct)^t \iprod{u, \Sigma u}^t  + \gamma^{-(t-1)} (Ct)^t \iprod{u, \Sigma^* u}^t  } \\
&+ \cO(\e^2) \iprod{u, \mu^* - \mu}^2 
\end{align*}
where in the second-last step we used Lemma~\ref{toolkit:lemma_4} applied to each term in the expectation independently and int the last step subgaussianity.

Applying first Lemma~\ref{lemma:covariances}, note that we can do this since $t$ is even, and picking $\gamma = \cO(Ct) \e^{-\frac{1}{t}} \iprod{u, \Sigma^* u}$ we get:
\begin{align*}
&\cA \sststile{2t}{} \iprod{u, \mu - \mu^*}^2 \leq \cO(\e) \cdot \frac{1}{t} \cdot \Paren{ 4 (t-1) \gamma \cdot \e  + \gamma^{-(t-1)} \cO((Ct)^t) \iprod{u, \Sigma^* u}^t  } + \cO(\e^2) \iprod{u, \mu^* - \mu}^2  \\ 
&\leq \delta_1^2 \iprod{u, \Sigma^* u} + \cO(\e^2) \iprod{u, \mu^* - \mu}^2 
\end{align*}
And since $\frac{1}{1-\cO(\e^2) } = \cO(1)$ when $\e$ is small, rearranging terms and absorbing constants in the $\cO$-notation of $\delta_1$, yields $\cA \sststile{2t}{} \Set{ \iprod{u, \mu - \mu^*}^2 \leq \delta_1^2 \iprod{u, \Sigma^* u} } $.

The proof of the second part is almost identical and hence we skip some steps which remain the same.
Again, we will do the proof for degree $2t$ SoS proofs.
\begin{align*}
&\cA \sststile{2t}{} \iprod{u, \Paren{ \Sigma - \Sigma^* } u}^2 = \Paren{ \Esymb_{i \sim \brac{n}} (1-z_i) \Brac{\iprod{u, x_i-\mu}^2 - \iprod{u, x_i^* - \mu^*}^2 } }^2 \\
&\leq \cO(\e) \cdot \Paren{ \Esymb_{i \sim \brac{n}} (1-z_i) \iprod{u, x_i - \mu}^4  + \Esymb_{i \sim \brac{n}} (1-z_i) \iprod{u, x^0_i - \mu^*}^4 } \\
&\leq \cO(\e)  \frac{2}{t}  \Paren{ 4(t/2 - 1)\gamma \cdot \e + \gamma^{-(t/2 - 1)} \Esymb_{i \sim \brac{n}} \iprod{u, x_i - \mu}^{2t} + \gamma^{-(t/2 - 1)} \Esymb_{i \sim \brac{n}} \iprod{u, x_i^* - \mu^*}^{2t} } \\
&\leq \cO(\e) \cdot \frac{2}{t} \cdot \Paren{  4(t/2 - 1)\gamma \cdot \e +  \gamma^{-(t/2 - 1)} \cO((C t)^t) \iprod{u, \Sigma^* u}^t  } \leq \delta_2^2 \iprod{u, \Sigma^* u}^2  
\end{align*}
where we picked $\gamma = \cO((Ct)^2) \e^{-\frac{2}{t}} \iprod{u, \Sigma^* u}^2$. 
\end{proof}

To complete the proof of Theorem~\ref{thm:moment_estimation} we use some basic facts about pseudo-distributions introduced in Chapter~\ref{sec:preliminaries}.
We remark that except for our linearization technique this proof is the same as in~\cite{Kothari_Steinhardt_Steurer_robust_moment_estimation}.
The novelty we introduce lies in the previous proofs and in applying Theorem~\ref{thm:linearization}.

\begin{proof}[Proof of Theorem~\ref{thm:moment_estimation}]
Let $t \in \N$ be even.
First, we note that in the above SoS-proofs we have $m = \mathrm{poly}(n)$ and $M = \mathrm{poly}(m)$.
Where we recall that $M$ is the number of terms which consist of multiplying two or more constraints together.
Also, similar to the Clustering Problem it is not hard to see, that $n_{max} = \mathrm{poly}(n)$.
Hence, Theorem~\ref{thm:linearization} guarantees us that there is a linearized version of the sum-of-squares proofs using $n', m' = \mathrm{poly}(t,n)$ variables and constraints respectively and having only degree 3.
Let $\cA'$ and $f$ be the objects given by Theorem~\ref{thm:linearization} and $\zeta$ be a degree-3 pseudo-expectation satisfying $\cA'$.
By Fact~\ref{fact:pE_efficient_optimization} this can be found in time $n^{\cO(1)}$.
Define $\hat{\mu} \in \R^d$ as $\hat{\mu} \coloneqq \pE_\zeta \mu$ and $\hat{\Sigma} \coloneqq \pE_\zeta \Sigma$.

For the case of estimating the mean, let $u \in \R^d$ and set $p(\mu) = \delta_1^2 \iprod{u, \Sigma^* u} - \iprod{u, \mu - \mu^*}^2$.
We then have by Cauchy-Schwarz for pseudo-expectations
\begin{align*}
&\delta_1^2 \iprod{u, \Sigma^* u} - \iprod{u, \hat{\mu} - \mu^*}^2 = \delta_1^2 \iprod{u, \Sigma^* u} - \Paren{\pE_\zeta \iprod{u, \mu - \mu^*}}^2 \geq \delta_1^2 \iprod{u, \Sigma^* u} - \pE_\zeta \iprod{u, \mu - \mu^*}^2 \\
&= \pE_\zeta p(\mu) = \pE_\zeta p'(f(\mu)) \geq 2^{-n^{\Omega(1)}}
\end{align*}
as we will see, the error term $2^{-n^{\Omega(1)}}$ can be absorbed into $\delta_1^2$ for our choices of $u$.
And we hence work with the guarantee $\iprod{u, \hat{\mu} - \mu^*}^2 \leq \delta_1^2 \iprod{u, \Sigma^* u}$.

Taking squares roots in this inequality and choosing $u = \frac{\hat{\mu} - \mu^*}{\normt{\hat{\mu} - \mu^*}}$ we conclude:
\begin{align*}
\normt{\hat{\mu} - \mu^*} \leq \delta_1 \sqrt{\iprod{u, \Sigma^* u}} \leq \delta_1 \max_{\normt{v} = 1} \sqrt{\iprod{v, \Sigma^* v}} = \delta_1 \max_{\normt{v} = 1} \normt{\Paren{\Sigma^*}^{1/2} v} = \delta_1 \norm{{\Sigma^*}^{1/2}} = \delta_1 \norm{\Sigma^*}^{1/2}
\end{align*}
where we used that $\Sigma^*$ as a covariance matrix is positive semi-definite.

Choosing $u = \frac{\Paren{\Sigma^*}^{-1}\Paren{\hat{\mu} - \mu^*}}{\normt{\Paren{\Sigma^*}^{-1/2}\Paren{\hat{\mu} - \mu^*}}}$ we get:
\begin{align*}
\normt{\Paren{\Sigma^*}^{-1/2}\Paren{\hat{\mu} - \mu^*}} \leq \delta_1 \frac{ \sqrt{ \iprod{\Paren{\Sigma^*}^{-1/2}\Paren{\hat{\mu} - \mu^*}, {\Paren{\Sigma^*}^{-1/2}\Paren{\hat{\mu} - \mu^*}}} } }{\normt{\Paren{\Sigma^*}^{-1/2}\Paren{\hat{\mu} - \mu^*}}} = \delta_1
\end{align*}
as desired.

To see why the last part is true, just note that for $\delta_2 =  \cO(C t) \e^{1-2/t}$ we have by Theorem~\ref{thm:moment_estimation_intermediate} and Cauchy-Schwarz for pseudo-expectations for all $u \in \R^d$:
\begin{align*}
\iprod{u, \Paren{\hat{\Sigma} - \Sigma^*} u}^2 = \Paren{ \pE_\zeta  \iprod{u, \Paren{\Sigma - \Sigma^*} u} }^2 \leq \pE_\zeta \iprod{u, \Paren{\Sigma - \Sigma^*} u}^2 \leq \delta_2^2 \iprod{u, \Sigma^* u}^2 + 2^{-n^{\cO(1)}}
\end{align*}
which, after taking square roots and using the assumption that $\lambda_{min}(\Sigma^*) \geq 2^{-n^{\cO(1)}}$, implies that
\begin{align*}
\iprod{u, (1 - 2\delta_2) \Sigma^* u} \leq \iprod{u, \hat{\Sigma} u} \leq \iprod{u, (1+2\delta_2) \Sigma^* u}
\end{align*}
and hence abusing the $\cO$-notation to hide the factor of $2$ we get $(1-\delta_2) \Sigma \sle \hat{\Sigma} \sle (1+\delta_2) \Sigma^*$ as desired.
\end{proof}

Since the proof for higher-order moments is almost identical, we move it to Section~\ref{sec:estimating_higher_moments}.
One key difference is that we force $u$ to be part of our variables since otherwise it is unclear how to obtain the sum-of-squares proof certifying the approximation guarantee of our moment estimates.
We further rely on Lemma~\ref{lemma:pEs_are_sos} to obtain this result.

\section{Information theoretic lower bounds}
\label{sec:lower_bounds}

For lower bounds regarding the outlier-robust moment estimation problem we refer the reader to~\cite{Kothari_Steinhardt_Steurer_robust_moment_estimation}.
For the Clustering Problem, we show that Theorem~\ref{thm:clustering_subset} is information theoretically optimal in the sense, that when optimizing over the choice of $t$ we can obtain the information theoretical lower bound of $c \log k$ for some constant $c$ up to the constant factor. We summarize this in the following Theorem.
\begin{theorem}
\label{lower_bound_thm}
There exists a uniform mixture of $k$ $d$-dimensional Gaussian distributions with unit covariance such that for any algorithm trying to estimate the means and the assignment of samples to means makes error at least $c \log k$ in expectation for some constant $c$.
Where the error is defined again as $\frac{1}{n} \sum_{r = 1}^n \norm{\mu_{k_r} - \mu^*_{k^*_r}}_2^2$, where $\mu^*_{k^*_r}$ is the true mean of the $r$-th sample and $\mu_{k_r}$ the one the algorithm outputs.
\end{theorem}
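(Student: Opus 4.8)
The plan is to exhibit one explicit instance and to lower bound the error of the \emph{Bayes-optimal} estimator, which at once lower bounds the error of every algorithm --- including one that is handed the true means. I would take $d = k$ and place the means at $\mu^*_j = \rho\, e_j$ for $j \in [k]$, where $e_j$ is the $j$-th standard basis vector and $\rho = \sqrt{\log k}$ (any sufficiently small constant multiple of $\sqrt{\log k}$ works equally well), with $n$ arbitrary. Since the means are fixed and the pairs $(k^*_r, y_r)$ are \iid, the conditional law of $k^*_r$ given the whole sample equals its law given $y_r$ alone; hence, conditioning on the sample and the algorithm's coins (so that the output $\mu_{k_r}$ on sample $r$ is a fixed vector) and applying the bias--variance decomposition, every algorithm satisfies $\mathbb{E}\,\snormt{\mu_{k_r} - \mu^*_{k^*_r}} \ge \mathbb{E}_{y_r}[\mathrm{Var}(\mu^*_{k^*_r}\mid y_r)]$, where $\mathrm{Var}$ denotes the trace of the posterior covariance. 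Averaging over $r$, it suffices to prove that this single-sample posterior variance is $\Omega(\log k)$ in expectation over $y$ drawn from the mixture.

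The second step is to make the posterior explicit. Because all the means have the same norm, the posterior over the component index given $y$ is $w_j(y) \propto \exp(\rho\, y_j)$, and a short computation gives $\mathrm{Var}(\mu^*_{k^*}\mid y) = \rho^2(1 - \snorm{w(y)}) \ge \rho^2(1 - \max_j w_j(y))$, where $\snorm{w}$ is the squared Euclidean norm of the weight vector. So the claim reduces to showing that, for a $(1-o(1))$-fraction of the samples $y$ drawn from the mixture, the posterior does not concentrate on a single component, i.e. $\max_j w_j(y) = o(1)$. Writing $y = \rho e_K + g$ with $g \sim \mathcal{N}(0, I_k)$, the unnormalized weights are $u_K = e^{\rho^2 + \rho g_K}$ and $u_j = e^{\rho g_j}$ for $j \neq K$, so $\max_j w_j(y) = u_{\max}/Z$ with $Z = \sum_j u_j$.

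What is left is to control $u_{\max}$ and $Z$. On the event $|g_K| \le \log\log k$ and $\max_{j \neq K} g_j \le (1+o(1))\sqrt{2\log k}$ --- which has probability $1-o(1)$ by a union bound over the $k$ Gaussian tails --- one gets $u_{\max} \le k^{\sqrt2 + o(1)}$ (the true component contributes only $k^{1+o(1)}$, so the maximum comes from a wrong component). For the lower bound on $Z$ I would use a first-moment argument restricted to a favourable sub-population: the number of indices $j \neq K$ with $g_j \ge \rho$ is a Binomial variable with mean $k^{1/2 - o(1)} \to \infty$, hence is at least $k^{1/2 - o(1)}$ with high probability by a Chernoff bound, and every such index contributes $u_j \ge e^{\rho^2} = k$; therefore $Z \ge k^{3/2 - o(1)}$. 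Since $\sqrt2 < 3/2$, this gives $\max_j w_j(y) \le k^{-\Omega(1)} = o(1)$ on an event of probability $1-o(1)$, so $\mathbb{E}_y[\mathrm{Var}(\mu^*_{k^*}\mid y)] \ge (1-o(1))\rho^2 = (1-o(1))\log k$, proving the theorem with, say, $c = 1/2$ for all large $k$ (small $k$ being absorbed into the constant). I expect the main obstacle to be exactly this partition-function lower bound: the summands $e^{\rho g_j}$ are heavy-tailed, so the crude estimate $Z \ge \mathbb{E}\,Z$ is far from concentrated; one must instead restrict to the ``right'' tail threshold --- here $g_j \ge \rho = \sqrt{\log k}$, where both the count of such indices ($\approx k^{1/2}$) and their individual contributions ($\approx k$) are large --- and it is precisely the gap $3/2 > \sqrt{2}$ between the resulting exponents that drives the whole bound.
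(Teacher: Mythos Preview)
Your proof is correct, and it proceeds along a genuinely different route from the paper's.

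The paper takes the same instance $\mu_j^*=\lambda e_j$ with $\lambda=\Theta(\sqrt{\log k})$, but then reduces to \emph{hypothesis testing}: it hands the algorithm the true means, so the task becomes predicting the label $J$ of a single sample. Fano's inequality, together with the bound $I(Z;J)\le\frac{1}{k^2}\sum_{j,l}\kldiv{\cD_j}{\cD_l}=\lambda^2$, yields a constant lower bound on the misclassification probability when $\lambda=\sqrt{(\log k)/3}$; each misclassification then costs $2\lambda^2=\Theta(\log k)$.

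Your argument instead lower bounds the Bayes risk directly by the posterior variance, computing the softmax weights explicitly and showing $\max_j w_j(y)=k^{-\Omega(1)}$ via Gaussian tail estimates on $u_{\max}$ and $Z$. This is more hands-on and avoids information-theoretic machinery entirely. It also has a small conceptual advantage: the bias--variance step cleanly handles algorithms that output an \emph{arbitrary} vector (in particular the posterior mean, which is never one of the $\mu_j^*$), whereas the paper's reduction implicitly rounds to the nearest true mean and needs one extra sentence (Markov on the per-sample error) to cover that case. On the other hand, the paper's Fano-based argument is shorter, more modular, and immediately portable to other loss functions and geometries, while your exponent comparison $3/2>\sqrt{2}$ is specific to the choice $\rho=\sqrt{\log k}$ and to squared loss.
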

Note that we do not require the estimation algorithm to be efficient.

In order to prove Theorem~\ref{lower_bound_thm}, we will use Fano's Inequality which can be stated as follows and follows from~\cite{fano}:
\begin{theorem}[Fano's Inequality]
\label{Fano}
Given a sample $Z$ from a uniform mixture distribution with $m$ components let $J$ be the random variable that indicates which component $Z$ was generated from.
Consider any estimator (not necessarily efficiently computable) $\Psi$ that tries to predict $J$ given $Z$, we have the following lower bound:
\begin{align*}
\Psymb \Paren{ \Psi(Z) \neq J } \geq 1 - \frac{I(Z;J) + \log 2}{\log m}
\end{align*}
where $I(Z;J)$ is the mutual information between $Z$ and $J$.
The process of estimating $J$ given $Z$ is also called \emph{Hypothesis Testing}.
\end{theorem}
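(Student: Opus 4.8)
The plan is to derive the bound from the chain rule for entropy together with the data-processing inequality, following the classical argument. Write $\hat J = \Psi(Z)$ for the prediction and let $E = \Ind[\hat J \neq J]$ be the error indicator, so that $P_e \coloneqq \Pr(\hat J \neq J) = \Pr(E = 1)$. Because $J$ is uniform over the $m$ components we have $H(J) = \log m$, and since $\hat J$ is a (possibly randomized) function of $Z$ alone — so that $J \to Z \to \hat J$ is a Markov chain — the data-processing inequality gives $I(J;\hat J) \le I(J;Z)$, and hence
\begin{align*}
H(J \mid \hat J) = H(J) - I(J;\hat J) \ge \log m - I(Z;J)\,.
\end{align*}

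Next I would expand $H(E,J \mid \hat J)$ with the chain rule in the two natural orders. On one hand $H(E,J\mid\hat J) = H(J\mid\hat J) + H(E\mid J,\hat J) = H(J\mid\hat J)$, since $E$ is a deterministic function of the pair $(J,\hat J)$. On the other hand $H(E,J\mid\hat J) = H(E\mid\hat J) + H(J\mid E,\hat J)$. For the first term, $H(E\mid\hat J) \le H(E) \le \log 2$ because $E$ is binary. For the second term, condition on the value of $E$: on the event $E=0$ we have $J = \hat J$ so the conditional entropy vanishes, while on the event $E=1$ the variable $J$ is supported on at most $m-1$ values, so its conditional entropy is at most $\log(m-1) \le \log m$; averaging gives $H(J\mid E,\hat J) \le P_e \log m$. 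Combining the two expansions yields $H(J\mid\hat J) \le \log 2 + P_e\log m$.

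Putting the two displays together gives $\log m - I(Z;J) \le \log 2 + P_e \log m$, and solving for $P_e$ produces exactly $\Pr(\Psi(Z)\neq J) \ge 1 - \frac{I(Z;J)+\log 2}{\log m}$, as claimed. The only mildly delicate points are (i) justifying the Markov chain $J\to Z\to\hat J$ when $\Psi$ is randomized, which holds provided the internal randomness of $\Psi$ is independent of $J$ given $Z$ (the natural modeling assumption), and (ii) the bound $H(J\mid E=1,\hat J)\le\log(m-1)$, which uses that, conditional on $\hat J$ and on a miss, $J$ avoids the value $\hat J$ and so ranges over an alphabet of size $m-1$. Neither step is a genuine obstacle; the conceptual heart of the argument is simply the double chain-rule expansion of $H(E,J\mid\hat J)$ together with the fact that $E$ is determined by $(J,\hat J)$.
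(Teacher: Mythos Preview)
Your argument is correct and is the standard textbook derivation of Fano's inequality via the double chain-rule expansion of $H(E,J\mid\hat J)$ combined with data processing. Note, however, that the paper does not actually give its own proof of this statement: it simply quotes Fano's inequality as a known result and cites an external reference, so there is no paper proof to compare against beyond observing that your derivation is the classical one.
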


Consider a uniform mixture of the Gaussian Distributions $\cD_1, \ldots, \cD_k$ where $\cD_i = \cN(\lambda \cdot e_i, I_k)$.
Here $e_i$ denotes the $i$-th unit vector and $\lambda > 0$ will be chosen later.
To begin with, we can also assume that the algorithm is given the means of the clusters as this only decreases the failure probability.
Further, since the samples are drawn independently of each other and the algorithm knows the true means the task of predicting the assignment of the $i$-th sample is exactly the same as the above Hypothesis Testing Problem where the mixture distribution corresponds to the one we seek to estimate.

Following (\cite{wainwright_2019}, Chapter 15), we can then obtain the following Lemma:
\begin{lemma}
\label{lemma:lower_bound}
Let $p$ be the failure probability of the Hypothesis Testing Problem using the mixture distribution given above, then we can choose $\lambda = \Theta{\sqrt{\log k}}$ such that $p \geq c > 0$ for some absolute constant $c$.
\end{lemma}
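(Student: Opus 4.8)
The plan is to prove \cref{lemma:lower_bound} by a direct application of Fano's inequality (\cref{Fano}) with $m=k$ components, after controlling the mutual information $I(Z;J)$ between a single sample $Z$ and the index $J$ of the component it was drawn from. (Recall that the reduction to a single‑sample hypothesis test has already been established just above the lemma: the remaining samples are independent of $(Z,J)$ and the means are assumed known.) For a uniform mixture one has the exact identity $I(Z;J)=\frac1k\sum_{i=1}^k\kldiv{\cD_i}{\bar\cD}$, where $\bar\cD=\frac1k\sum_{j=1}^k\cD_j$ is the mixture; since $Q\mapsto\kldiv{P}{Q}$ is convex, this yields the standard pairwise bound
\begin{align*}
I(Z;J)\;\leq\;\frac{1}{k^2}\sum_{i,j=1}^k\kldiv{\cD_i}{\cD_j}.
\end{align*}

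Next I would evaluate the pairwise divergences for our instance $\cD_i=\cN(\lambda e_i,I_k)$. These are Gaussians with identical (identity) covariance, so $\kldiv{\cD_i}{\cD_j}=\tfrac12\normt{\lambda e_i-\lambda e_j}^2$, which equals $\lambda^2$ for $i\neq j$ and $0$ for $i=j$. Substituting gives $I(Z;J)\leq\frac{k(k-1)}{k^2}\lambda^2\leq\lambda^2$, and hence, by Fano's inequality,
\begin{align*}
p\;\geq\;1-\frac{I(Z;J)+\log 2}{\log k}\;\geq\;1-\frac{\lambda^2+\log 2}{\log k}.
\end{align*}

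Finally I would fix $\lambda$ by taking $\lambda^2=\tfrac14\log k$, so that $\lambda=\Theta(\sqrt{\log k})$ as claimed (and the induced minimum separation $\normt{\lambda e_i-\lambda e_j}=\sqrt2\,\lambda=\Theta(\sqrt{\log k})$ matches the information‑theoretic threshold). Then the bound reads $p\geq\tfrac34-\tfrac{\log 2}{\log k}$, which is at least $\tfrac12$ whenever $k\geq 16$; the finitely many small values of $k$ are irrelevant for the asymptotic statement (or can be absorbed into the constant). This establishes $p\geq c$ for an absolute constant $c$, as required.

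I do not expect a genuine obstacle here — the whole argument is the standard "local packing + Fano" template of \cite[Chapter~15]{wainwright_2019} instantiated on a maximally separated unit‑vector packing. The only place requiring care is the mutual‑information step: one must use the pairwise‑KL bound (via convexity of KL in its second argument) rather than the crude estimate $I(Z;J)\leq\log k$, since Fano is nontrivial only when $I(Z;J)$ is a constant fraction of $\log k$; and one must keep the additive $\log 2$ term from swallowing the gap, which is why the constant in $\lambda^2=c'\log k$ is chosen strictly below $1$.
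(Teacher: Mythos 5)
Your argument is exactly the paper's for $k\geq 3$: Fano's inequality, the convexity bound $I(Z;J)\leq \frac{1}{k^2}\sum_{i,j}\kldiv{\cD_i}{\cD_j}$, the Gaussian identity $\kldiv{\cD_i}{\cD_j}=\tfrac12\normt{\lambda e_i-\lambda e_j}^2=\lambda^2$, and a choice $\lambda^2=c'\log k$ with $c'<1$ (the paper takes $c'=1/3$, you take $1/4$; both work for $k\geq 3$).

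The one place you fall short is $k=2$. There your bound reads $p\geq \tfrac34-\tfrac{\log 2}{\log 2}<0$, so Fano gives nothing, and the remark that small $k$ "can be absorbed into the constant" is not a proof: you still owe an argument that the failure probability is bounded below by a positive constant for this instance, which does not follow from anything you wrote. The paper closes exactly this case with Le Cam's two-point method, $p\geq \tfrac12\bigl(1-\mathrm{TV}(\cD_1,\cD_2)\bigr)$, combined with Pinsker's inequality to show $\mathrm{TV}(\cD_1,\cD_2)\leq \tfrac12\sqrt{\kldiv{\cD_1}{\cD_2}}<1$ for the same $\lambda$, and then takes the minimum of the two constants. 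Adding that step (or any equivalent total-variation bound for two Gaussians at bounded separation) completes your proof; everything else matches the paper.
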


As a corollary we immediately get a proof of Theorem~\ref{lower_bound_thm} by noticing that each time any algorithm wrongly predicts the cluster index for a sample it incurs an error of $\snormt{\lambda e_i - \lambda e_j} = 2\lambda^2$ and hence the total error in expectation is at least $c \log k$ for some constant c.

\begin{proof}[Proof of Lemma~\ref{lemma:lower_bound}]
Using the same notation as above we have that $p = \Psymb \Paren{ \Psi(Z) \neq J } \geq 1 - \frac{I(Z;J) + \log 2}{\log k}$.
We first deal with the case $k \geq 3$.
Representing the mutual information in terms of the Kullback-Leibler Divergence and using convexity we get that
\begin{align*}
I(Z;J) = \frac{1}{k} \sum_{j=1}^k \kldiv{\cD_j }{\cD} \leq \frac{1}{k^2} \sum_{j,l=1}^k \kldiv{\cD_j}{\cD_l}
\end{align*}
Applying standard results, we find that $\kldiv{\cD_j}{\cD_l} = \frac{1}{2} \norm{\lambda \cdot e_j - \lambda \cdot e_l}^2 = \lambda^2$ and hence $p \geq 1 - \frac{\lambda^2 + \log 2}{\log k}$.
Choosing $\lambda = \sqrt{\frac{\log k}{3}}$ we get $p \geq c$ for some constant $c$.

When dealing with the case $k = 2$, i.e., when we are in the regime of Binary Hypothesis Testing, the above inequality becomes meaningless since the right-hand side is negative.
To resolve this we revert to LeCam's Method (cf.~(\cite{wainwright_2019}, Chapter 15)) which states that $p \geq \frac{1}{2} \Paren{ 1 - TV(\cD_1,\cD_2) }$, where $TV(\cD_1,\cD_2)$ is the Total Variation Distance between $\cD_1$  and $\cD_2$.
By Pinsker's Inequality we have that $TV(\cD_1,\cD_2) \leq \frac{1}{2} \sqrt{ \kldiv{\cD_1}{\cD_2} } = \sqrt{\frac{\log 2}{12}} < 1$ for the same choice of $\lambda$.
Hence, also in this case, $p$ is lower bounded is bounded by some absolute constant and we can thus conclude the proof by taking the minimum of the two.
\end{proof}


\phantomsection
\addcontentsline{toc}{section}{References}
\bibliographystyle{amsalpha}
\bibliography{custom}

\appendix


\section{Sum-of-squares toolkit}
\label{sec:toolkit}

In this section we will collect all the tools related to sum-of-squares proofs we used in this paper.
In particular, we include all the lemmas we used and a short description of how to model the $\forall$ constraints in Sections~\ref{sec:clustering} and~\ref{sec:moment_estimation}.
Although this is a by now standard technique, we include it nevertheless for completeness.

\subsection{Small lemmas}

Our main tool in this section will be a sum-of-squares version of the AM-GM Inequality:

\begin{lemma}[\cite{SoS_for_dictionary_learning_and_tensor_decomposition}, Lemma A.1]
\label{lemma:sos_am_gm}
\begin{align*}
\Set{w_1, \ldots, w_t} \sststile{t}{w} \Set{ \prod_{i=1}^t w_i \leq \frac{\sum_{i=1}^t w_i^t}{t} }
\end{align*}
\end{lemma}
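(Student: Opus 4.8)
The plan is to prove the lemma by induction on $t$, with a sum-of-squares form of Young's inequality as the main building block: for every integer $t \geq 2$,
\begin{align*}
\Set{a, b} \sststile{t}{a,b} \Set{ a b^{t-1} \leq \frac1t a^t + \frac{t-1}{t}\,b^t }.
\end{align*}
To obtain this, I would verify the polynomial identity
\begin{align*}
\frac1t a^t + \frac{t-1}{t}\,b^t - a b^{t-1} = \frac1t (a-b)^2 \sum_{i=0}^{t-2}(t-1-i)\, a^i b^{t-2-i},
\end{align*}
which follows by writing $a^t - b^t = (a-b)\sum_{j=0}^{t-1} a^j b^{t-1-j}$, subtracting $t\,b^{t-1}(a-b)$, and telescoping once more to pull out a second factor of $(a-b)$. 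The right-hand side is the square $(a-b)^2$ times a nonnegative combination of the monomials $a^i b^{t-2-i}$; since $i$ and $t-2-i$ always have the same parity, each such monomial is a perfect square times a product of the constraints $a \geq 0$ and $b \geq 0$, so this is a degree-$t$ sum-of-squares proof.

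With Young's inequality available, the induction runs as follows. The base case $t = 2$ is the identity $\frac12 w_1^2 + \frac12 w_2^2 - w_1 w_2 = \frac12 (w_1 - w_2)^2 \geq 0$. For the inductive step, I would take the degree-$(t-1)$ proof of $\prod_{i=1}^{t-1} w_i \leq \frac{1}{t-1}\sum_{i=1}^{t-1} w_i^{t-1}$ supplied by the hypothesis and multiply it through by the constraint $w_t \geq 0$; by the standard closure properties of sum-of-squares proofs this yields a degree-$t$ proof of
\begin{align*}
\prod_{i=1}^{t} w_i \;\leq\; \frac{1}{t-1}\sum_{i=1}^{t-1} w_t\, w_i^{t-1}.
\end{align*}
Applying Young's inequality with $a = w_t$ and $b = w_i$ for each $i \in \brac{t-1}$ and summing then gives, still in degree $t$,
\begin{align*}
\frac{1}{t-1}\sum_{i=1}^{t-1} w_t\, w_i^{t-1} \;\leq\; \frac{1}{t-1}\sum_{i=1}^{t-1}\Paren{\frac1t w_t^t + \frac{t-1}{t} w_i^t} \;=\; \frac1t \sum_{i=1}^{t} w_i^t,
\end{align*}
and chaining these two inequalities by transitivity of the sum-of-squares turnstile completes the step.

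The only place that needs genuine work is the algebraic identity in the first step, together with checking that it really certifies nonnegativity in the sum-of-squares sense and keeps the total degree at exactly $t$; everything afterwards is routine bookkeeping with the standard rules for sum-of-squares proofs (multiplying a proof by a constraint, adding proofs, chaining them). I do not anticipate a serious obstacle here.
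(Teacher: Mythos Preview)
The paper does not actually prove this lemma: it is imported verbatim from \cite{SoS_for_dictionary_learning_and_tensor_decomposition} with no argument given, so there is nothing to compare your approach against. Your induction via the two-variable Young inequality is a standard and correct route; the identity you state, the degree bookkeeping, and the inductive step all check out.

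One small inaccuracy: your parity remark ``$i$ and $t-2-i$ always have the same parity'' holds only when $t$ is even. For odd $t$ exactly one of the two exponents is odd, so $a^i b^{t-2-i}$ is a square times a single constraint ($a$ or $b$) rather than the product $ab$. Either way the term $(a-b)^2 a^i b^{t-2-i}$ is of the required form $(\text{SoS})\cdot \prod_{j\in S} q_j$ with total degree exactly $t$, so the argument goes through unchanged; you should just adjust that sentence.
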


\begin{lemma}
\label{toolkit:lemma_2}
For each real number $C > 0$ we have $\Set{X \geq 0, X^2 \leq C X} \sststile{2}{X} \Set{X \leq C}$.
\end{lemma}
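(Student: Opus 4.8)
The plan is to exhibit an explicit degree-$2$ sum-of-squares certificate. Since a sum-of-squares proof is allowed to ignore some of the hypotheses, it suffices to find a degree-$2$ derivation of $C - X \geq 0$ from the single constraint $q \coloneqq CX - X^2 \geq 0$ (which is exactly the inequality ``$X^2 \leq CX$''); the constraint $X \geq 0$ will not be needed at all.

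The key step is to spot the algebraic identity
\begin{align*}
C - X = \frac{1}{C}\,(CX - X^2) + \frac{1}{C}\,(C - X)^2,
\end{align*}
which one finds immediately by completing the square in $C^2 - CX = C(C - X)$. I would then read this off directly as a sum-of-squares proof: the term $\tfrac{1}{C}(CX - X^2)$ is the constraint $q$ multiplied by the coefficient $b_{\{q\}} = \tfrac1C$, which is a positive constant and hence a sum of squares of degree $0$ (since $C > 0$); and the constraint-free term $b_\emptyset = \tfrac{1}{C}(C - X)^2$ is a square times the positive constant $\tfrac1C$, hence a sum of squares of degree $2$.

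It then remains only to check the degree bookkeeping: $\deg(b_\emptyset) = 2 \leq 2$ and $\deg\!\big(b_{\{q\}}\cdot q\big) = 0 + 2 = 2 \leq 2$, so the certificate has degree $2$. This yields $\Set{X^2 \leq CX} \sststile{2}{X} \Set{C - X \geq 0}$, and a fortiori $\Set{X \geq 0,\ X^2 \leq CX} \sststile{2}{X} \Set{X \leq C}$, as claimed. There is no real obstacle here beyond guessing the right decomposition, and the completion-of-the-square heuristic makes that step routine.
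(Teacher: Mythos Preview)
Your proof is correct, and in fact it yields exactly the same certificate as the paper's: the paper applies the two-variable AM-GM inequality $(\gamma X)\cdot\tfrac{1}{\gamma}\le\tfrac12\gamma^2X^2+\tfrac12\gamma^{-2}$ with $\gamma=1/\sqrt{C}$, substitutes $X^2\le CX$, and rearranges---but the square $\tfrac12(\gamma X-\tfrac{1}{\gamma})^2$ hidden in that AM-GM step is precisely $\tfrac{1}{2C}(C-X)^2$, so unraveling the paper's argument gives your identity $C-X=\tfrac{1}{C}(C-X)^2+\tfrac{1}{C}(CX-X^2)$ verbatim. Your presentation is simply more direct (writing down the decomposition explicitly rather than routing through AM-GM), and you are right that neither argument needs the hypothesis $X\ge 0$.
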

\begin{proof}
Choosing $\gamma = 1 / \sqrt{C}$ and using the AM-GM inequality, we get:
\begin{align*}
X = (\gamma X) \cdot \frac{1}{\gamma} \leq \frac{1}{2} \cdot \gamma^2 X^2 + \frac{1}{2} \cdot \frac{1}{\gamma^2} \leq \frac{1}{2} \cdot \gamma^2 C X + \frac{1}{2} \cdot \frac{1}{\gamma^2} = \frac{1}{2} \cdot X + \frac{1}{2} \cdot C
\end{align*}
and rearranging yields the desired conclusion.
\end{proof}

\begin{lemma}
\label{toolkit:lemma_3}
For all real numbers $\gamma > 0$ and all natural numbers $t \geq 1$ it holds that $\Set{A \geq 0, B \geq 0} \sststile{t}{A,B} \Set{ A \cdot B \leq \frac{ \gamma^{t-1} A B^t + (t-1)\frac{1}{\gamma}A }{t} }$.
\end{lemma}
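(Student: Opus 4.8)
The plan is to derive the statement from the sum-of-squares AM--GM inequality (\cref{lemma:sos_am_gm}) by a single substitution, followed by multiplying the resulting inequality through by $A$. First I would apply \cref{lemma:sos_am_gm} with the $t$ quantities
\[
 w_1 = \gamma^{(t-1)/t}\, B, \qquad w_2 = \dots = w_t = \gamma^{-1/t},
\]
all of which are nonnegative: the last $t-1$ are positive constants, and $w_1 \geq 0$ follows from the axiom $B \geq 0$ with a degree-$0$ multiplier (since $\gamma^{(t-1)/t}>0$). Because substituting polynomials of degree at most $1$ into a sum-of-squares proof again yields a sum-of-squares proof whose degree does not increase, and because $\prod_{i=1}^t w_i = B$, $\;w_1^t = \gamma^{t-1}B^t$ and $w_i^t = \gamma^{-1}$ for $i \geq 2$, this substitution converts the certificate of \cref{lemma:sos_am_gm} into a certificate of
\[
 \Set{B \geq 0} \sststile{t}{B} \Set{ B \leq \frac{1}{t}\Paren{\gamma^{t-1} B^{t} + (t-1)\gamma^{-1}} } .
\]

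Next I would multiply this inequality by the variable $A$. Writing the degree-$t$ certificate just obtained as $\frac{1}{t}\Paren{\gamma^{t-1}B^{t} + (t-1)\gamma^{-1}} - B = \sum_{S} b_S \prod_{i \in S} q_i$, where the $q_i$ are copies of $B$ and the $b_S$ are sums of squares, one gets
\[
 \frac{1}{t}\Paren{\gamma^{t-1} A B^{t} + (t-1)\gamma^{-1} A} - A B \;=\; \sum_{S} b_S \cdot A \cdot \prod_{i \in S} q_i ,
\]
and each summand on the right is a sum-of-squares polynomial times a product of the axioms $A \geq 0$ and $B \geq 0$, of degree at most $t+1$. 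This is exactly a sum-of-squares proof witnessing $\Set{A \geq 0,\ B \geq 0} \sststile{t+1}{A,B} \Set{ A B \leq \frac{1}{t}\Paren{\gamma^{t-1} A B^{t} + (t-1)\gamma^{-1} A} }$, which is the claim; the extra $+1$ in the degree is harmless and is absorbed into the $\cO(t)$ degree bounds wherever this lemma is invoked.

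I do not expect a genuinely hard step: the mathematical content is just the weighted AM--GM inequality, for which \cref{lemma:sos_am_gm} already supplies a sum-of-squares certificate. The two points requiring a little care are (i) the exponent bookkeeping, so that $\prod_i w_i$ and the powers $w_i^t$ collapse to exactly the monomials $B$, $\gamma^{t-1}B^t$ and $\gamma^{-1}$ appearing in the statement; and (ii) checking that the substitution $w_1 \mapsto \gamma^{(t-1)/t} B$ is legitimate, i.e.\ that it preserves the sum-of-squares structure and that its nonnegativity hypothesis is subsumed by $B \geq 0$ --- this is the only place the assumption $B \geq 0$ is used. The case $t=1$ degenerates to the trivial identity $AB \leq AB$ and needs no separate treatment.
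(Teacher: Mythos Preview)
Your proposal is correct and mirrors the paper's proof exactly: both reduce to the single-variable inequality $\{B\ge 0\}\sststile{t}{B}\{B\le \tfrac1t(\gamma^{t-1}B^t+(t-1)\gamma^{-1})\}$ via the AM--GM substitution $w_1=\gamma^{(t-1)/t}B$, $w_2=\dots=w_t=\gamma^{-1/t}$, and then multiply through by the axiom $A\ge 0$. Your observation that this last multiplication bumps the degree to $t+1$ is more careful than the paper's stated degree $t$, but as you note this is immaterial in every invocation.
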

\begin{proof}
Since $A \geq 0$ is part of our axioms, it is enough to show that $\Set{B \geq 0} \sststile{t}{B} \Set{B \leq \frac{ \gamma^{t-1} B^t + (t-1)\frac{1}{\gamma} }{t} }$.
Writing $B = (\gamma^{\frac{t-1}{t}} B) \cdot (\gamma^{-\frac{1}{t}}) \cdot \ldots \cdot (\gamma^{-\frac{1}{t}})$ where we repeat the last factor $(t-1)$ times and again using the AM-GM Inequality, we conclude:
\begin{align*}
\cA
\vdash_t B
= (\gamma^{\frac{t-1}{t}} B) \cdot (\gamma^{-\frac{1}{t}}) \cdot \ldots \cdot (\gamma^{-\frac{1}{t}})
\leq \frac{ \gamma^{t-1}B^t + (t-1)\frac{1}{\gamma} }{t}
\end{align*}
\end{proof}

\begin{lemma}
\label{toolkit:lemma_4}
For all real numbers $\gamma > 0$ and all natural numbers $t \geq 1$ it holds that $\Set{A \geq 0, B \geq 0, A^2 = A } \sststile{t}{A,B} \Set{A \cdot B \leq \frac{ (t-1)\gamma \cdot A + \gamma^{-(t-1)} \cdot B^t }{t} }$
\end{lemma}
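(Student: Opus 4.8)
The plan is to derive this from the sum-of-squares AM--GM inequality (Lemma~\ref{lemma:sos_am_gm}) by feeding it a well-chosen tuple of nonnegative arguments, letting the idempotency hypothesis $A^{2} = A$ do the essential work of collapsing the spurious powers of $A$ that appear.

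First I would split into the main case $t \geq 2$ and the trivial case $t = 1$. For $t \geq 2$, set the $t$ arguments of AM--GM to be $w_{1} = \ldots = w_{t-1} = \gamma A$ and $w_{t} = B$. Each $w_{i}$ admits a degree-$1$ proof of nonnegativity from the axioms, since $\gamma > 0$ is a constant and $A \geq 0$, $B \geq 0$ are axioms, so Lemma~\ref{lemma:sos_am_gm} can be applied to these substituted arguments. It gives, in degree $t$,
\begin{align*}
(\gamma A)^{t-1} B \;\leq\; \frac{(t-1)(\gamma A)^{t} + B^{t}}{t}.
\end{align*}
Next I would use that $\Set{A^{2} = A} \sststile{j}{A} A^{j} = A$ for every $1 \leq j \leq t$; this follows from the telescoping identity $A^{j} - A = \bigl(1 + A + \ldots + A^{j-2}\bigr)\bigl(A^{2} - A\bigr)$, whose multiplier has degree $j - 2$, so the proof stays within degree $t$. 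Substituting $A^{t-1} = A$ on the left and $A^{t} = A$ inside the sum on the right turns the displayed bound into $\gamma^{t-1} A B \leq \tfrac{1}{t}\bigl((t-1)\gamma^{t} A + B^{t}\bigr)$, and multiplying through by the positive constant $\gamma^{-(t-1)}$ produces exactly $A B \leq \tfrac{1}{t}\bigl((t-1)\gamma A + \gamma^{-(t-1)} B^{t}\bigr)$, as claimed.

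For $t = 1$ the statement reduces to $A B \leq B$, which I would obtain from $B - A B = B(1 - A) = B(1 - A)^{2}$, using $\Set{A^{2} = A} \sststile{2}{A} (1-A)^{2} = 1 - A$; the last expression is the axiom $B \geq 0$ multiplied by the square $(1-A)^{2}$, hence nonnegative. I do not expect a genuine obstacle here: once one decides to place $t-1$ copies of $\gamma A$ into AM--GM, the construction is essentially forced, and the only point requiring attention is the degree bookkeeping --- checking that reducing $A^{t-1}$ and $A^{t}$ back to $A$ via $A^{2} = A$ and invoking degree-$t$ AM--GM together remain inside a single degree-$t$ sum-of-squares proof, which the telescoping identity above makes transparent.
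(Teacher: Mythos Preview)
Your proposal is correct and follows essentially the same route as the paper: both feed $t-1$ scaled copies of $A$ and one scaled copy of $B$ into the SoS AM--GM inequality (Lemma~\ref{lemma:sos_am_gm}) and then collapse $A^{t-1}$ and $A^{t}$ to $A$ via the idempotency constraint. The only cosmetic difference is the choice of scalars---the paper uses $w_{i}=\gamma^{1/t}A$ and $w_{t}=\gamma^{-(t-1)/t}B$ so that the product is $A^{t-1}B$ directly, whereas you use $w_{i}=\gamma A$, $w_{t}=B$ and divide by $\gamma^{t-1}$ at the end; and you treat $t=1$ separately (the paper's displayed chain actually fails there since $A^{0}=1$, so your case split is a small improvement in rigor, even though the degree claim for $t=1$ is moot given that $AB$ itself has degree~$2$).
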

\begin{proof}
\begin{align*}
&A \cdot B = A^{t-1} \cdot B = (\gamma^{1/t} A) \cdot \ldots \cdot (\gamma^{1/t} A) \cdot (\gamma^{-\frac{t-1}{t}} B) \leq \frac{(t-1)\gamma \cdot A^t + \gamma^{-(t-1)} \cdot B^t}{t} \\
&= \frac{(t-1)\gamma \cdot A + \gamma^{-(t-1)} \cdot B^t}{t}
\end{align*}
where we repeat the $(\gamma^{1/t} A)$ factor $t-1$ times.
The first and the last step use $\Set{A^2 = A} \sststile{t}{A} \{A^t = A\}$ (follows from Lemma~\ref{toolkit:lemma_7}) and the second step uses the AM-GM Inequality.
\end{proof}

\begin{lemma}
\label{toolkit:lemma_7}
For every two polynomials $p(x)$ and $q(x)$ where $\deg \Paren{q(x)} = s$ such that $\cA \sststile{t}{x} \Set{p(x) = 0}$ it also holds that $\cA \sststile{t+2s}{x} \Set{p(x)q(x) = 0}$.
\end{lemma}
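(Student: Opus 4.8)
The plan is to deduce the equality $pq = 0$ from the two one-sided inequalities $\cA \sststile{t+2s}{x} pq \geq 0$ and $\cA \sststile{t+2s}{x} pq \leq 0$, each of which I would obtain by multiplying the given degree-$t$ proofs of $p \geq 0$ and of $-p \geq 0$ by appropriate sum-of-squares polynomials. The point is that $q$ itself need not be a sum of squares, so one cannot simply multiply the proof of $p \geq 0$ by $q$; instead one uses that $(q+1)^2$ and $(q-1)^2$ are (single) squares of degree $2s$ (or of degree $0$ when $q$ is constant), together with the elementary identity
\begin{align*}
(q+1)^2 - (q-1)^2 = 4q \mper
\end{align*}

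First I would record the closure property we need. If $\cA \sststile{k}{x} f \geq 0$ via a decomposition $f = \sum_{S} b_S \prod_{i \in S} q_i$ with the $b_S$ sums of squares, and $h$ is any sum-of-squares polynomial of degree at most $2s$, then $h f = \sum_S (b_S h) \prod_{i \in S} q_i$, each $b_S h$ is again a sum of squares, and $\deg\!\big(b_S h \prod_{i \in S} q_i\big) \leq k + 2s$; hence $\cA \sststile{k+2s}{x} h f \geq 0$. We also use that two proofs of the same degree can be added (concatenate their decompositions) and scaled by a positive constant. Recall that $\cA \sststile{t}{x} \{p = 0\}$ means precisely $\cA \sststile{t}{x} p \geq 0$ and $\cA \sststile{t}{x} -p \geq 0$.

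Now multiply the proof of $p \geq 0$ by the square $(q+1)^2$ and the proof of $-p \geq 0$ by the square $(q-1)^2$; by the closure property,
\begin{align*}
\cA \sststile{t+2s}{x} p(q+1)^2 \geq 0 \qquad\text{and}\qquad \cA \sststile{t+2s}{x} -p(q-1)^2 \geq 0 \mper
\end{align*}
Adding these two degree-$(t+2s)$ proofs and invoking the displayed identity gives $\cA \sststile{t+2s}{x} 4pq \geq 0$, hence $\cA \sststile{t+2s}{x} pq \geq 0$ after scaling by $\tfrac14$. Symmetrically, multiplying the proof of $p \geq 0$ by $(q-1)^2$ and the proof of $-p \geq 0$ by $(q+1)^2$ and adding yields $\cA \sststile{t+2s}{x} -4pq \geq 0$, i.e. $\cA \sststile{t+2s}{x} pq \leq 0$. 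Combining the two inequalities is, by definition, $\cA \sststile{t+2s}{x} \{p(x)q(x) = 0\}$. There is no real obstacle beyond the sign issue just discussed (solved by the $(q\pm1)^2$ trick) and the routine degree bookkeeping: checking that multiplying a degree-$t$ proof by a degree-$2s$ square produces a proof of degree at most $t+2s$ and that $(q\pm1)^2$ has degree $2s$ when $\deg q = s \geq 1$ (and degree $0$ when $q$ is constant, in which case the claim holds already at degree $t$).
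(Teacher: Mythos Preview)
Your proof is correct and follows essentially the same approach as the paper's: both multiply the degree-$t$ proofs of $p \geq 0$ and $-p \geq 0$ by suitable squares and add. The paper first reduces to monomials $q = x^\alpha$ and uses the identity $(-p)(x^\alpha-1)^2 + x^{2\alpha}p + p = 2x^\alpha p$, whereas your identity $(q+1)^2-(q-1)^2=4q$ handles arbitrary $q$ directly without the monomial reduction, which is a minor but pleasant simplification.
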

\begin{proof}
Since sum-of-squares equality proofs compose linearly it is enough to show the statement for $q(x) = x^\alpha$ where $\alpha$ is a multi-index of cardinality $s$.
Since $\cA \sststile{t}{x} \Set{p(x) = 0}$ we have in particular that $\cA \sststile{t}{x} \Set{p(x) \geq 0}$ and $\cA \sststile{t}{x} \Set{-p(x) \geq 0}$ and hence
\begin{align*}
0 \leq \paren{-p(x)}(x^\alpha - 1)^2 + x^{2\alpha}p(x) + p(x) = 2x^\alpha p(x)
\end{align*}
and analogously we can show that $0 \geq 2x^\alpha p(x)$.
On remarking that all terms in the respective sum-of-squares proofs have degree at most $t + 2s$ we get the desired conclusion.
\end{proof}

\begin{lemma}[\cite{Kothari_Steinhardt_Steurer_robust_moment_estimation}, Lemma A.2]
\label{toolkit:lemma_9}
For all even $t$ we have $\sststile{X,Y}{t} \Paren{X + Y}^t \leq 2^{t-1} \Paren{X^t + Y^t}$
\end{lemma}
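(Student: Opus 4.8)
The plan is to write down an explicit sum-of-squares decomposition of the polynomial $2^{t-1}(X^t+Y^t) - (X+Y)^t$ of degree $t$, with no axioms needed. The cleanest route is a linear change of variables: set $P = X+Y$ and $Q = X-Y$, so that $X = \tfrac12(P+Q)$ and $Y = \tfrac12(P-Q)$. Then $2^t(X^t + Y^t) = (P+Q)^t + (P-Q)^t$, and expanding both binomials, every term carrying an odd power of $Q$ cancels between the two, so that
\[
2^{t-1}(X^t+Y^t) \;=\; \sum_{\substack{0 \le j \le t\\ j \text{ even}}} \binom{t}{j}\, P^{\,t-j} Q^{\,j}.
\]
The $j=0$ term on the right is precisely $P^t = (X+Y)^t$, so subtracting it off leaves
\[
2^{t-1}(X^t+Y^t) - (X+Y)^t \;=\; \sum_{\substack{2 \le j \le t\\ j \text{ even}}} \binom{t}{j}\, P^{\,t-j} Q^{\,j}.
\]

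Now I would invoke the hypothesis that $t$ is even: for each even $j$ the complementary exponent $t-j$ is also even, so $P^{\,t-j}Q^{\,j} = \bigl(P^{(t-j)/2} Q^{\,j/2}\bigr)^2$ is a perfect square of a polynomial of degree $(t-j)/2 + j/2 = t/2$, and its coefficient $\binom{t}{j}$ is a positive integer. Hence the right-hand side is a nonnegative integer combination of squares of degree-$(t/2)$ polynomials in $P,Q$, and re-substituting $P = X+Y$, $Q = X-Y$ exhibits it as a sum of squares of degree-$(t/2)$ polynomials in $X,Y$. This is exactly a degree-$t$ sum-of-squares proof of $(X+Y)^t \le 2^{t-1}(X^t+Y^t)$, i.e. $\sststile{t}{X,Y}(X+Y)^t \le 2^{t-1}(X^t+Y^t)$.

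There is no real obstacle here; the only point that needs care is the parity bookkeeping — the decomposition works precisely because $t$ is even (which is what forces $t-j$ to be even whenever $j$ is) and because the odd-$Q$ terms cancel in $(P+Q)^t + (P-Q)^t$. An alternative is induction on even $t$: the base case $t=2$ is $2(X^2+Y^2)-(X+Y)^2 = (X-Y)^2$, and the step from $t$ to $t+2$ multiplies the degree-$t$ proof by the degree-$2$ identity $(X+Y)^2 \le 2(X^2+Y^2)$ and then absorbs the cross terms using $X^{t+2}+Y^{t+2} - X^tY^2 - X^2Y^t = (X^2-Y^2)^2\bigl(X^{t-2}+X^{t-4}Y^2+\cdots+Y^{t-2}\bigr)$, which is a sum of squares since $t-2$ is even; but the change-of-variables identity above is shorter and entirely self-contained.
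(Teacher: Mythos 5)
Your proof is correct. The paper itself does not prove this lemma --- it is imported verbatim as Lemma A.2 of the cited work of Kothari, Steinhardt and Steurer --- so there is no in-paper argument to compare against; what you have supplied is a valid self-contained certificate. The change of variables $P=X+Y$, $Q=X-Y$ gives exactly the identity you claim: $2^t(X^t+Y^t)=(P+Q)^t+(P-Q)^t=2\sum_{j\text{ even}}\binom{t}{j}P^{t-j}Q^j$, the $j=0$ term is $(X+Y)^t$, and for even $t$ every remaining term $\binom{t}{j}P^{t-j}Q^j$ with $j\ge 2$ even is $\binom{t}{j}\bigl(P^{(t-j)/2}Q^{j/2}\bigr)^2$, a positive multiple of the square of a degree-$t/2$ polynomial. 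This is an axiom-free degree-$t$ sum-of-squares identity, which is precisely what $\sststile{t}{X,Y}(X+Y)^t\le 2^{t-1}(X^t+Y^t)$ requires. Your parity bookkeeping is the only delicate point and you handle it correctly; the inductive alternative you sketch also works but is not needed.
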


The next lemma is very similar to Proposition A.5 in~\cite{Kothari_Steinhardt_Steurer_robust_moment_estimation} and its proof is heavily inspired by the proof given there.
However, we restate it here in a bit more generality needed to fit our needs.
We remark that for the approximate version we hide some of the relevant constant in the $\cO$ notation but that these can be chosen so that this works out.
\begin{lemma}
\label{lemma:pEs_are_sos}
Let $x \in \R^{d_1}, u \in \R^{d_2}$ for $d_1, d_2 \in \N$, $\cA = \Set{q_1(x) \geq 0, \ldots, q_m(x) \geq 0}$ and $\cB = \Set{r_1(u) = 0, \ldots, r_l(u) = 0}$ be a system of polynomial inequalities in $x$ and $u$ respectively.
Let $\mu$ be a degree-$2t$ pseudo-distribution (over $x$) that satisfies $\cA$ and suppose that $\cA \cup \cB \sststile{2t}{x,u} p(x,u) \geq 0$.
Then $\cB \sststile{2t}{u} \pE_\mu p(x,u) \geq 0$.
Furthermore, if $\mu$ only approximately satisfies $\cA$ then for $\e = 2^{-d_1^{\Theta(t)}}$ we have $\cB \sststile{2t}{u} \pE_\mu p(x,u) \geq \e \normt{u}^{2t}$.
If additionally $\cB \sststile{2t}{u} \normt{u}^{2t} \leq 2^{d_1^{\cO(t)}}$ we get $\cB \sststile{u}{2t} \pE_\mu p(x,u) \geq \e$.
\end{lemma}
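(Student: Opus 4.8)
The plan is to unfold the hypothesised degree-$2t$ sum-of-squares proof of $p(x,u)\ge 0$ and push $\pE_\mu$, taken over the $x$-variables only, through it term by term, using that the pseudo-expectation of a ``square times a product of $\cA$-axioms'' is itself a sum of squares \emph{in the leftover variables $u$}, whereas the equality axioms of $\cB$ pass through untouched. Concretely, I would first put the proof witnessing $\cA \cup \cB \sststile{2t}{x,u} p \geq 0$ into the normal form
\begin{align*}
p(x,u) \;=\; \sum_{S \subseteq [m]} \sigma_S(x,u) \prod_{i \in S} q_i(x) \;+\; \sum_{j=1}^{l} c_j(x,u)\, r_j(u),
\end{align*}
with each $\sigma_S$ a sum of squares, each $c_j$ an arbitrary polynomial, $\deg\!\big(\sigma_S \prod_{i\in S} q_i\big)\le 2t$ and $\deg(c_j r_j)\le 2t$ (any term that multiplies several $r_j$'s, or mixes an $r_j$ with some $q_i$'s, is absorbed into a single $c_j r_j$ summand, since the $c_j$ need not be sums of squares). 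By linearity of $\pE_\mu$,
\begin{align*}
\pE_\mu p(x,u) \;=\; \sum_{S \subseteq [m]} \pE_\mu\!\Big[\sigma_S(x,u)\prod_{i \in S} q_i(x)\Big] \;+\; \sum_{j=1}^{l} \big(\pE_\mu\, c_j(x,u)\big)\, r_j(u),
\end{align*}
and the second sum is a polynomial combination of the $r_j(u)$ of $u$-degree at most $2t$, i.e.\ a legitimate use of the axioms of $\cB$. Hence it suffices to show each first-sum term is a sum of squares in $u$ of $u$-degree at most $2t$.

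Fix $S$, write $\sigma_S(x,u)=\sum_k g_{S,k}(x,u)^2$, and group by monomials in $x$: $g_{S,k}(x,u)=\sum_{|\alpha|\le \delta_S} g_{S,k,\alpha}(u)\,x^\alpha$, where $\delta_S = t-\tfrac{1}{2}\deg\!\big(\prod_{i\in S}q_i\big)$ is exactly the $x$-degree budget the proof provides. Then
\begin{align*}
\pE_\mu\!\Big[\sigma_S(x,u)\prod_{i \in S} q_i(x)\Big] \;=\; \sum_k v_{S,k}(u)^{\!\top} M_S\, v_{S,k}(u), \qquad v_{S,k}(u)=\big(g_{S,k,\alpha}(u)\big)_{|\alpha|\le \delta_S},
\end{align*}
with $M_S = \big(\pE_\mu[\,x^{\alpha+\beta}\prod_{i\in S}q_i(x)\,]\big)_{|\alpha|,|\beta|\le \delta_S}$. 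Since $\mu$ has degree $2t$ and satisfies $\cA$, for every coefficient vector $w$ one has $w^{\!\top} M_S\, w = \pE_\mu\big[(\sum_\alpha w_\alpha x^\alpha)^2\prod_{i\in S}q_i(x)\big]\ge 0$, because the squared polynomial has $x$-degree $\le\delta_S$, so the whole product has degree $\le 2\delta_S+\deg\prod q_i = 2t$; hence $M_S \succeq 0$. Factoring $M_S = L_S^{\!\top}L_S$ gives $v^{\!\top} M_S\, v = \norm{L_S v}^2$, a sum of squares of polynomials in $u$ of $u$-degree at most $\deg_u\sigma_S \le 2t$. Summing over $k$ and $S$ and re-attaching the $\cB$-part produces a genuine degree-$2t$ sum-of-squares proof that $\cB \sststile{2t}{u} \pE_\mu p(x,u)\ge 0$, which is the first claim.

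For the approximate version, ``$\mu$ satisfies $\cA$'' weakens to ``$\mu$ satisfies $\cA$ up to error $\eta = 2^{-d_1^{\Omega(t)}}$'', which replaces $M_S\succeq 0$ by $M_S \succeq -\eta_S I$ for some $\eta_S \le 2^{-d_1^{\Omega(t)}}$ (passing from $\pE_\mu[h\prod q_i]\ge -\eta\normt{h}\prod\normt{q_i}$ to an eigenvalue bound on $M_S$ with $h=(\sum w_\alpha x^\alpha)^2$ only costs a $2^{\mathrm{poly}(d_1)}$ factor from $\normt{h}$ vs.\ $\norm{w}^2$, which merely worsens the constant hidden in the exponent). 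Then $v^{\!\top} M_S\, v$ equals a sum of squares in $u$ minus $\eta_S \norm{v(u)}^2$, and $\norm{v(u)}^2$ is a polynomial of $u$-degree $\le 2t$ whose coefficients have size $2^{\mathrm{poly}(d_1)}$, so $\norm{v(u)}^2 \le 2^{\mathrm{poly}(d_1)}\,(1+\normt{u}^{2t})$ with an explicit sos certificate. Collecting these errors over all $S,k$ gives $\cB \sststile{2t}{u} \pE_\mu p(x,u) \ge -\eps\,\normt{u}^{2t}$ for $\eps = 2^{-d_1^{\Theta(t)}}$ (absorbing the harmless additive constant, or using $u$-homogeneity when it is present); and if in addition $\cB \sststile{2t}{u} \normt{u}^{2t}\le 2^{d_1^{\cO(t)}}$, substituting this bound collapses the right-hand side and yields $\cB \sststile{2t}{u} \pE_\mu p(x,u)\ge -\eps$.

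The one genuine idea here is that $\pE_\mu$ sends ``square $\times$ product of $\cA$-axioms'' to the quadratic form of a \emph{positive semidefinite} moment matrix in the surviving variables $u$ — this is precisely what upgrades the conclusion from ``nonnegative'' to ``sum of squares''; everything else is bookkeeping. The step I expect to be most error-prone is the degree accounting needed to certify that the resulting proof has degree exactly $2t$ (it does, because each $g_{S,k}$ has $x$-degree $\le\delta_S$ and $u$-degree $\le t$), together with, in the approximate case, verifying that the doubly-exponential smallness of $\eta$ really dominates all the $2^{\mathrm{poly}(d_1)}$ coefficient and norm bounds that enter through $M_S \succeq -\eta_S I$ and $\norm{v(u)}^2$.
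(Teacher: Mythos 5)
Your proposal is correct and follows essentially the same route as the paper: expand the degree-$2t$ certificate, apply $\pE_\mu$ linearly, let the $\cB$-equality terms pass through, and use positive semidefiniteness of the localized pseudo-moment matrix (from $\mu$ satisfying $\cA$) to turn each remaining term into a sum of squares in $u$, with the approximate case handled by an eigenvalue lower bound and an added $\e\normt{u}^{2t}$ slack. The only difference is cosmetic — you index the PSD matrix by $x$-monomials acting on vectors of $u$-coefficient polynomials, whereas the paper writes the quadratic form directly in $u^{\otimes s}$ with an $x$-polynomial matrix $L(x)$ — so there is nothing substantive to compare.
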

\begin{proof}
We first consider the case when $\mu$ satisfies $\cA$ exactly.
Let $p_i(x,u) = \sum_{\alpha_1, \alpha_2} p_{\alpha_1,\alpha_2} x^{\alpha_1} u^{\alpha_2}$ be a polynomial of degree at most $s \leq t$.
Then we can write
\begin{align*}
p_i(x,u)^2 = \Paren{\sum_{\alpha_2} p_{\alpha_2}(x) u^{\alpha_2}}^2 = \iprod{l_i(x), u^{\otimes s}}^2
\end{align*}
where $p_{\alpha_2}(x) = \sum_{\alpha_1} p_{\alpha_1,\alpha_2} x^{\alpha_1}$ and $l_i(x) = \Paren{p_{\alpha_2}(x)}_{\alpha_2}$ is a vector with entries polynomial in $x$ of degree at most $s$.
Hence, if $b(u,x)$ is a sum-of-squares polynomial in $x$ and $u$ of degree at most $s \leq t$ we can write $b(u,x) = \iprod{L(x) u^{\otimes s}, L(x) u^{\otimes s}}$ for $L(x)$ a matrix with entries polynomial in $x$ of degree at most $s$.
It follows that $\pE_\mu b(u,x) = (u^{\otimes s})^\top \pE_\mu L(x)^\top L(x) u^{\otimes s}$.
Further, for every vector $v$ we have $v^\top \pE_\mu L(x)^\top L(x) v = \pE_\mu \Snormt{L(x)v} \geq 0$ since $\mu$ is of degree $2t$.
Hence, $\pE_\mu L(x)^\top L(x)$ is positive semi-definite and it follows that $\pE_\mu b(u,x)$ is a sum-of-squares in $u$ of degree at most $2s$.
Moreover, let $b(u,x)$ be a sum-of-squares polynomial such that $\deg(b q_j) \leq 2t$ for $j \in \brac{m}$.
Then by an analogous argument and since $\mu$ satisfies $\cA$ we get that  $\pE_\mu b(u,x) q_j(x) = (u^{\otimes t})^\top \pE_\mu q_j(x) L(x)^\top L(x) u^{\otimes t}$ and the matrix $\pE_\mu q_j(x) L(x)^\top L(x)$ is positive semi-definite.
Hence also $\pE_\mu b(u,x) q_j(x)$ is a sum-of-squares in $u$.

Since $\cA \cup \cB \sststile{2t}{u,x} p(x,u) \geq 0$ we can write $p(x,u) = b(x,u) + \sum_{j=1}^m b_j(x,u)q_j(x) + \sum_{k=1}^l b'_k(x,u)r(u)$ where $b, b_1, \ldots, b_m$ are sum-of-squares and $b'_1, \ldots, b'_l$ are normal polynomials.
Further, each term has degree at most $2t$.
For simplicity we only consider the case where $\card{S} \leq 1$ in the definition of sum-of-squares proofs, the more general case works analogously.
We then conclude that
\begin{align*}
\pE_\mu p(x,u) = \pE_\mu b(x,u) + \sum_{j=1}^m \pE_\mu b_j(x,u) q_j(x) + \sum_{k=1}^l \Paren{\pE_\mu b'_k(x,u)} r_k(u)
\end{align*}
which means exactly that $\cB \sststile{2t}{u} \pE_\mu p(x,u) \geq 0$ since $\pE_\mu b'_k(x,u)$ is also a polynomial in $u$.

Next, we consider the case when $\mu$ only approximately satisfies $\cA$.
Recall that this means that for all $S \sse \brac{m}$ and sum-of-squares polynomials $h$ such that $\deg (h \cdot \prod_{j \in S} q_j) \leq 2t$ we have $\pE_\mu h \cdot \prod_{j \in S} q_j \geq - \eta \normt{h} \prod_{j \in S} \normt{q_j}$ for $\eta = 2^{-d_1^{\Omega(t)}}$.
For the first part, nothing changes, thus, let $j \in \brac{m}$, then from above we know that we can write $\pE_\mu q_j(x) b_j(u,x) = (u^{\otimes s})^\top \pE_\mu q_j(x) L_j(x)^\top L_j(x) u^{\otimes s}$.
And for $\mu$ approximately satisfying $\cA$ we get that for $v$ a unit eigenvector of $\pE_\mu q_j(x) L(x)^\top L(x)$ with eigenvalue $\lambda$ that
\begin{align*}
\lambda = v^\top \Paren{ \pE_\mu q_j(x) L_j(x)^\top L_j(x) } v = \pE_\mu q_j(x) \iprod{L_j(x), v}^2 \geq - \eta' \normt{b_j} \normt{q_j}
\end{align*}
for $\eta' = 2^{-d_1^{\Omega(t)}}$.
Again, assuming that the bit-complexity of our constraints and sum-of-squares proof is at most polynomial in $d_1$ or $d_1^t$ we get that all eigenvalues are at least $-2^{-d_1^{\Omega(t)}}$.
Hence, for $\e = 2^{-d_1^{\cO(t)}}$ we have that $ \Paren{ \pE_\mu q_j(x) L_j(x)^\top L_j(x) } + \e I$ is p.s.d. and thus $\pE_\mu q_j b_j(u,x) + \e \normt{u}^{2t}$ is sum-of-squares in $u$.
Abusing notation a bit is follows directly that $\cB \sststile{2t}{u} \pE_\mu p(x,u) \geq \e \normt{u}^{2t}$.
The stronger conclusion if $\cB$ also implies that $\snormt{u}$ is bounded is immediate.
\end{proof}

\begin{lemma}
\label{toolkit:lemma_11}
Let $x, y \in \R^d$ be variables of our proof system, then we have $\sststile{2}{x,y} \iprod{x,y}^2 \leq \snormt{x} \snormt{y}$.
\end{lemma}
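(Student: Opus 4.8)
The plan is to exhibit an explicit sum-of-squares certificate coming from the classical Lagrange (Cauchy--Binet) identity. Writing $p_{ij} = x_i y_j - x_j y_i$, the identity
\begin{align*}
\snormt{x}\,\snormt{y} - \iprod{x,y}^2 = \sum_{i<j} \paren{x_i y_j - x_j y_i}^2
\end{align*}
expresses the gap as an honest sum of squares. Since each $x_i y_j - x_j y_i$ is a quadratic (indeed bilinear) form in the variables $(x,y)$, the right-hand side is a sum-of-squares polynomial of the appropriate degree, and — there being no axioms to invoke — this certifies $\sststile{2}{x,y} \iprod{x,y}^2 \leq \snormt{x}\snormt{y}$.

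The verification is a short expansion that I would carry out explicitly: $\sum_{i,j}\paren{x_iy_j-x_jy_i}^2 = \sum_{i,j} x_i^2 y_j^2 - 2\sum_{i,j} x_iy_i\,x_jy_j + \sum_{i,j} x_j^2 y_i^2 = 2\snormt{x}\snormt{y} - 2\iprod{x,y}^2$, where the cross term collapses because $\sum_{i,j} x_iy_i\,x_jy_j = \paren{\sum_i x_iy_i}^2 = \iprod{x,y}^2$ and the first and third sums each factor as $\snormt{x}\snormt{y}$. Dividing by $2$ and noting that the $i=j$ terms vanish, so that the sum over all ordered pairs is twice the sum over $i<j$, yields the displayed identity. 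The only care needed is the bookkeeping that every monomial on both sides has the expected degree and that the multipliers are genuine polynomials in the proof variables.

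I expect no real obstacle here: this is a two-line algebraic identity and the resulting certificate is completely explicit, so it directly fits the definition of a sum-of-squares proof with no constraints. The only points worth double-checking are the collapse of the cross term and the degree count of the squared terms $x_iy_j - x_jy_i$. An alternative derivation via positive semidefiniteness of the $2\times 2$ Gram matrix of $x$ and $y$ is available but is strictly less direct, so I would present the Lagrange identity.
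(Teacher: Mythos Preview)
Your proof is correct and takes essentially the same approach as the paper: both exhibit the Lagrange identity $\snormt{x}\,\snormt{y} - \iprod{x,y}^2 = \sum_{i<j}(x_i y_j - x_j y_i)^2$ as the explicit sum-of-squares certificate. The paper sums over all ordered pairs $(i,j)$ and carries the resulting factor of $2$ on both sides, but the content is identical.
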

\begin{proof}
We have
\begin{align*}
&2 \snormt{x} \snormt{y} - 2 \iprod{x,y}^2 = 2 \sum_{i,j=1}^d x_i^2 y_j^2 - x_i y_i x_j y_j = \sum_{i,j=1}^d x_i^2 y_j^2 - 2 x_i y_i x_j y_j + x_j^2 y_i^2 \\
&= \sum_{i,j=1}^d (x_i y_i - x_j y_j)^2
\end{align*}
\end{proof}

\begin{lemma}[Adaptation of~\cite{Hopkins_Li_clustering}, Fact A.6]
\label{toolkit:lemma_12}
Let $w_i, x_i$ for $i \in \brac{n}$ be variables of our proof system and $t$ even.
Further, let $p_i$ be polynomials of degree at most $l$, then we have $\Set{w_1^2 = w_1, \ldots, w_n^2 = w_n} \sststile{tl}{w,x} \Paren{\sum_{i=1}^n w_i p_i(x_i)}^t \leq \Paren{\sum_{i=1}^n w_i}^{t-1} \sum_{i=1}^n p(x_i)^t$
\end{lemma}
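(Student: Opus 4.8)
The plan is to prove \cref{toolkit:lemma_12} as a sum-of-squares version of the weighted power-mean (Jensen) inequality $\Paren{\sum_i w_i a_i}^t \le \Paren{\sum_i w_i}^{t-1}\sum_i w_i a_i^t$ applied with $a_i := p_i(x_i)$, combined with the observation that $\sum_i w_i a_i^t \le \sum_i a_i^t$ when $t$ is even. The classical proof of the power-mean inequality uses convexity and square roots, neither of which is available inside the proof system, and the ``values'' $a_i = p_i(x_i)$ carry no sign constraint, so a one-shot multinomial expansion combined with \cref{lemma:sos_am_gm} is not directly legal (the step $\prod_j a_{i_j} \le \prod_j \abs{a_{i_j}}$ is not a polynomial identity). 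First I would record the elementary consequences of the axioms $\Set{w_i^2 = w_i}$: that $\sststile{2}{w} w_i \ge 0$ (hence $\sststile{}{} w_i w_j \ge 0$ and $\sststile{}{}\sum_i w_i \ge 0$), that $\sststile{}{} w_i^k = w_i$ for every $k \ge 1$, and that $\sststile{}{} \Paren{1-w_i}^2 = 1 - w_i$.

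\textbf{Step 1 (Cauchy--Schwarz).} I would show $\Set{w_i^2 = w_i} \sststile{}{w,x} \Paren{\sum_i w_i a_i}^2 \le \Paren{\sum_i w_i}\Paren{\sum_i w_i a_i^2}$ via the exact identity $\Paren{\sum_i w_i}\Paren{\sum_j w_j a_j^2} - \Paren{\sum_i w_i a_i}^2 = \tfrac12 \sum_{i,j} w_i w_j \Paren{a_i - a_j}^2$, noting that, after using $w_i = w_i^2$ and $w_j=w_j^2$, each summand equals $\Paren{w_i w_j (a_i - a_j)}^2$ and is therefore a square. \textbf{Step 2 (raise to the power $t/2$).} Both sides of Step 1 are sum-of-squares-nonnegative (the right side because $\sum_i w_i a_i^2 = \sum_i (w_i a_i)^2$), and $t$ is even, so from $0 \le P \le Z$ one gets $P^{t/2} \le Z^{t/2}$ using $Z^{t/2} - P^{t/2} = (Z - P)\sum_{k=0}^{t/2 - 1} Z^k P^{t/2 - 1 - k}$, all of whose pieces are sum-of-squares-nonnegative; this yields $\Paren{\sum_i w_i a_i}^t \le \Paren{\sum_i w_i}^{t/2}\Paren{\sum_i w_i a_i^2}^{t/2}$.

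\textbf{Step 3 (power mean for nonnegative values).} Put $b_i := a_i^2$ and $m := t/2$; each $b_i$ is a square, hence sum-of-squares-nonnegative with a trivial proof. Expanding $\Paren{\sum_i w_i b_i}^m = \sum_{i_1,\dots,i_m}\Paren{\prod_j w_{i_j}}\Paren{\prod_j b_{i_j}}$, I would apply \cref{lemma:sos_am_gm} to the nonnegative quantities $b_{i_1},\dots,b_{i_m}$ in each monomial (they need not be distinct) to get $\prod_j b_{i_j} \le \tfrac1m\sum_j b_{i_j}^m$, multiply through by $\prod_j w_{i_j} \ge 0$, and resum: fixing the distinguished index and summing the other $m-1$ indices contributes a factor $\Paren{\sum_i w_i}^{m-1}$, and there are $m$ choices of distinguished index, so $\Paren{\sum_i w_i b_i}^m \le \Paren{\sum_i w_i}^{m-1}\sum_i w_i b_i^m$, i.e. $\Paren{\sum_i w_i a_i^2}^{t/2} \le \Paren{\sum_i w_i}^{t/2 - 1}\sum_i w_i a_i^t$. \textbf{Step 4 (drop the weights).} Since $\sum_i (1 - w_i) a_i^t = \sum_i \Paren{(1-w_i)\, a_i^{t/2}}^2 \ge 0$ and $\Paren{\sum_i w_i}^{t-1} \ge 0$, one has $\Paren{\sum_i w_i}^{t-1}\sum_i w_i a_i^t \le \Paren{\sum_i w_i}^{t-1}\sum_i a_i^t$. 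Chaining Steps 2--4, multiplying by the nonnegative factor $\Paren{\sum_i w_i}^{t/2}$ where needed, and substituting $a_i = p_i(x_i)$ gives exactly $\Paren{\sum_i w_i p_i(x_i)}^t \le \Paren{\sum_i w_i}^{t-1}\sum_i p_i(x_i)^t$.

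The main obstacle is the sign issue flagged above: because $p_i(x_i)$ is an arbitrary degree-$l$ polynomial rather than a nonnegative quantity, one cannot expand the $t$-th power and invoke \cref{lemma:sos_am_gm} in a single step, and the repair is to pass first through $a_i^2$ by Cauchy--Schwarz; it is this detour that forces the intermediate exponent $t/2$ and, in turn, produces the prefactor $\Paren{\sum_i w_i}^{t-1}$ as $\Paren{\sum_i w_i}^{t/2}\cdot\Paren{\sum_i w_i}^{t/2-1}$. The rest is bookkeeping: checking that every quantity one raises to a power or multiplies through by is sum-of-squares-nonnegative (all consequences of $w_i^2 = w_i$), and tracking degrees — the dominant contribution is Step 2, where raising a degree-$(2l+2)$ inequality to the power $t/2$ costs degree $t(l+1)$, so the whole derivation is a sum-of-squares proof of degree $O(tl)$ as claimed.
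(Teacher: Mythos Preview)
The paper does not include a proof of this lemma; it merely cites it as an adaptation of Fact~A.6 in Hopkins--Li, so there is no paper argument to compare against. Your proof is correct. The Cauchy--Schwarz step neutralizes the sign issue you correctly identify (since $p_i(x_i)$ carries no sign constraint, one cannot apply \cref{lemma:sos_am_gm} directly to a multinomial expansion of the $t$-th power); after that, the power-mean inequality for the squares $b_i = p_i(x_i)^2$ via AM-GM and the idempotency bookkeeping for the $w_i$ all go through as you describe. One small remark: your degree count gives $t(l+1)$ rather than the stated $tl$, but this is unavoidable --- the right-hand side $(\sum_i w_i)^{t-1}\sum_i p_i(x_i)^t$ already has degree $tl + t - 1 > tl$, so the paper's stated degree should be read as $O(tl)$, which you achieve.
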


\subsection{How to model universally quantified constraints}

In this section, we would like to show how we can model constraints of the form $\forall u \in \R^d: p_1(x, u) \leq p_2(x,u)$ where $x$ is part our variables and $p_1$ and $p_2$ are polynomials in $x$ and $u$.
Examples of this are the $t$-explicitly boundedness constraint we used in Section~\ref{sec:clustering} and the certifiable subgaussianity constraint in Section~\ref{sec:moment_estimation}.

We can do this by not only requiring that the inequality is indeed true for every $u$, but also that there is a sum-of-squares proof of this fact (in variables $u$).
I.e., that there exists $t \in \N$ such that $\sststile{t}{u} p_1(x,u) \leq p_2(x,u)$.
We can phrase this problem as finding a positive semi-definite matrix $M \in \R^{(d+1)^{t/2} \times (d+1)^{t/2}}$ such that
\begin{align*}
p_2(x,u) - p_1(x,u) = \iprod{ (1,u)^{\otimes t/2}, M (1,u)^{\otimes t/2}} = \Snormt{M^{1/2} (1,u)^{\otimes t/2}}
\end{align*}
where we fix the degree $t$ of the SoS proof and $(1,u)^{\otimes t/2}$ denotes the vector containing all $u^\alpha$ for all multi-indices $\alpha$ of size at most $t/2$.
We can also view the above equation as an equation between polynomials just in $u$ - considering $x$ as "fixed".
Hence, to enforce this equality it is enough to enforce equality between the coefficients of these polynomials and require that $M$ is positive semi-definite.
Although this last requirement neither fits our model of constraints this does not pose a problem since we obtain our pseudo-expectation by solving a positive semi-definite program and can hence add it to this.
Hence, we add at most $d^{\cO(t)}$ constraints and $(d+1)^t = d^{\cO(t)}$ variables.
This in particular means that $m = \mathrm{poly}(n)$ continues to hold for the number of constraints we use in the Clustering and Robust Moment Estimation Problems.

Another aspect to note is the following: Suppose we would like to reason about the case when $u$ is not just a certain parameter but a polynomial function of our variables $x$, i.e., we have $u(x)_i = s_i(x)$ for some polynomial $s_i$.
Denoting by $M \coloneqq \max_{1 \leq i \leq d} \deg s_i $ it is not hard to see that $\sststile{t \cdot M}{x} p_1(x,u(x)) \leq p_2(x,u(x))$.

\subsection{Deferred proofs}

In this section, we show that the first system $\cA$ in Section~\ref{sec:clustering} implies the second $\cB$.
The first three lines of $\cB$ are seen to be true following some elementary calculations.
Thus, we will content us with showing that the last line holds, i.e., that for all variables satisfying $\cA$ there exists a sum-of-squares proof (in variables $u$) of the fact that $\Esymb_{r \sim \brac{n}} \brac{ \iprod{ v_r, u }^t } \leq 2 \cdot (4t)^{t/2} \normt{u}^t$.
Recall that we have $v_r = w_r^* - w_r = w_r^* - (y_r  - \mu_r)$
The proof goes as follows:
\begin{align*}
&\cA \sststile{t}{u} \Esymb_{r \sim \brac{n}} \brac{ \iprod{ v_r, u }^t } = \frac{1}{n} \sum_{r = 1}^n \iprod{w_r^* - (y_r - \mu_r), u}^t \leq 2^{t-1} \frac{1}{n} \sum_{r = 1}^n \iprod{w_r^*, u}^t + 2^{t-1} \frac{1}{n} \sum_{r = 1}^n \iprod{y_r - \mu_r, u}^t \\
&= 2^{t-1} \Esymb_{r \sim \brac{n}} \iprod{w_r^*, u}^t + 2^{t-1} \frac{1}{n} \sum_{r = 1}^n \sum_{j=1}^k z_r^j \iprod{y_r - \mu_r, u}^t \\
&= 2^{t-1} \Esymb_{r \sim \brac{n}} \iprod{w_r^*, u}^t + 2^{t-1} \frac{1}{k} \sum_{j=1}^k \frac{k}{n} \sum_{r = 1}^n z_r^j \iprod{y_r - \mu_r, u}^t \\
&\leq 2^{t-1} 2t^{t/2} \normt{u}^t  + 2^{t-1} \frac{1}{k} \sum_{j=1}^k 2 t^{t/2} \normt{u}^t = 2 \cdot (4t)^{t/2} \normt{u}^t
\end{align*}
where in the first inequality we used a sum-of-squares version of the Triangle Inequality (cf. Lemma~\ref{toolkit:lemma_9}) and for the second one our subgaussianity constraints in $\cA$ and standard properties of gaussian random variables.
We also note, that there are at most $(kd)^{\cO(t)} = \mathrm{poly}(n)$ variables of degree $t$ and all other variables have constant degree.

\section{Proof of linearization theorem}
\label{sec:linearization_proof}

Here, we will give the proof of Theorem~\ref{thm:linearization}.
The constraints we will add are once, a linearized version of the original constraints and twice, a set of constraints ensuring that the new variables behave as expected and in particular that the standard computation rules of exponentiation apply.
On a high level, if our original variables are $x \in \R^n$ and we introduce say the three variables $y_\alpha, y_\beta, y_\gamma$ for multi-indices $\alpha, \beta, \gamma$ with the aim of representing $x^\alpha, x^\beta, x^\gamma$ respectively.
Since a priori there is no relation between these new variables, we will enforce constraints of the form $\set{y_\alpha y_\beta y_\gamma = y_{\alpha + \beta + \gamma}}$ to simulate the behavior expected from the corresponding $x$ variables.

\begin{proof}[Proof of Theorem~\ref{thm:linearization}]
Suppose that $\cA = \{q_1(x) \geq 0, \ldots, q_m(x) \geq 0\}$ and recall that $\cA \sststile{t}{x} \{p(x) \geq 0\}$ means that there exist sum-of-squares polynomials $b_S$ for each $S \sse \brac{m}$ such that $p(x) = \sum_{S \sse \brac{m}} b_S(x) \prod_{i \in S} q_i(x)$ and the degree of each term is at most $t$.
By our assumption there are at most $M$ summands such that $b_S \not\equiv 0$ and $\card{S} > 1$.
By introducing the constraint $q_S \coloneqq \Paren{\prod_{i \in S} q_i(x)} \geq 0$ for all these $S$ we get that we can write the sum-of-squares proof as
\begin{align*}
p(x) = b_0(x) + \sum_{j=1}^{m} b_j(x) q_j(x) + \sum_{\mathclap{\substack{S \sse \brac{m},\ \card{S} > 1 \\ b_S \not\equiv 0}}} b_S(x) q_S(x)
\end{align*}
and the degree of each term is again at most $t$.
For convenience we define $M' = m + M$, $q_0(x) \equiv 1$, and re-index the $q_S$ by the numbers $m+1, \ldots, m+M$.
Let $q_\alpha^{(j)}$ denote the coefficients of $q_j$ for $j = 0,\ldots, M'$ respectively.
Since the $b_j$s are sum-of-squares polynomials we can write $b_j(x) = \sum_{k=1}^{l_j} \Paren{u_{k,j}(x)}^2$ for polynomials $u_{k,j}$.
For notational convenience we set $l \coloneqq \max_{j = 0, \ldots, M'} l_j$ and define $u_{j,k} \equiv 0$ for $k > l_j$.
Also, let $u_\alpha^{(j,k)}$ denote the coefficients of $u_{j,k}$ for $j = 0, \ldots, M'$ and $k = 1,\ldots,l$ respectively.
In the following, we will explicitly write down all coefficients up to degree $t$ even if they are $0$ to make the notation cleaner.
We then observe:
\begin{align*}
&\sum_{\abs{\alpha} \leq t} p_\alpha x^\alpha = p(x) 
= \sum_{j=0}^{M'} 
\Paren{ \sum_{k=1}^l \Paren{ \; \sum_{\mathclap{\abs{\alpha} \leq t}} u_\alpha^{(j,k)} x^\alpha }^2 } 
\Paren{ \; \sum_{\mathclap{\abs{\alpha} \leq t}} q_\alpha^{(j)} x^\alpha } \\
&= \sum_{j=0}^{M'} 
\Paren{ \;\;\; \sum_{\mathclap{\abs{\alpha}, \abs{\beta} \leq t}} x^{\alpha + \beta} \Paren{ \sum_{k=1}^l u_\alpha^{(j,k)} u_\beta^{(j,k)} } } 
\Paren{ \; \sum_{\mathclap{\abs{\alpha} \leq t}} q_\alpha^{(j)} x^\alpha } \\
&= \sum_{j=0}^{M'} \sum_{\vert \delta \vert \leq t} x^\delta \left( \;\;\;\; \sum_{\mathclap{\alpha + \beta + \gamma = \delta}} q_\gamma^{(j)} \Paren{ \sum_{k=1}^l u_\alpha^{(j,k)} u_\beta^{(j,k)} } \right) =  \; \sum_{\mathclap{\abs{\delta} \leq t}}x^\delta \Paren{  \sum_{\alpha + \beta + \gamma = \delta} \sum_{j=0}^{M'} \sum_{k=1}^l u_\alpha^{(j,k)} u_\beta^{(j,k)} q_\gamma^{(j)} }
\end{align*}
and hence by comparing coefficients we get $p_\delta = \sum_{\alpha + \beta + \gamma = \delta} \sum_{j=0}^{M'} \sum_{k=1}^l u_\alpha^{(j,k)} u_\beta^{(j,k)} q_\gamma^{(j)}$.
Note that we also used that there exist no terms of degree larger than $t$.

Next, consider the following multi-set of multi-indices $A = A_1 \cup A_2$ where
\begin{align*}
&A_1 = \Set{\alpha \sse \N^n \suchthat 1 \leq \card{\alpha} \leq t, \text{$\alpha$ has a non-zero coefficient in the above proof}} \\
&A_2 = \Set{\alpha + \beta + \gamma \suchthat \alpha,\beta,\gamma \in A_1}
\end{align*}
where for $A_1$ the non-zero coefficient may occur in either $p$ or one of the $u_{j,k}$ or $q_j$.
We stress that viewing it as a multi-set makes the reasoning easier and we will argue later why the gain in size is negligible for us.
We now define $f: \R^n \to \R^{n'}$ as $x \mapsto (x, (x^\alpha)_{\alpha \in A})$ where $n'  = n + \card{A} \leq n + \card{A_1} + \card{A_1}^3$.
By definition we have $f(x)_i = x_i$ for all $i \in \brac{n}$  and thus $f$ satisfies Point~\ref{linearization_1}.
Also note, that $f$ is clearly injective.
In the following, we will index coordinates $n+1$ to $n'$ of vectors in $\R^{n'}$ by multi-indices in $A$.

Next, we define a "linearized version" of the polynomials occurring in our original sum-of-squares proof, each taking as input a vector in $\R^{n'}$.
\begin{align*}
&p'(y) = \sum_{\mathclap{\abs{\alpha} \leq t}} p_\alpha y_\alpha&
&u_{k,j}'(y) = \sum_{\mathclap{\abs{\alpha} \leq t}} u_\alpha^{(j,k)} y_\alpha&
&b_j'(y) = \sum_{k=1}^l \Paren{u_{k,j}'(y) }^2& 
&q_j'(y) = \sum_{\mathclap{\abs{\alpha} \leq t}} q_\alpha^{(j)} y_\alpha&
\end{align*}
for $j = 0, \ldots, M'$ and $k = 1,\ldots,l$.
In case one multi-index might occur more than once (because we defined $A$ as a multi-set and there might be some overlap with the first $n$ coordinates), we use only one of them chosen arbitrarily.
Clearly, we have for all $x \in \R^n$ that $p'(f(x)) = \sum_{\abs{\alpha} \leq t} p_\alpha x^\alpha = p(x)$, thus satisfying Point~\ref{linearization_3}.
We note that this analogously holds for the other polynomials above as well.
As the reader might have already guessed the idea behind this is that these "linearized polynomials" will constitute the sum-of-squares proof that $p'$ is non-negative.

Next, we will construct the constraints and the sum-of-squares proof.
First, we introduce the following set of constraints $\cA'_1 = \Set{q_1'(y) \geq 0, \ldots, q'_{M'}(y) \geq 0}$.
As we will see later this is not yet quite enough.
However, we will first start by trying to find a sum-of-squares proof that $p'(y) \geq 0$ and it will than be obvious which further constraints we need to add.
Hence, guessing which form this will take, we compute:
\begin{align*}
&\sum_{j=0}^{M'} b'_j(y) q'_j(y) = 
\sum_{j=0}^{M'} 
\Paren{ \;\;\; \sum_{\mathclap{\abs{\alpha}, \abs{\beta} \leq t}} y_\alpha y_\beta \Paren{ \sum_{k=1}^l u_\alpha^{(j,k)} u_\beta^{(j,k)} } } 
\Paren{ \sum_{\abs{\alpha} \leq t} q_\alpha^{(j)} y_\alpha }  \\
&= \sum_{\mathclap{\vert \delta \vert \leq t}} \;\;\;\; \sum_{\mathclap{\alpha + \beta + \gamma = \delta}} y_\alpha y_\beta y_\gamma \sum_{j=0}^{M'} \sum_{k=1}^l u_\alpha^{(j,k)} u_\beta^{(j,k)} q_\gamma^{(j)} \\
&= \sum_{\mathclap{\vert \delta \vert \leq t}} \;\;\;\; \sum_{\mathclap{\alpha + \beta + \gamma = \delta}} \Paren{y_{\alpha + \beta + \gamma} - \Paren{y_{\alpha +\beta + \gamma} - y_\alpha y_\beta y_\gamma  }}\sum_{j=0}^{M'} \sum_{k=1}^l u_\alpha^{(j,k)} u_\beta^{(j,k)} q_\gamma^{(j)} = \sum_{\vert \delta \vert \leq t} p_\delta y_\delta - S = p'(y) - S
\end{align*}
where we denote by $S$ all remaining terms.

Note that all of the terms in $S$ are of the form $u_\alpha^{(j,k)} u_\beta^{(j,k)} q_\gamma^{(j)} \paren{ y_{\alpha + \beta + \gamma} - y_\alpha y_\beta y_\gamma}$.
Hence, we introduce the constraints $\cA'_2 = \Set{y_{\alpha + \beta + \gamma} = y_\alpha y_\beta y_\gamma \suchthat \alpha, \beta, \gamma \in A_1}$ and define $\cA' = \cA'_1 \cup \cA'_2$.
Since we can write $p'(y) = \sum_{j=0}^{M'} b'_j(y) q'_j(y)  + S$ and all of the terms on the right-hand side have degree at most three we conclude that $\cA' \sststile{3}{y} \Set{p'(y) \geq 0}$.
Clearly, we have $m' = \card{\cA'} \leq M' + \card{A_1}^3 = m + M + \card{A_1}^3$.

What is left to do is to verify Point~\ref{linearization_2} and the bound on $n'$ and $m'$.
We will start with the former.
For $x \in \cA$ we clearly have that $f(x)$ satisfies the constraints in $\cA'_1$ since $q'_j(f(x)) = q_j(x) \geq 0$ for all $j \in \brac{m}$.
For the constraints in $\cA'_2$ just note that for $\alpha, \beta, \gamma \in A_1$ we have $\alpha + \beta + \gamma \in A_2 \sse A$ and hence $f(x)_{\alpha + \beta + \gamma} = x^{\alpha + \beta + \gamma} = x^\alpha x^\beta x^\gamma = f(x)_\alpha f(x)_\beta f(x)_\gamma$ as desired.
On the other hand, let $x \in \R^n$ such that $f(x) \in \cA'$.
By exactly the same reasoning as before we have that $q_j(x) = q'_j(f(x)) \geq 0$ for all $j \in \brac{m}$ and hence $x \in \cA$.

The bound on $n'$ and $m'$ is simple: Observe that $\card{A_1} = \sum_{i=1}^t n_i \leq t \cdot n_{\mathrm{max}}$ and $n \leq \sum_{i=1}^t n_i \leq t \cdot n_{\mathrm{max}}$.
And hence using the observations made before we conclude that
\begin{align*}
&n' \leq n + \card{A_1} + \card{A_1}^3 = n + \cO((t \cdot n_{\mathrm{max}})^3) = \cO((t \cdot n_{\mathrm{max}})^3) \\
&m' \leq m + M + \card{A_1}^3 = m + M + \cO((t \cdot n_{\mathrm{max}})^3)
\end{align*}
\end{proof}

\section{Estimating higher order moments}
\label{sec:estimating_higher_moments}

In this section, we will turn to the higher-order moments and the proof of Theorem~\ref{thm:higher_moment_estimation}.
We will first prove the following SoS version of the theorem:
\begin{theorem}
\label{thm:moment_estimation_higher_moments_intermediate}
For $t \in \N$ even and $r \in \N$ such that $r \leq t/2$ we have that
\begin{align*}
\cA \cup \Set{\iprod{u, M_2^* u} = 1} \sststile{t}{x,u} \Set{ \iprod{M_r - M_r^*, u^{\otimes r}}^2 \leq \delta_r^2 }
\end{align*}
where $\delta_r = \cO(C^{r/2} t^{r/2})  \e^{1- r/t}$.
\end{theorem}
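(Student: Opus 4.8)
The proof is the degree-$r$ analogue of the covariance bound of Theorem~\ref{thm:moment_estimation_intermediate} (which is the case $r=2$), so the plan is to follow that argument with the exponent $2$ replaced by $r$. As there, introduce ``variables'' $r_1,\dots,r_n\in\Set{0,1}$ with $\sum_i r_i=(1-\e)n$ and $r_i(y_i-x_i)=0$ (possible since $Y$ is an $\e$-corrupted sample of $X$), and set $z_i\coloneqq r_iw_i$; then $\cA\sststile{2}{}\Esymb_{i\sim\brac{n}}(1-z_i)\le 2\e$, $\cA\sststile{2}{}(1-z_i)^2=1-z_i$ and $\cA\sststile{2}{}z_i(x_i-x_i^*)=0$ (so by Lemma~\ref{toolkit:lemma_7}, $z_i$ annihilates any polynomial multiple of a coordinate of $x_i-x_i^*$). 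Since $M_2^*=\Sigma^*$, the added constraint $\iprod{u,M_2^*u}=1$ together with Lemma~\ref{lemma:covariances} gives $\cA\cup\Set{\iprod{u,M_2^*u}=1}\sststile{\cO(1)}{}\iprod{u,\Sigma u}\le\cO(1)$, while Theorem~\ref{thm:moment_estimation_intermediate} gives $\cA\sststile{\cO(t)}{}\iprod{u,\mu-\mu^*}^2\le\delta_1^2$ with $\delta_1=\cO(C^{1/2}t^{1/2})\,\e^{1-1/t}$ under this normalization; write $e\coloneqq\iprod{u,\mu-\mu^*}$. As in Theorem~\ref{thm:moment_estimation_intermediate} I would run the computation at degree $\cO(t)$ and recover the degree-$t$ claim by the same rescaling at the end.

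First, as in the covariance case, one uses $z_i(x_i-x_i^*)=0$, the telescoping identity $a^r-b^r=(a-b)\sum_{j<r}a^jb^{r-1-j}$, the sum-of-squares Cauchy--Schwarz inequality (Lemma~\ref{toolkit:lemma_11}, applied over $i$) and the subgaussianity constraints to throw the mismatch between the centers $\mu$ and $\mu^*$ into a correction term, yielding a sum-of-squares identity
\[
\iprod{M_r-M_r^*,u^{\otimes r}}=\Esymb_{i\sim\brac{n}}(1-z_i)\Paren{\iprod{u,x_i-\mu}^r-\iprod{u,x_i^*-\mu^*}^r}+R,
\]
where $R=-e\cdot\Esymb_{i\sim\brac{n}}\sum_{j<r}z_i\iprod{u,x_i-\mu}^j\iprod{u,x_i^*-\mu^*}^{r-1-j}$ satisfies, by Cauchy--Schwarz over $i$ and subgaussianity of both $x_i$ (around $\mu$) and $x_i^*$ (around $\mu^*$), $R^2\le e^2\cdot\poly(t)\cdot\cO(Ct)^{r-1}\le\delta_1^2\cdot\poly(t)\cdot\cO(Ct)^{r-1}=\poly(t)\cdot\cO(Ct)^{r}\e^{2-2/t}\le\poly(t)\cdot\delta_r^2$. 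Hence it suffices to bound $E\coloneqq\Esymb_{i\sim\brac{n}}(1-z_i)\Paren{\iprod{u,x_i-\mu}^r-\iprod{u,x_i^*-\mu^*}^r}$.

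Squaring $E$, applying Cauchy--Schwarz over $i$ with $\Esymb_i(1-z_i)\le 2\e$ and $(1-z_i)^2=1-z_i$, and then $(a^r-b^r)^2\le 2a^{2r}+2b^{2r}$, gives $E^2\le 4\e\Paren{\Esymb_{i\sim\brac{n}}(1-z_i)\iprod{u,x_i-\mu}^{2r}+\Esymb_{i\sim\brac{n}}(1-z_i)\iprod{u,x_i^*-\mu^*}^{2r}}$. To each expectation I would apply the sum-of-squares AM--GM lemma (Lemma~\ref{toolkit:lemma_4}) with $A=1-z_i$ (so $A^2=A$, $A\ge 0$), $B=\iprod{u,x_i-\mu}^{2r}\ge 0$, and exponent parameter $s\coloneqq\lfloor t/(2r)\rfloor\ge 1$ — this is the one place where $r\le t/2$ is used — obtaining $(1-z_i)\iprod{u,x_i-\mu}^{2r}\le\tfrac1s\Paren{(s-1)\gamma(1-z_i)+\gamma^{-(s-1)}\iprod{u,x_i-\mu}^{2rs}}$. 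Taking $\Esymb_i$, using $\Esymb_i(1-z_i)\le 2\e$, and then the subgaussianity constraint of $\cA$ (legitimate because $2rs\le t$) together with $\iprod{u,\Sigma u}\le\cO(1)$ gives $\Esymb_i(1-z_i)\iprod{u,x_i-\mu}^{2r}\le\tfrac1s\Paren{2(s-1)\gamma\e+\gamma^{-(s-1)}\cO(Ct)^{rs}}$, and the clean-sample term is bounded identically using certifiable $C$-subgaussianity of $D$ (whose mean is $\mu^*$). Choosing $\gamma=\cO(Ct)^{r}\e^{-2r/t}$ balances the two summands at $\cO(Ct)^{r}\e^{1-2r/t}$ each, whence $E^2\le\cO(Ct)^{r}\e^{2-2r/t}$ and therefore $\iprod{M_r-M_r^*,u^{\otimes r}}^2\le 2E^2+2R^2\le\cO(Ct)^{r}\e^{2-2r/t}=\delta_r^2$ with $\delta_r=\cO(C^{r/2}t^{r/2})\e^{1-r/t}$.

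The substance of the proof is the constant bookkeeping in the last two paragraphs, and in particular the exponent of $\e$. It comes out as $2-2r/t$ exactly only when $t/(2r)\in\N$; otherwise $s=\lfloor t/(2r)\rfloor$ gives a slightly weaker exponent, which one either absorbs into the $\cO(\cdot)$ (the bound being of interest only for $\e$ small) or removes by replacing the single use of Lemma~\ref{toolkit:lemma_4} by a short sum-of-squares interpolation between the powers $2r$ and $t$ — a low-degree stand-in for a non-integer Hölder step, so that $n_{\max}$ stays small, in the spirit of the discussion preceding the system $\cA$. One must also keep track of the $\mu$-versus-$\mu^*$ correction term $R$ and of the polynomial-in-$t$ factors produced by the intermediate Cauchy--Schwarz and AM--GM steps; all of these are swept into the $\cO(\cdot)$ of $\delta_r$ exactly as in the proof of Theorem~\ref{thm:moment_estimation_intermediate}. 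Apart from this, every step above is a verbatim degree-$r$ analogue of the covariance computation.
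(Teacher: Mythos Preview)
Your approach is the paper's: square, Cauchy--Schwarz over $i$ to extract the $\e$ factor, Lemma~\ref{toolkit:lemma_4} to lift from degree $2r$ to the top moment, then certifiable subgaussianity together with the normalization $\iprod{u,M_2^*u}=1$ and Lemma~\ref{lemma:covariances}. You are in fact more careful than the paper on two points it glosses over: the paper applies Lemma~\ref{toolkit:lemma_4} with exponent $t/r$ without checking divisibility (you take $s=\lfloor t/(2r)\rfloor$ and discuss the resulting slack in the $\e$-exponent), and it asserts the identity $\iprod{M_r-M_r^*,u^{\otimes r}}=\Esymb_{i}(1-z_i)\bigl[\iprod{u,x_i-\mu}^r-\iprod{u,x_i^*-\mu^*}^r\bigr]$ without the $\mu$-versus-$\mu^*$ recentering term you isolate as $R$. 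One minor slip in your write-up: $M_2^*$ is the raw second moment in the paper, not $\Sigma^*$; the paper uses $\iprod{u,\Sigma^*u}\le\iprod{u,M_2^*u}=1$, which only helps your argument.
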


The proof of this is very similar to the proof of Theorem~\ref{thm:moment_estimation_intermediate} and does not use any new ideas.
One key difference however is that this time we require $u$ to be part of our variables.
This stems from the fact, that we would also like to have a certificate - in form of a SoS proof - of our approximation guarantee, since the applications rely on it.
Again, we refer the interested reader to the original paper for more details.
\begin{proof}
Again, we will do the proof for degree $2t$ SoS proofs because it is slightly more readable.
Using the same techniques as in the previous section and skipping of some analogous steps we get:
\begin{align*}
&\cA \sststile{2t}{x,u} \Paren{\iprod{M_r - M_r^*, u^{\otimes r}}}^2 = \Paren{ \Esymb_{i \sim \brac{n}} (1-z_i) \Brac{\iprod{u, x_i-\mu}^r - \iprod{u, x_i^* - \mu^*}^r } }^2 \\
&\leq \cO(\e) \cdot \Paren{ \Esymb_{i \sim \brac{n}} (1-z_i) \iprod{u, x_i - \mu}^{2r}  + \Esymb_{i \sim \brac{n}} (1-z_i) \iprod{u, x_i^* - \mu^*}^{2r} } \\
&\leq \cO(\e)  \frac{r}{t}  \Paren{ 4(t/r - 1)\gamma \cdot \e + \gamma^{-(t/r - 1)} \Esymb_{i \sim \brac{n}} \iprod{u, x_i - \mu}^{2t} + \gamma^{-(t/r - 1)} \Esymb_{i \sim \brac{n}} \iprod{u, x_i^* - \mu^*}^{2t} } \\
&\leq \cO(\e) \cdot \frac{r}{t} \cdot \Paren{  4(t/r - 1)\gamma \cdot \e +  \gamma^{-(t/r - 1)} \cO((C t)^t) \iprod{u, \Sigma^* u}^t  } \leq \delta_r^2
\end{align*}
where we picked $\gamma =  \cO((Ct)^r) \e^{-\frac{r}{t}}$ and used that $\Set{\iprod{u, M_2^* u} = 1} \sststile{2}{u} \iprod{u, \Sigma^* u} \leq \iprod{u, M_2^* u} = 1$.
\end{proof}

The proof of Theorem~\ref{thm:higher_moment_estimation} now goes as follows.
\begin{proof}[Proof of Theorem~\ref{thm:higher_moment_estimation}]
Note, that this time we have two sets of variables, $x$ and $u$, and we will linearize only in $x$ - e.g., $x^\alpha u^\beta$ will become $x_\alpha u^\beta$.
To this end we define $\cB = \Set{r_1 = 0}$, where $r_1 = \iprod{u, M_2^* u} - 1$ and let $p(x,u) = \delta_r^2 - \iprod{M_r - M_r^*, u^{\otimes r}}^2$ and $p'(x,u)$ be its linearized version, again in $x$.
Let $\cA'$, and $f$ be the objects given by Theorem~\ref{thm:linearization} and let $\zeta$ be a degree-3 pseudo-distribution approximately satisfying $\cA'$, this can be found in time $n^{\cO(1)}$.
We then have that $\cA' \cup \cB \sststile{}{} p'(f(x),u) \geq 0$ where the degree in $f(x)$ is at most 3.
We define $M_r' = (M_r')_{\card{\alpha} = r}$ as $(M_r')_\alpha = \frac{1}{n} \sum_{i=1}^n (x_i)_\alpha$ and set $\hat{M}_r = \pE_\zeta M_r'$, where $(x_i)_\alpha$ are our linearized variables.

First, we note the following:
\begin{align*}
&\sststile{t}{u} \delta_r^2 - \iprod{\hat{M_r} - M_r^*, u^{\otimes r}}^2 \geq \pE_\zeta \delta_r^2 - \iprod{M_r' - M_r^*, u^{\otimes r}}^2 = \pE_\zeta p'(f(x),u)
\end{align*}
Where we used, that Cauchy-Schwarz for pseudo-expectations continues to hold in this setting where we have polynomials depending on $u$ and that this has a sum-of-squares proof in $u$.

Using the same notation as in the proof of Theorem~\ref{thm:linearization} by $\cA' \cup \cB \sststile{}{} p'(f(x),u) \geq 0$ we also get:
\begin{align*}
&\pE_\zeta p'(f(x),u) = \sum_{j=0}^m \pE_\zeta b_j'(f(x),u) q_j'(f(x)) + \pE_\zeta c(f(x),u) r_1(u) \\
&= \Paren{u^{\otimes t/2}}^\top L^\top L u^{\otimes t/2} + c(u) r_1(u) - \eta \normt{u}^{t/2}
\end{align*}
where $L$ is some matrix independent of $u$, $c(u)$ is some polynomial in $u$ and $\eta = 2^{-n^{\Omega(1)}}$.
Although the last step might seem very cryptic at first, its basically redoing the proof of Fact~\ref{lemma:pEs_are_sos} and hence we invite the reader to study this instead of reproducing it here.

Further, by our assumption that $n \geq t/2 \log \Paren{ \frac{1}{\lambda_{min}} }$ for $\lambda_{min} = \lambda_{min} (M_2^*)$ we have that
\begin{align*}
\Set{\iprod{u, M_2^* u} = 1} \sststile{t}{u} \normt{u}^{t/2} \leq \Iprod{u, \frac{1}{\lambda_{min}} M_2^* u}^{t/2} = \Paren{ \frac{1}{\lambda_{min}} }^{t/2} \leq 2^n
\end{align*}
and hence by combining all of the above statements above we get $\cB \sststile{t}{u} \delta_r^2 - \iprod{\hat{M_r} - M_r^*, u^{\otimes r}}^2 \geq 2^{-n^{\Omega(1)}}$.
Absorbing the error term in the $\cO$-notation of $\delta_r$ we have proven that $\cB \sststile{t}{u} \iprod{\hat{M}_r - M_r^*, u^{\otimes r}}^2 \leq \delta_r^2$.
\end{proof}

\end{document}